\documentclass[11pt]{article}

\newif\ifanonymous
\anonymousfalse

\newif\ifreview

\reviewfalse
\usepackage[utf8]{inputenc}
\usepackage[T1]{fontenc}
\usepackage{lmodern}

\usepackage{amsmath, amssymb, amsthm, amsfonts}
\usepackage{mathtools}
\usepackage{breqn}

\usepackage[margin=1in]{geometry}
\usepackage{setspace}
\usepackage{lineno}
\ifreview
  \linenumbers                       
\else
\fi
\usepackage{enumitem}
\usepackage{appendix}
\usepackage{authblk}
\usepackage{graphicx}
\usepackage{booktabs}                
\usepackage{caption}
\usepackage{subcaption}
\usepackage{float}

\usepackage{natbib}
\bibpunct{(}{)}{;}{a}{,}{,}

\usepackage[colorlinks=true,allcolors=blue]{hyperref}
\usepackage{cleveref}

\newtheorem{theorem}{Theorem}[section]
\newtheorem{lemma}[theorem]{Lemma}
\newtheorem{proposition}[theorem]{Proposition}

\theoremstyle{definition}
\newtheorem{definition}[theorem]{Definition}

\theoremstyle{remark}
\newtheorem{remark}[theorem]{Remark}

\begin{document}

\title{\textbf{On regularity of finite-maturity American put options in the Heston model}}

\ifanonymous
  \author{}
\else
  \author[1]{Khai Nguyen\thanks{Email: \texttt{hungquangkhai.nguyen@postgrad.manchester.ac.uk}}}
  \author[1]{Huy Chau\thanks{Email: \texttt{huy.chau@manchester.ac.uk}}}
  \affil[1]{Department of Mathematics, The University of Manchester, Manchester, United Kingdom}
\fi

\date{\today}
\maketitle

\begin{abstract}
This paper studies the regularity of finite-maturity American value functions in the Heston model. Although the Heston operator is degenerate when the volatility is zero, we are able to establish \( C^{1,2} \) regularity of the American value functions in the exercise domain and the smooth-fit principle, using PDE techniques.
\end{abstract}

\noindent\textbf{Keywords:} American options; Heston model; regularity; smooth-fit principle; viscosity solution; free boundary problem; penalty method; degenerate parabolic PDEs.
\\[0.3em]
\noindent\textbf{MSC 2020 classification:} 91G20 (primary); 35K65, 49L25, 60H30, 35R35 (secondary).
\\[0.3em]
\noindent\textbf{JEL classification:} C61; G12; G13.

\vspace{1em}


\section{Introduction}\label{sec:introduction}

American options give holders the right to exercise at any time up to expiration, a flexibility that distinguishes them from their European counterparts and introduces fundamental challenges in pricing, hedging, and regularity analysis. The early-exercise feature converts the pricing problem into an optimal stopping problem: the American option value equals the supremum, over all admissible stopping times, of the expected discounted payoff under the risk-neutral measure. The mathematical foundations of this theory were laid in a series of classical contributions. \citet{McKean1965} was the first to reformulate American option pricing as a free-boundary problem, converting the optimal stopping formulation into a partial differential equation complemented by a moving boundary that separates the continuation region from the exercise region. \citet{Merton1973} developed the no-arbitrage framework for rational option pricing and demonstrated that American calls on non-dividend-paying stocks coincide with their European counterparts, while American puts generally carry an early-exercise premium. \citet{BensoussanLions1982} introduced the variational-inequality approach to
stochastic control and optimal stopping, providing a unified analytic framework
that would become central to the rigorous study of American option pricing. \citet{peskir2006} gave a comprehensive probabilistic treatment of optimal stopping and free-boundary problems, rigorously establishing the smooth-fit principle and related regularity results for a broad class of diffusions.

The optimal exercise boundary (or free boundary) is the central object of the theory: it is the critical surface in the state space that divides the continuation region---where it is optimal to hold the option---from the exercise region---where immediate exercise is optimal. Knowledge of this boundary completely determines the early-exercise strategy and the option price. Crucially, to derive an analytic characterisation of the boundary---typically an integral equation obtained by applying It\^o's formula to the value function---one needs both the smooth-fit principle (continuity of the value function and its first-order spatial derivative across the boundary) and $C^{1,2}$ regularity of the value function in the continuation region. Without these properties, It\^o's formula cannot be applied up to the free boundary, and the integral equation that identifies the exercise boundary cannot be obtained. The smooth-fit condition is therefore not merely a technical regularity statement; it is indispensable for the derivation of early-exercise premium representations \citet{Carr1992, KimIE1990, jacka1991}.

In the classical Black--Scholes model \citet{Black1973}, all the desired regularity properties are known to hold. The American put value function is continuous on $[0,T]\times(0,\infty)$, belongs to $C^{1,2}$ inside the continuation region where it solves the Black--Scholes PDE, and satisfies the smooth-fit principle at the free boundary. Moreover, the exercise boundary itself is continuous, monotone and unique. These results were established and refined in the subsequent literature \citet{jacka1991, lamberton2013, chen2012, Friedman1975, peskir2006}. However, the constant-volatility assumption of the Black--Scholes model is inconsistent with empirical features of option markets such as the volatility smile and skew \citet{Hull1987}, motivating the development of stochastic volatility models; among these, the Heston model \citet{Heston1993}, in which the variance follows a Cox--Ingersoll--Ross process \citet{CIR1985}, has become one of the most widely used owing to its analytical tractability for European options. While the pricing and calibration of European options under Heston dynamics are by now well understood, the regularity of American option prices in this framework remains  less developed, since the integral equation representations and numerical schemes used to compute American option prices and exercise boundaries cannot be rigorously justified without such regularity results.

The regularity analysis of finite-maturity American put options under the Heston
stochastic-volatility dynamics is considerably more challenging than in the
Black--Scholes framework. The pricing problem is translated into a free-boundary
(obstacle) problem for a degenerate parabolic operator on the unbounded state space
$(0,T]\times\mathbb{R}\times\mathbb{R}^+$ (time, log-price, variance). The Heston
generator degenerates along the boundary $\{y = 0\}$, where the variance vanishes;
this lack of uniform ellipticity rules out the direct application of classical
parabolic PDE theory \citet{Ladyzenskaja1968, Lieberman1996, Friedman1964}. When the operator is uniformly parabolic, the regularity theory for American option
variational inequalities is well developed. The variational-inequality approach to American option pricing, yielding
existence, uniqueness, and basic regularity of the value function, was
established by \citet{Jaillet1990}, building on the framework of
\citet{BensoussanLions1982}. For the regularity theory of the obstacle problem under uniformly elliptic/ parabolic operators, we refer to \citet{Friedman1982,KinderlehrerStampacchia1980, Caffarelli1998}. Building on these foundations, \citet{Pascucci2011} gave a comprehensive treatment of the penalty method approach
in Chapter~8, obtaining $W^{2,1}_{p,\mathrm{loc}}$ regularity and smooth-fit for
obstacle problems governed by uniformly parabolic operators. In a
different direction, \citet{huyenpham} and \citet{haoxing2012regularity} studied the
regularity of optimal stopping problems for jump diffusions using viscosity solution
techniques; in particular, \citet{haoxing2012regularity} established both the
smooth-fit principle and $C^{1,2}$ regularity in the continuation region under the
assumption that the operator is uniformly elliptic throughout the domain. All of
these frameworks, however, require uniform parabolicity or ellipticity on the full
state space, an assumption that fails for the Heston generator due to its
degeneracy at $\{y=0\}$. Consequently, their results do not directly apply to the
present setting.

Significant progress has been made for degenerate operators sharing the same structural degeneracy as the Heston generator, particularly in the \emph{elliptic} setting, which corresponds to the pricing of perpetual (infinite-maturity) American options. \citet{Daskalopoulos2013} proved existence, uniqueness, and global regularity of solutions for degenerate elliptic obstacle problems with direct applications to mathematical finance. Building on this, \citet{Daskalopoulos2012} obtained $C^{1,1}$ regularity for the same class of obstacle problems, which represents the optimal regularity for the value function of the perpetual American put under Heston dynamics. \citet{feehan2015degenerateelliptic} established higher-order weighted Sobolev and H\"older regularity for solutions of variational equations governed by the \emph{elliptic} Heston operator; moreover, they showed that when the data are $C^\infty$-smooth, the associated value functions inherit $C^\infty$ regularity all the way up to the portion of the boundary where the operator degenerates. These results provide a thorough understanding of the regularity landscape in the perpetual (time-independent) case. However, they do not directly extend to the finite-maturity (parabolic) setting, because the additional time variable introduces new analytical difficulties, including the interaction between temporal regularity and the spatial degeneracy of the operator.

For the \emph{parabolic} (finite-maturity) problem, the available results are more limited. In a related direction, \citet{Feehan2014} proved existence and uniqueness of stochastic representations for solutions to both elliptic and parabolic boundary value and obstacle problems driven by degenerate operators of Heston type. Their framework encompasses the parabolic case; however, it remains unclear whether the assumptions they impose on the solution of the parabolic obstacle problem are satisfied by the American option value function with the standard put payoff. \citet{Lamberton2019} employed semigroup techniques to study the finite-maturity American option price in Heston-type models. Working within carefully chosen weighted Sobolev spaces, they established continuity of the price function together with weak regularity results (specifically, membership in weighted $L^2$-based Sobolev spaces). Their work represents an important step, but the regularity they obtained falls well short of the classical $C^{1,2}$ smoothness needed for the smooth-fit principle and the application of It\^o's formula at the free boundary. In a complementary direction, \citet{LambertonTerenzi2019properties} studied the American put
price in the Heston model using probabilistic techniques and proved a \emph{weak} form of the
smooth-fit property. Their result shows that, for each fixed time and volatility level, the
derivative of the value function with respect to the asset price, computed as a one-sided limit
from the continuation region toward the free boundary, coincides with the corresponding derivative
of the payoff function; an analogous statement holds for the derivative with respect to the
volatility variable under the Feller condition. However, this weak smooth-fit is established only
\emph{pointwise} at the boundary along a single direction of approach, namely by varying the
asset price while holding time and volatility fixed, and does not imply that the spatial derivatives are continuous up to and across the free boundary. Without this global continuity of the spatial derivatives, one cannot apply It\^o's formula across the free boundary to derive the integral equation characterising the exercise boundary. Moreover, their work did not establish classical regularity of the value function in the continuation region.

The $C^{1,2}$ regularity of the American put value function and the smooth-fit principle under the Heston model on the full unbounded domain $\mathbb{R}\times\mathbb{R}^+$ have not been established for the finite-maturity case. The existing results either address only the elliptic (perpetual) problem \citet{Daskalopoulos2013, Daskalopoulos2012, feehan2015degenerateelliptic}, impose assumptions that are not verified for the standard American put \citet{Feehan2014}, or yield only weak regularity insufficient for the smooth-fit principle \citet{Lamberton2019}. Without $C^{1,2}$ regularity and smooth-fit, one cannot rigorously derive the early-exercise premium representation, justify the convergence of standard numerical schemes, or characterise the exercise boundary via an integral equation.

\textbf{Our contribution.} This paper addresses the regularity of finite-maturity American put option prices in the Heston model over the full unbounded domain $(0,T]\times\mathbb{R}\times\mathbb{R}^+$. Our main results can be summarised as follows.
\begin{enumerate}[label=(\roman*)]
  \item \emph{Smooth-fit principle and $C^{1,2}$ regularity.} We prove that the American put
  value function $P^A$ satisfies the smooth-fit principle---that is, its first-order spatial
  derivatives are continuous across the free boundary---and that $P^A$ belongs to $C^{1,2}$ in the
  exercise domain. This extends the elliptic regularity results of
  \citet{Daskalopoulos2013, Daskalopoulos2012} and \citet{feehan2015degenerateelliptic} to the
  parabolic (finite-maturity) setting. Our result also strengthens
  the weak smooth-fit property of \citet{LambertonTerenzi2019properties} to the \emph{strong} smooth-fit principle, in which the first-order spatial derivatives
  are continuous functions on the entire state space. This global continuity of the spatial
  derivatives across the free boundary is precisely the
  condition needed to apply It\^o's formula across the free boundary and derive the integral
  equation for the exercise boundary.

  \item \emph{$W^{2,1}_{p,\mathrm{loc}}$ regularity.} We show that the solution to the
  associated variational inequality belongs to the local Sobolev space
  $W^{2,1}_{p,\mathrm{loc}}(E_T)$ for every $p \geq 3$. This result is new for the
  finite-maturity American option under the Heston model and is strictly stronger than
  the regularity obtained in prior work. Specifically, \citet{Lamberton2019} showed that the American put price function belongs to weighted $L^2$-based Sobolev spaces using
  semigroup techniques, but their result does not yield $W^{2,1}_p$ regularity for
  $p \geq 3$, which is the range needed for the Sobolev embedding theorem to produce
  continuous first-order derivatives. In the elliptic (perpetual) setting, \citet{feehan2015degenerateelliptic} obtained higher-order weighted Sobolev and
  H\"older regularity for variational equations governed by the elliptic Heston operator,
  but their results do not extend to the parabolic case. For uniformly
  parabolic operators, $W^{2,1}_{p,\mathrm{loc}}$ regularity of obstacle problems is
  classical and follows from the general theory of \citet{Ladyzenskaja1968} and
  \citet{Friedman1964}; it is the standard tool used in \citet{haoxing2012regularity} and
  \citet{Pascucci2011} to deduce smooth fit via Sobolev embedding. However, these
  classical results require uniform parabolicity, which fails for the Heston operator.
  Our contribution is to establish the same $W^{2,1}_{p,\mathrm{loc}}$ regularity in
  the degenerate Heston setting, thereby providing the analytical foundation from which
  the smooth-fit principle is deduced via the Sobolev embedding theorem, exactly as in
  the uniformly parabolic case.

  \item \emph{Uniqueness of the viscosity solution.} We establish that the American put
  value function $P^A$ is the unique viscosity solution of the associated variational
  inequality. This result is not immediate in the Heston setting because the coefficients
  of the Heston stochastic differential equation do not satisfy the global Lipschitz
  condition, which prevents the direct application of the classical comparison principle
  for viscosity solutions of optimal stopping problems (cf.\ \citet[Theorem~4.1]{huyenpham}).
  To overcome this, we adapt the comparison argument developed in \citet{ishii2021existence}
  for integro-differential equations arising in option pricing. Their framework, however,
  is formulated for models in which the volatility process is driven by a pure drift
  component together with a L\'evy process, which does not completely coincide with the Heston
  dynamics where the volatility is driven by a diffusion with a square-root coefficient.
  We therefore modify their argument to accommodate the specific degeneracy and diffusion
  structure of the Heston model. The details of this adaptation are given in
  Proposition~\ref{lem:viscosity_solution}.

  \item \emph{Proof technique.} The Heston operator degenerates along the boundary
  $\{y=0\}$ of the state space $\mathbb{R}\times\mathbb{R}^+$, which places the problem
  outside the scope of several well-established approaches to American option regularity.
  We briefly compare our strategy with three representative frameworks in order to
  highlight the specific difficulties introduced by the degeneracy and the techniques we
  develop to overcome them.

  \begin{itemize}
    \item
    In the classical Black--Scholes setting, \citet{peskir2006} establish the smooth-fit
    principle by a direct argument based on optimal stopping times and the continuity
    definition of the derivative at the free boundary. A key prerequisite of their approach
    is that $C^{1,2}$ regularity of the value function inside the continuation region is
    already available, which in the Black--Scholes case follows immediately from standard
    interior regularity for uniformly parabolic PDEs. In the Heston model, interior
    regularity does not follow from classical theory due to the degeneracy of the operator,
    and consequently the Peskir--Shiryaev argument cannot be applied to obtain the strong
    smooth-fit principle. Indeed, \citet{LambertonTerenzi2019properties} adapted the same
    probabilistic technique to the Heston model and obtained only a \emph{weak} smooth fit
    (pointwise matching of one-sided derivatives), precisely because $C^{1,2}$ regularity
    in the continuation region was unavailable.

    \item
    \citet{haoxing2012regularity} study the regularity of optimal stopping problems for jump
    diffusions by formulating a penalised PDE directly on the full state space. Their
    approach requires the operator to be uniformly elliptic throughout the domain, which
    is used to invoke the classical theory of \citet{Ladyzenskaja1968} and to establish the
    existence of a fixed point in their Lemma~4.2. Since the Heston operator lacks uniform
    ellipticity, this method cannot be applied directly to our setting. Despite these differences, our
    work draws on several elements of their framework: we adopt a similar construction of
    the mollified functions, and the technique used to derive uniform bounds on the penalised
    solution in Proposition~\ref{lem:bound_epsilon} follows a comparable strategy.

    \item
    The approach in Chapter~8 of \citet{Pascucci2011} is closest in spirit to ours: the
    obstacle problem is first solved on bounded subdomains via a penalty approximation,
    and regularity is obtained from interior Sobolev estimates before passing to the limit
    on expanding domains. However, the framework of \citet{Pascucci2011} assumes that the
    operator is uniformly parabolic on the entire domain on which the penalised PDE is
    posed, so that the classical interior $W^{2,1}_p$ estimates apply directly and the
    Sobolev bounds transfer immediately to the limit---this is the mechanism underlying the
    proofs of Theorems~8.27 and~8.30 in \citet{Pascucci2011}. In our setting, the
    degeneracy of the Heston operator prevents the direct application of these estimates on
    any domain that touches $\{y = 0\}$. We address this difficulty in two ways. First, we
    develop a covering argument that yields interior $W^{2,1}_p$ estimates on subdomains
    bounded away from the degeneracy locus, with constants that are uniform in the choice
    of subdomain (Proposition~\ref{lem:cauchy_problem_solution}
    and the proof of Proposition~\ref{lem:subsequence_existence}). Second, in the passage from bounded
    subdomains to the full unbounded domain, we adopt a simpler choice of boundary
    function than the one employed in \citet{Pascucci2011}, which streamlines the
    convergence argument (see the proof of
    Proposition~\ref{lem:subsequence_existence}).
\end{itemize}

  Our approach overcomes these obstructions through a three-stage strategy tailored to the
  degenerate structure of the Heston operator. First, we establish that the value function
  $P^A$ is continuous and is the unique viscosity solution of the associated variational
  inequality, with uniqueness following from a comparison principle adapted from
  \citet{ishii2021existence}. Second, we formulate the penalised PDE on bounded subdomains
  $D^n_T$ that are \emph{bounded away from the degeneracy locus $\{y=0\}$}. On each
  such subdomain, the Heston operator is uniformly elliptic, and the classical theory of
  \citet{Ladyzenskaja1968} provides existence, uniqueness, and H\"older regularity of the
  penalised solution (Proposition~\ref{lem:penalty_var}). The key results lie in the third
  stage: we derive pointwise bounds (Proposition~\ref{lem:bounds_v_epsilon}) and interior $W^{2,1}_p$ estimates
  (Proposition~\ref{lem:cauchy_problem_solution} and~\ref{lem:subsequence_existence}) that are \emph{uniform in both the penalty parameter $\epsilon$ and the
  subdomain $D^n$}. In our setting, it is precisely this property that allows us to pass to the double limit $\epsilon\to 0$, $n\to\infty$ and construct a function $v^*\in W^{2,1}_{p,\mathrm{loc}}(E_T)$ that is also a viscosity solution of the variational inequality (Proposition~\ref{lem:subsequence_existence}). Finally, the uniqueness result of viscosity solution and the Sobolev embedding theorem, yields the strong smooth-fit principle for $P^A$
  (Proposition~\ref{lem:smoothfits}).
\end{enumerate}

\textbf{Organisation of the paper.} Section~\ref{sec:model} introduces the Heston model and formulates the American put pricing problem as a variational inequality. Section~\ref{sec:notations_background} collects preliminary results, including the continuity of the value function. Section~\ref{sec:viscosity_solution} establishes that the American put price is the unique viscosity solution of the variational inequality. Section~\ref{sec:results} contains the main regularity results: using penalised approximations, Sobolev-space estimates, and viscosity-solution arguments, we prove the smooth-fit principle and $C^{1,2}$ regularity.

\section{The model}\label{sec:model}
Under a risk-neutral pricing measure, the dynamics of the asset price \( S_t \) and its variance \( Y_t \) in the Heston model are governed by the following system of stochastic differential equations (SDEs):
\begin{equation}
\begin{cases}
\frac{dS_t}{S_t} = (r - \delta) \, dt + \sqrt{Y_t} \, dB_t, & S_0 = s > 0, \\
dY_t = \kappa (\theta - Y_t) \, dt + \sigma \sqrt{Y_t} \, dW_t, & Y_0 = y > 0,
\end{cases}
\end{equation}
where \( r \) is the risk-free interest rate, \( \delta \) is the continuous dividend yield, \( \kappa > 0 \) is the mean reversion rate, \( \theta > 0 \) is the long-run mean of the variance, \( \sigma > 0 \) is the volatility of volatility, \( B_t \) and \( W_t \) are two standard Brownian motions with correlation \( d\langle B, W \rangle_t = \rho \, dt \), \( \rho \in (-1, 1) \).

Throughout this paper, we assume the Feller condition \( 2\kappa\theta \geq \sigma^2 \), ensuring the Cox-Ingersoll-Ross process \( Y_t \) remains strictly positive at all times. We work with the transformed process \( X_t = \log S_t \), adjusting the drift to simplify the Heston operator. The joint dynamics of \( X_t \) and \( Y_t \) are:
\begin{equation}
\label{eq:heston_stochastic_equation}
\begin{cases}
dX_t = \left( r-\delta - \frac{Y_t}{2} \right) \, dt + \sqrt{Y_t} \, dB_t, & X_0 = x, \\
dY_t = \kappa (\theta - Y_t) \, dt + \sigma \sqrt{Y_t} \, dW_t, & Y_0 = y > 0.
\end{cases}
\end{equation}
Under this transformation, an American put option with strike price \( K \) and maturity \( T \) gives its holder the payoff
\begin{equation}
G(x) = \left( K - e^{x} \right)^+,
\end{equation}
where \( (z)^+ = \max(z, 0) \). The value of the American put option at time \( t \), given the state \( (X_t, Y_t) = (x, y) \), is defined as
\begin{equation}
\label{def: PA_american_option}
P^A(t, x, y) = \sup_{\tau \in \mathcal{T}_{t,T}} \mathbb{E} \left[ e^{-r (\tau - t)} G( X_\tau^{t,x,y}) \right],
\end{equation}
where \( \mathcal{T}_{t,T} \) denotes the set of stopping times \( \tau \) taking values in \( [t, T] \), and \( (X_s^{t,x,y}, Y_s^{t,x,y})_{t \leq s \leq T} \) is the solution of \eqref{eq:heston_stochastic_equation} with the initial conditions \( X^{t,x,y}_t = x \) and \( Y^{t,x,y}_t = y \).

In the setting of optimal stopping, we define the continuation region $\mathcal{C}$ and stopping region $\mathcal{D}$ as follows:
\[
\mathcal{C} := \left\{ (t,x,y) \in [0,T] \times \mathbb{R} \times \mathbb{R}^+ : P^{A}(t,x,y) > G(x) \right\}, \quad
\mathcal{D} := \left\{ (t,x,y) \in [0,T] \times \mathbb{R} \times \mathbb{R}^+ : P^{A}(t,x,y) = G(x) \right\}.
\]
The pricing problem involves determining \( P^A(t, x, y) \) and the optimal exercise boundary, where the option holder chooses to exercise early. Assuming sufficient smoothness, \( P^A(t, x, y) \) satisfies the obstacle problem
\begin{equation}
\label{eq:heston_pde}
\begin{cases}
\min \left\{ \left( -\partial_t - \mathcal{L} + r \right) u, u - G \right\} = 0, & (t, x, y) \in [0, T) \times \mathbb{R} \times \mathbb{R}^+, \\
u(T, x, y) = G(x), & (x, y) \in \mathbb{R} \times \mathbb{R}^+,
\end{cases}
\end{equation}
where \( \mathcal{L} \) is the Heston operator, defined as
\begin{equation}
\label{eq:heston_operator}
\mathcal{L} = \frac{y}{2} \left( \frac{\partial^2}{\partial x^2} + 2 \rho \sigma \frac{\partial^2}{\partial x \partial y} + \sigma^2 \frac{\partial^2}{\partial y^2} \right) + \left( r -\delta - \frac{y}{2} \right) \frac{\partial}{\partial x} + \kappa (\theta - y) \frac{\partial}{\partial y}.
\end{equation}
This operator is degenerate when $y = 0$, making it more difficult to employ typical PDE techniques.

\textit{Notations.} For any set \( D \subset \mathbb{R} \times \mathbb{R}^+ \), we define the parabolic cylinder over the time interval \( (0, s] \) as \( D_s = (0, s] \times D \), with its parabolic boundary given by \( \partial_P D_s = \partial D_s \setminus (\{s\} \times D) \). We also introduce the unbounded domain \( E_s = (0, s] \times \mathbb{R} \times \mathbb{R}^+ \). For a vector \( z = (x,y) \in \mathbb{R}^2 \), its Euclidean norm is \( |z| = \sqrt{x^2 + y^2} \). Let \( M_n(\mathbb{R}) \) denote the set of all \( n \times n \) real matrices, and let \( \mathbb{S}_n \) denote the subspace consisting of all \( n \times n \) symmetric matrices. For any matrix \( A \in M_n(\mathbb{R}) \), let \( \operatorname{Tr}(A) \) denote the trace of \( A \), defined as the sum of its diagonal entries. For vectors \( a, b \in \mathbb{R}^n \), we write \( \langle a, b \rangle = \sum_i a_i b_i \) for the Euclidean inner product; for matrices \( A, B \in M_n(\mathbb{R}) \), we define \( \langle A, B \rangle = \operatorname{Tr}(AB^T) \). We denote by \(I_n\) the \( n \times n\) identity matrix. For any \( N \times d \) matrix \( A = (a_{ij}) \), let \( |A| = \sqrt{\sum_{i=1}^{N} \sum_{j=1}^{d} a_{ij}^{2}} \). For any two sets \( \mathcal{O}_1 \) and \( \mathcal{O}_2 \), the notation \( \mathcal{O}_1 \Subset \mathcal{O}_2 \) indicates that \( \overline{\mathcal{O}_1} \subset \mathcal{O}_2 \). For a function \(\phi\), we denote by \(D \phi\) the gradient (i.e. the vector of first-order partial derivatives) of \(\phi\), and by \(D^2 \phi\) its Hessian matrix (the matrix of second-order partial derivatives). Let \( C([0,T] \times \mathbb{R} \times \mathbb{R}^+) \) denote the space of all continuous functions defined on \( [0,T] \times \mathbb{R} \times \mathbb{R}^+ \). The space \( C^{1,2}(D_s) \) consists of all continuous functions on \( D_s \) that possess continuous classical derivatives with respect to time up to the first order and with respect to the spatial variables up to the second order.

The space $L_p(D_s)$ is the Lebesgue space consisting of all measurable functions $v: D_s \to \mathbb{R}$ such that $\int_{D_s} |v(t,z)|^p \, dt\, dz < \infty$, where $1 \leq p < \infty$. The norm is given by
\begin{equation}
   \|v\|_{L_p(D_s)} = \left( \int_{D_s} |v(t,z)|^p \, dt dz \right)^{1/p}.
\end{equation}
For $p = \infty$, $L_\infty(D_s)$ consists of measurable functions $v$ such that $\|v\|_{L_\infty(D_s)} = \operatorname{ess\,sup}_{(t,z) \in D_s} |v(t,z)| < \infty$.

The Sobolev space $W^{2,1}_p(D_s)$ consists of all functions $v \in L_p(D_s)$ such that the generalised derivatives $\partial_t v$, $\partial_x v$, $\partial_y v$, $\partial^2_{xx} v$, $\partial^2_{xy} v$, $\partial^2_{yy} v$ exist and belong to $L_p(D_s)$. It is equipped with the norm
\begin{equation}
\begin{split}
    \|v\|_{W^{2,1}_p(D_s)} = & \|v\|_{L_p(D_s)} + \|\partial_t v\|_{L_p(D_s)} + \|\partial_x v\|_{L_p(D_s)} + \|\partial_y v\|_{L_p(D_s)} \\
    & + \|\partial^2_{xx} v\|_{L_p(D_s)} + \|\partial^2_{xy} v\|_{L_p(D_s)} + \|\partial^2_{yy} v\|_{L_p(D_s)}.
\end{split}
\end{equation}
The space \(W^{2,1}_{p,\mathrm{loc}}(D_s)\) consists of functions whose \(W^{2,1}_{p}\)-norm is finite on every compact subset of \(D_s\).

For any positive real number \(\alpha\), \(H^{\alpha, \alpha/2}(D_s)\) is the space of functions \(v\) that are continuous in \(D_s\) with continuous classical derivatives \(\partial_t^r \partial_z^s v\) for \(2r + s < \alpha\), and have finite norm
\[
\|v\|_{D_s}^{(\alpha)} := |v|_z^{(\alpha)} + |v|_t^{(\alpha/2)} + \sum_{2r+s \leq [\alpha]} \|\partial_t^r \partial_z^s v\|^{(0)},
\]
in which
\[
\|v\|^{(0)} = \max_{D_s} |v|,
\]
\[
|v|_z^{(\alpha)} = \sum_{2r+s=[\alpha]} \sup_{|z-z'| \leq \rho_0} \frac{|\partial_t^r \partial_z^s v(t,z) - \partial_t^r \partial_z^s v(t,z')|}{|z - z'|^{\alpha - [\alpha]}},
\]
and
\[
|v|_t^{(\alpha/2)} = \sum_{\alpha-2 < 2r+s < \alpha} \sup_{|t-t'| \leq \rho_0} \frac{|\partial_t^r \partial_z^s v(t,z) - \partial_t^r \partial_z^s v(t',z)|}{|t - t'|^{(\alpha - 2r - s)/2}},
\]
for a constant \(\rho_0\).

\section{Preliminary results}\label{sec:notations_background}
This section contains standard estimates for the process \(Z^{t,z}_s = (X^{t,x,y}_s,Y^{t,x,y}_s)\), where \((X^{t,x,y}_s,Y^{t,x,y}_s)\) is defined in \eqref{eq:heston_stochastic_equation} with the starting point \(z=(x,y)\), and some continuity properties of the value function $P^A(t,x,y)$.
For simplicity of notation, we define \begin{equation}
\label{eq:def_of_signma_b}
\sigma(t, z) = \begin{bmatrix} \sqrt{y} & 0 \\ \sigma \rho \sqrt{y} & \sigma \sqrt{1 - \rho^2} \sqrt{y} \end{bmatrix}, \quad b(t, z) = \begin{bmatrix} r - \delta - \frac{y}{2} \\ \kappa (\theta - y) \end{bmatrix}.
\end{equation}
The process $Z$ evolves as
$$dZ_t = b(t,Z_t)dt + \sigma(t,Z_t) d\overline{W}_t,$$
where \( d\bar{W}_t = (dB_t, dW^2_t) \), with \( B_t \) and \( W^2_t \) independent, and \( dW_t = \rho \, dB_t + \sqrt{1 - \rho^2} \, dW^2_t \).

\begin{lemma}
\label{lem:stochastic_bound_1}
There exists $C > 0$ such that the following hold.
\begin{enumerate}[label=\alph*)]
    \item For any $k \in [0, 2]$, $h, t \in [0, T]$, $z \in \mathbb{R} \times \mathbb{R}^+$, and $\tau \in \mathcal{T}_{0,h}$:
    \(\mathbb{E} \left| Z^{t, z}_{t+\tau} \right|^k \leq C (1 + |z|^k)\).
    \item For any $k \in [0, 2]$, $h, t \in [0, T]$, $z \in \mathbb{R} \times \mathbb{R}^+$, and $\tau \in \mathcal{T}_{0,h}$:
    \(\mathbb{E} \left| Z^{t, z}_{t+\tau} - z \right|^k \leq C (1 + |z|^k) h^{\frac{k}{2}}\).
    \item For any $h, t \in [0, T]$ and $z, z' \in \mathbb{R} \times \mathbb{R}^+$:
    \(\mathbb{E} \left[ \sup_{0 \leq s \leq h} |Z^{t, z}_{t+s} - Z^{t, z'}_{t+s}| \right] \leq C (|z - z'| + \sqrt{|z-z'|})\).
\end{enumerate}
\end{lemma}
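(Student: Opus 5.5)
We prove the three estimates in order, using the linear growth of $b$ and $\sigma$ in \eqref{eq:def_of_signma_b}: $|b(t,z)|\le C(1+|z|)$ and $|\sigma(t,z)|^2\le C y\le C(1+|z|)$. The Heston coefficients are only locally H\"older (square root), not Lipschitz, so part c) needs a separate treatment.

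\emph{Part a).} It suffices to treat $k=2$. For $k\in[0,2)$ we then apply Jensen's inequality, $\mathbb{E}|Z|^k\le (\mathbb{E}|Z|^2)^{k/2}\le C(1+|z|^2)^{k/2}\le C(1+|z|^k)$. We work with the supremum over $[t,t+h]$, so that arbitrary stopping times $\tau\in\mathcal{T}_{0,h}$ are covered. By It\^o's formula, Burkholder--Davis--Gundy and Cauchy--Schwarz on the drift,
\[
\mathbb{E}\sup_{s\le u}|Z_{t+s}|^2\le 3|z|^2+C\int_0^u\big(1+\mathbb{E}|Z_{t+r}|^2\big)dr.
\]
We localise first by the stopping time $\inf\{s:|Z_{t+s}|\ge N\}$ so that every quantity is finite, then apply Gronwall and let $N\to\infty$ with Fatou. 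This gives $\mathbb{E}\sup_{s\le h}|Z_{t+s}|^2\le C(1+|z|^2)$ with $C$ depending only on $T$.

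\emph{Part b).} Write $Z_{t+\tau}-z=\int_t^{t+\tau}b\,dr+\int_t^{t+\tau}\sigma\,d\bar W$ and again work with $k=2$ and the supremum. The drift term is bounded by $h^2\,\mathbb{E}\sup(1+|Z|)^2\le C h^2(1+|z|^2)$. By BDG, the martingale term is bounded by $C\int_0^h \mathbb{E}(1+|Z_{t+r}|^2)\,dr\le Ch(1+|z|^2)$. Since $h\le T$, we have $h^2\le Th$, so the total is $\le C(1+|z|^2)h$. For $k<2$, Jensen gives $(C(1+|z|^2)h)^{k/2}\le C(1+|z|^k)h^{k/2}$.

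\emph{Part c).} This is the main obstacle because $\sqrt{y}$ is not Lipschitz. Write $Z=(X,Y)$ and $Z'=(X',Y')$ for the two solutions.

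The $Y$-component is a one-dimensional CIR process. We use the Yamada--Watanabe technique: take functions $\psi_m$ approximating $|\cdot|$ with $\psi_m''(u)\le 2/(m|u|)$ on the appropriate annuli. Since $(\sqrt y-\sqrt{y'})^2\le|y-y'|$, the It\^o correction contributes at most $C/m\cdot h$. The drift is Lipschitz, so Gronwall gives $\mathbb{E}|Y_{t+s}-Y'_{t+s}|\le C|y-y'|$, uniformly in $s\le h$. For the supremum we apply BDG to the martingale part of $Y-Y'$:
\[
\mathbb{E}\Big(\int_0^h\sigma^2\big(\sqrt{Y}-\sqrt{Y'}\big)^2dr\Big)^{1/2}\le C\Big(\int_0^h\mathbb{E}|Y-Y'|\,dr\Big)^{1/2}\le C\sqrt{|y-y'|}.
\]
Together with the drift part this yields $\mathbb{E}\sup_{s\le h}|Y-Y'|\le C(|y-y'|+\sqrt{|y-y'|})$.

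For $X$, note that the drift difference is $-\tfrac12(Y-Y')$ and the diffusion difference is $\sqrt Y-\sqrt{Y'}$. BDG together with the same bound $\big(\sqrt Y-\sqrt{Y'}\big)^2\le|Y-Y'|$ gives
\[
\mathbb{E}\sup_{s\le h}|X-X'|\le|x-x'|+C\int_0^h\mathbb{E}|Y-Y'|\,dr+C\Big(\int_0^h\mathbb{E}|Y-Y'|\,dr\Big)^{1/2}\le C\big(|z-z'|+\sqrt{|z-z'|}\big).
\]
Summing the two components proves c). The delicate point is the Yamada--Watanabe step. If needed, the Feller condition guarantees $Y>0$, but the argument does not actually require it.
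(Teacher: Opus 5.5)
Your proposal is correct and follows the same route as the paper: reduce to $k=2$, use linear growth with Gronwall for a) and b), and for c) combine a fixed-time $L^1$ Lipschitz estimate for the CIR component with BDG and $(\sqrt a-\sqrt b)^2\le|a-b|$. The only differences are minor. You bound the running supremum (with localisation) to cover stopping times, whereas the paper injects the deterministic-time bound into $\mathbb{E}\int_0^\tau$. You also prove the CIR estimate yourself via Yamada--Watanabe, whereas the paper cites \citet{Detemple2017}.
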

\begin{proof}
By H\"{o}lder's inequality, it suffices to prove these estimates for $k = 2$. For notational simplicity, hereafter, $C$ denotes a generic constant which may vary from line to line.

Consider $(a)$. Using the fact that $b$, $\sigma$ satisfy the linear growth conditions $|b(t,z)|+|\sigma(t,z)| \leq K(1+|z|)$, for some $K>0$, we have
\begin{eqnarray}
\mathbb{E} \left| Z^{t, z}_{t+\tau} \right|^2
&\leq& C \left( |z|^2 + \mathbb{E} \int_0^\tau \left| b(u + t, Z^{t, z}_{t+u}) \right|^2 \, du + \mathbb{E} \int_0^\tau \left| \sigma(u + t, Z^{t, z}_{t+u}) \right|^2 \, du \right) \nonumber \\
&\leq& C \left( |z|^2 + \mathbb{E} \int_0^\tau (1 + |Z^{t, z}_{t+u}|^2) \, du \right),
\label{eq:A1}
\end{eqnarray}
for any $\tau \in \mathcal{T}_{0,h}$. In particular, for any deterministic time $\tau = s$, Fubini's theorem and Gronwall's lemma imply that
    \begin{equation}
    \mathbb{E} |Z^{t, z}_{t+s}|^2 \leq C (1 + |z|^2), \forall s \in [0,h].
    \label{eq:A.2}
    \end{equation}
    We obtain $(a)$ by injecting \eqref{eq:A.2} into \eqref{eq:A1} and noting that
    \[
    \mathbb{E} \int_0^\tau |Z^{t, z}_{t+u}|^2 \, du \leq \int_0^h \mathbb{E} |Z^{t, z}_{t+u}|^2 \, du
    \]
    for any stopping time $\tau \in \mathcal{T}_{0,h}$.

We prove $(b)$. Similar arguments as above and \eqref{eq:A.2} imply that for all $\tau \in \mathcal{T}_{0,h}$
    \begin{align*}
    \mathbb{E} |Z^{t, z}_{t+\tau} - z|^2 \leq C \int_0^h \left( 1 + \mathbb{E} |Z^{t, z}_{t+u}|^2 \right) \, du \leq C (1 + |z|^2) h.
    \end{align*}

Finally, we consider $(c)$. From \cite{Detemple2017}, we easily check that, for every fixed $s \geq 0$ and $z = (x,y), z' = (x',y') \in \mathbb{R} \times \mathbb{R}^+$, we have
\begin{equation}\label{eq:CIR_fixed_time}
\mathbb{E}\left[ |Y_{t+s}^{t,x,y} - Y_{t+s}^{t,x',y'}| \right] \leq C|y - y'|.
\end{equation}
From this, we have:
\begin{align*}
&\mathbb{E}\Bigl[\sup_{0\le s\le h}|X_{t+s}^{t,x,y}-X_{t+s}^{t,x',y'}|\Bigr]\\
&\qquad\le C \left( |x - x'| + \int_t^{t+h} \mathbb{E}[ |Y_{t+u}^{t,x,y} - Y_{t+u}^{t,x',y'}| ] \, du + \mathbb{E} \left[ \sup_{s \in [t,t+h]} \left| \int_t^s (\sqrt{Y_{t+u}^{t,x,y}} - \sqrt{Y_{t+u}^{t,x',y'}}) \, dB_u \right| \right] \right) \\
&\qquad\le C\Bigl(|x-x'|+\int_t^{t+h}\!\mathbb{E}[|Y_{t+u}^{t,x,y}-Y_{t+u}^{t,x',y'}|]\,du
+\bigl (\mathbb{E}[\sup_{s\in[t,t+h]}|\textstyle\int_t^s(\sqrt{Y_{t+u}^{t,x,y}}-\sqrt{Y_{t+u}^{t,x',y'}})\,dB_u|^2]\bigr)^{1/2}\Bigr) \\
&\qquad\le C\Bigl(|x-x'|+\int_t^{t+h}\!\mathbb{E}[|Y_{t+u}^{t,x,y}-Y_{t+u}^{t,x',y'}|]\,du
+\bigl(\mathbb{E}[\int_t^{t+h}|Y_{t+u}^{t,x,y}-Y_{t+u}^{t,x',y'}|\,du]\bigr)^{1/2}\Bigr) \\
&\qquad\le C\bigl(|x-x'|+|y-y'|+\sqrt{|y-y'|}\bigr).
\end{align*}
It remains to bound the $Y$-component. From the CIR dynamics, we have
\[
Y_{t+s}^{t,x,y} - Y_{t+s}^{t,x',y'} = (y - y') - \kappa \int_0^s (Y_{t+u}^{t,x,y} - Y_{t+u}^{t,x',y'}) \, du + \sigma \int_0^s \left( \sqrt{Y_{t+u}^{t,x,y}} - \sqrt{Y_{t+u}^{t,x',y'}} \right) dW_u.
\]
Taking the supremum and expectation, and applying the Burkholder--Davis--Gundy inequality to the martingale term, we obtain
\begin{align*}
\mathbb{E}\Bigl[\sup_{0 \leq s \leq h} |Y_{t+s}^{t,x,y} - Y_{t+s}^{t,x',y'}|\Bigr]
&\leq |y - y'| + \kappa \int_0^h \mathbb{E}\bigl[|Y_{t+u}^{t,x,y} - Y_{t+u}^{t,x',y'}|\bigr] \, du \\
&\quad + C \left( \mathbb{E} \int_0^h \left| \sqrt{Y_{t+u}^{t,x,y}} - \sqrt{Y_{t+u}^{t,x',y'}} \right|^2 du \right)^{1/2}.
\end{align*}
Using the inequality $|\sqrt{a} - \sqrt{b}|^2 \leq |a - b|$ for $a, b \geq 0$, together with \eqref{eq:CIR_fixed_time}, we deduce
\[
\mathbb{E}\Bigl[\sup_{0 \leq s \leq h} |Y_{t+s}^{t,x,y} - Y_{t+s}^{t,x',y'}|\Bigr]
\leq C\bigl(|y - y'| + \sqrt{|y - y'|}\bigr).
\]
Combining this with the estimate for the $X$-component, we conclude
\[
\mathbb{E}\Bigl[\sup_{0 \leq s \leq h} |Z_{t+s}^{t,z} - Z_{t+s}^{t,z'}|\Bigr]
\leq \mathbb{E}\Bigl[\sup_{0 \leq s \leq h} |X_{t+s}^{t,z} - X_{t+s}^{t,z'}|\Bigr] + \mathbb{E}\Bigl[\sup_{0 \leq s \leq h} |Y_{t+s}^{t,z} - Y_{t+s}^{t,z'}|\Bigr]
\leq C\bigl(|z - z'| + \sqrt{|z - z'|}\bigr).\]
\end{proof}

\begin{lemma}
\label{lem:Lipschitz_cons_1}
There exists a constant \( C > 0 \) such that for all \( (t, z), (t,z') \in [0, T] \times \mathbb{R} \times \mathbb{R}^+ \)
\[
|P^A(t,z) - P^A(t, z')| \leq C (|z - z'| + \sqrt{|z-z'|}).
\]
\end{lemma}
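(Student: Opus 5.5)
The argument rests on two facts: the payoff $G(x)=(K-e^x)^+$ is globally Lipschitz, and the supremum over stopping times can be compared term by term. For every $x,x'\in\mathbb{R}$ we have $|G(x)-G(x')|\le |e^{x}-e^{x'}|\wedge K$. Since $G$ is nonincreasing and bounded by $K$, we also have $|G(x)-G(x')|\le K|x-x'|$: on $\{x,x'\le \log K\}$ the derivative $e^x$ is bounded by $K$, and outside this set one argument has $G=0$ and the bound reduces to that case.

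Fix $t$, $z=(x,y)$ and $z'=(x',y')$. Both processes $Z^{t,z}$ and $Z^{t,z'}$ are driven by the same Brownian motion $\overline{W}$, so they live on the same filtered probability space and admit the same class $\mathcal{T}_{t,T}$ of stopping times. For any $\tau\in\mathcal{T}_{t,T}$,
\[
\Bigl|\mathbb{E}\bigl[e^{-r(\tau-t)}G(X^{t,x,y}_\tau)\bigr]-\mathbb{E}\bigl[e^{-r(\tau-t)}G(X^{t,x',y'}_\tau)\bigr]\Bigr|
\le K\,\mathbb{E}\bigl|X^{t,x,y}_\tau-X^{t,x',y'}_\tau\bigr|,
\]
using $e^{-r(\tau-t)}\le 1$ (with $r\ge0$; if $r<0$ one simply picks up the factor $e^{|r|T}$). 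We then use the elementary fact $|\sup_\tau a_\tau-\sup_\tau b_\tau|\le\sup_\tau|a_\tau-b_\tau|$ to obtain
\[
|P^A(t,z)-P^A(t,z')|\le K\sup_{\tau\in\mathcal{T}_{t,T}}\mathbb{E}\bigl|X^{t,z}_\tau-X^{t,z'}_\tau\bigr|\le K\,\mathbb{E}\Bigl[\sup_{0\le s\le T-t}\bigl|Z^{t,z}_{t+s}-Z^{t,z'}_{t+s}\bigr|\Bigr].
\]

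Finally we invoke Lemma~\ref{lem:stochastic_bound_1}(c) with $h=T-t\in[0,T]$. It bounds the right-hand side by $C(|z-z'|+\sqrt{|z-z'|})$, with $C$ uniform in $t$, and this is exactly the claim.

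The only delicate point is the use of the pathwise supremum. It is needed because $\tau$ is random and may depend on the path, so a fixed-time estimate would not suffice. Lemma~\ref{lem:stochastic_bound_1}(c) already supplies this supremum bound, together with the square-root term coming from the non-Lipschitz coefficient $\sqrt{y}$. There is no real obstacle beyond checking that both processes use the common driving noise, so that a single stopping time can be applied to both.
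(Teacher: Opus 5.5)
Your proposal is correct and follows the paper's proof essentially step for step. Both bound the difference of suprema by the supremum of differences, use the Lipschitz property of $G$, dominate the value at the stopping time by the pathwise supremum $\mathbb{E}\bigl[\sup_{s}|Z^{t,z}_s-Z^{t,z'}_s|\bigr]$, and conclude with Lemma~\ref{lem:stochastic_bound_1}(c).
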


\begin{proof}
Using the definition of \( P^A \) in \eqref{def: PA_american_option}, we estimate
\begin{align*}
|P^A(t, z) - P^A(t, z')|
&\leq \sup_{\tau \in \mathcal{T}_{t,T}} \mathbb{E} \left[ \left| G(X_\tau^{t,z}) - G(X_\tau^{t,z'}) \right| \right] \\
&\leq \mathbb{E} \left[ \sup_{t \leq s \leq T} \left| G( X_s^{t,z}) - G( X_s^{t,z'}) \right| \right] \\
&\leq C \, \mathbb{E} \left[ \sup_{t \leq s \leq T} \left| X_{s}^{t,z} - X_{s}^{t,z'} \right| \right] \\
&\leq C \, \mathbb{E} \left[ \sup_{t \leq s \leq T} \left| Z_{s}^{t,z} - Z_{s}^{t,z'}  \right| \right] \\
&\leq C (|z - z'| + \sqrt{|z-z'|}).
\end{align*}
The first inequality uses \(|\sup A - \sup B| \leq \sup |A - B|\) and the triangle inequality inside \(\mathbb{E}\). The second inequality holds because for any stopping time \(\tau \in \mathcal{T}_{t,T}\), we have \(\tau \in [t, T]\), so \(|G(X_\tau) - G(X'_\tau)| \leq \sup_{t \leq s \leq T} |G(X_s) - G(X'_s)|\) pointwise; taking expectation and then supremum over \(\tau\) yields the result. The third inequality follows from the Lipschitz continuity of \( G(x) = (K - e^{x})^+ \), and the last inequality follows from Lemma~\ref{lem:stochastic_bound_1}~(c).
\end{proof}

\begin{lemma}
\label{lem:time_continous}
For any $z \in \mathbb{R} \times \mathbb{R}^+$ and $0 \leq t < s \leq T$, we have
\[
0 \leq P^{A}(t, z) - P^{A}(s, z) \leq C \left( 1 + |z| \right) |t - s|^{1/2} + C \left( 1 + |z|^{1/2} \right) |t - s|^{1/4},
\]
where $C > 0$ is a constant independent of $t$, $s$, and $z$.
\end{lemma}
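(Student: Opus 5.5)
The plan rests on the time-homogeneity of the dynamics \eqref{eq:heston_stochastic_equation}: the coefficients $b,\sigma$ do not depend on time. Consequently
\[
P^A(t,z)=\sup_{\tau\in\mathcal{T}_{0,T-t}}\mathbb{E}\bigl[e^{-r\tau}G(Z^{0,z}_{\tau})\bigr].
\]
The lower bound $P^A(t,z)\ge P^A(s,z)$ for $t<s$ is then immediate, since $\mathcal{T}_{0,T-s}\subset\mathcal{T}_{0,T-t}$ and the supremum is taken over a larger set.

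For the upper bound, set $h=s-t$ and fix $\tau\in\mathcal{T}_{0,T-t}$. Define the truncated time $\tau'=\tau\wedge(T-s)\in\mathcal{T}_{0,T-s}$. On the event $\{\tau\le T-s\}$ we have $\tau=\tau'$. In general $0\le\tau-\tau'\le h$, and $\tau'$ is a stopping time. We then write
\[
\mathbb{E}\bigl[e^{-r\tau}G(X_\tau)\bigr]-P^A(s,z)\le \mathbb{E}\bigl[e^{-r\tau}G(X_\tau)-e^{-r\tau'}G(X_{\tau'})\bigr],
\]
and split the right-hand side into two terms. The first is the discount term $\mathbb{E}\bigl[|e^{-r\tau}-e^{-r\tau'}|\,G(X_\tau)\bigr]\le K|r|h$ (if $r<0$ the factor $e^{|r|T}$ appears), since $G\le K$. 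The second is the payoff term $\mathbb{E}\bigl[|G(X_\tau)-G(X_{\tau'})|\bigr]\le C\,\mathbb{E}\bigl[|X_\tau-X_{\tau'}|\bigr]$, using that $G$ is Lipschitz with constant $K$ on the region where it is positive. Taking the supremum over $\tau$ gives the bound on $P^A(t,z)-P^A(s,z)$.

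The main step is the increment $\mathbb{E}|Z_\tau-Z_{\tau'}|$, where the random time window has length at most $h$. I will apply the strong Markov property at $\tau'$. Conditionally on $\mathcal{F}_{\tau'}$, the increment is that of the process started at $Z_{\tau'}$ and run up to a stopping time bounded by $h$. Lemma~\ref{lem:stochastic_bound_1}(b) with $k=1$ then gives
\[
\mathbb{E}\bigl[|Z_\tau-Z_{\tau'}|\,\big|\,\mathcal{F}_{\tau'}\bigr]\le C(1+|Z_{\tau'}|)h^{1/2}.
\]
Taking expectations and using Lemma~\ref{lem:stochastic_bound_1}(a) yields $C(1+|z|)h^{1/2}$.

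To explain the $h^{1/4}$ term in the statement, I expect the authors to split the increment into its drift and diffusion parts. The drift part is bounded by $C(1+|z|)h$. The diffusion part is handled by Burkholder--Davis--Gundy and Jensen:
\[
\Bigl(\mathbb{E}\int_{\tau'}^{\tau}Y_u\,du\Bigr)^{1/2}\le \bigl(h\,C(1+|z|)\bigr)^{1/2}\le C(1+|z|^{1/2})h^{1/2}.
\]
If some step instead only controls $\mathbb{E}[\sup_{u\le h}|\cdot|]$ through a square root of a first moment, as happened in Lemma~\ref{lem:stochastic_bound_1}(c), one obtains the weaker $C(1+|z|^{1/2})h^{1/4}$. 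Either way the stated bound holds, since $h^{1/2}\le T^{1/4}h^{1/4}$. Adding the drift, diffusion and discount contributions (with $h\le T^{1/2}h^{1/2}$) gives the claimed estimate.

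The step I expect to be most delicate is the rigorous justification of the strong Markov or conditioning step at the random time $\tau'$. To avoid it, I would first try to bound $\mathbb{E}\sup_{0\le v\le h}|Z_{\tau'+v}-Z_{\tau'}|$ directly. This is done by writing the increment as stochastic integrals over $[\tau',\tau'+h]$, applying Burkholder--Davis--Gundy to the shifted martingale, and using the uniform moment bound from Lemma~\ref{lem:stochastic_bound_1}(a) for the integrands.
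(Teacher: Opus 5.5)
Your proposal is correct, but it takes a different route from the paper's proof.

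\textbf{The paper's route.} The paper does not truncate stopping times. It invokes the dynamic programming principle (Proposition~3.1 of Pham) at the deterministic time $t+h$ and splits $P^A(s,z)=1_{\{\tau\ge h\}}P^A(s,z)+1_{\{\tau<h\}}P^A(s,z)$. It then bounds three terms:
\begin{itemize}
\item $\mathbb{E}|G(X_{t+\tau})-G(x)|$ on $\{\tau<h\}$, using Lemma~\ref{lem:stochastic_bound_1}(b);
\item $\mathbb{E}|P^A(s,Z_s)-P^A(s,z)|$, using the spatial modulus of Lemma~\ref{lem:Lipschitz_cons_1};
\item $G(x)-P^A(s,z)\le 0$.
\end{itemize}
The $h^{1/4}$ term comes from the $\sqrt{|z-z'|}$ part of Lemma~\ref{lem:Lipschitz_cons_1}, which is inherited from Lemma~\ref{lem:stochastic_bound_1}(c), via $\mathbb{E}\sqrt{|Z_s-z|}\le(\mathbb{E}|Z_s-z|)^{1/2}$. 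So your guess about its origin is right in spirit, but it does not come from a drift/diffusion split of the increment.

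\textbf{What your route gains.} Using time-homogeneity and $\tau'=\tau\wedge(T-s)$, you bypass both the DPP and the spatial continuity of $P^A$; you need only the Lipschitz property of $G$ and moment bounds. You also justify the lower bound $0\le P^A(t,z)-P^A(s,z)$ explicitly by inclusion of the stopping-time classes, whereas the paper simply asserts it. Finally, you obtain the sharper estimate $C(1+|z|)h^{1/2}$, which shows that the $h^{1/4}$ term is an artefact of passing through the square-root modulus.

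\textbf{What it costs, and how your fallback handles it.} You must control an increment over the random window $[\tau',\tau]$ rather than a deterministic one. The direct route you sketch at the end handles this cleanly, so the conditional strong-Markov step is unnecessary:
\begin{itemize}
\item Since $\tau\le\tau'+h$, BDG for the martingale $v\mapsto\int_{\tau'}^{\tau'+v}\sigma(Z_u)\,d\bar{W}_u$ in the filtration $(\mathcal{F}_{\tau'+v})_{v\ge0}$ gives $\mathbb{E}|Z_\tau-Z_{\tau'}|\le\int_0^h\mathbb{E}|b(Z_{\tau'+v})|\,dv+C\bigl(\int_0^h\mathbb{E}|\sigma(Z_{\tau'+v})|^2\,dv\bigr)^{1/2}$.
\item Lemma~\ref{lem:stochastic_bound_1}(a), applied at the bounded stopping times $\tau'+v$ together with linear growth of $b,\sigma$, bounds this by $C(1+|z|)h+C(1+|z|)h^{1/2}$.
\end{itemize}
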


\begin{proof}
Let \( 0 \leq t < s \leq T \), with \( h = s - t > 0 \). From Proposition 3.1 of \citet{huyenpham}, we have
\begin{align}
P^{A}(t, x, y) &= \sup_{\tau \in \mathcal{T}_{0,T-t}} \mathbb{E} \left[ 1_{\{\tau < h\}} e^{-r \tau} G( X_{t+\tau}^{t, x,y}) + 1_{\{\tau \geq h\}} e^{-r h}P^{A}(t + h, X_{t+h}^{t, x,y}, Y_{t+h}^{t,x, y}) \right].
\label{eq:CA_t}
\end{align}
We rewrite $
P^{A}(s, x, y) = 1_{\{\tau \geq h\}}P^{A}(s, x, y) + 1_{\{\tau < h\}}P^{A}(s, x, y).$ From \eqref{eq:CA_t}, we estimate
\begin{eqnarray}
0 &\leq& P^{A}(t, x, y) -P^{A}(s, x, y) \notag  \\
&\leq& \sup_{\tau \in \mathcal{T}_{0,T-t}} \mathbb{E} \Bigg[ 1_{\{\tau < h\}} e^{-r \tau} \left( G( X_{t+\tau}^{t, x,y}) - G(x) \right) \notag \\
&& \qquad + 1_{\{\tau \geq h\}} e^{-r h} \left(P^{A}(s, X_s^{t, x,y}, Y_s^{t,x, y}) -P^{A}(s, x, y) \right) \notag  + 1_{\{\tau < h\}} e^{-r \tau} \left( G(x) -P^{A}(s, x, y) \right) \Bigg].
\label{eq:DPP_diff}
\end{eqnarray}
\smallskip
\noindent We now bound each term in the expectation. For the first term, by utilising the fact that \(G\) is Lipschitz continuous, and Lemma~\ref{lem:stochastic_bound_1}~(b), we obtain
\begin{eqnarray}
\mathbb{E} \left[ 1_{\{\tau < h\}} e^{-r \tau} \left( G( X_{t+\tau}^{t,x, y}) - G(x) \right) \right]
&\leq& \mathbb{E} \left| G(X_{t+\tau}^{t,x, y}) - G(x) \right| \notag \\
&\leq& C  \mathbb{E} \left| X_{t+\tau}^{t, x,y} - x \right|  \leq C \left( 1 + |z| \right) h^{1/2}.
\label{eq:term1_1}
\end{eqnarray}
For the second term, by Lemma~\ref{lem:Lipschitz_cons_1} and Lemma~\ref{lem:stochastic_bound_1}~(b), we have
\begin{eqnarray}
\mathbb{E} \left[ 1_{\{\tau \geq h\}} e^{-r h} \left(P^{A}(s, X_s^{t, x, y}, Y_s^{t, x, y}) -P^{A}(s, x, y) \right) \right]
&\leq& \mathbb{E} \left|P^{A}(s, X_s^{t, x,y}, Y_s^{t, x,y}) -P^{A}(s, x, y) \right| \notag \\
&\leq& C  \left(\mathbb{E}\left |Z_{s}^{t,x,y}-z \right| + \mathbb{E}\sqrt{|Z_{s}^{t,x,y}-z|} \right) \notag\\
&\leq& C \left( 1 + |z| \right) h^{1/2} + C \left( 1 + |z|^{1/2} \right) h^{1/4}.
\label{eq:term2_1}
\end{eqnarray}
For the third term, since \( G(x) -P^{A}(s, x, y) \leq 0 \), it is obvious that
\begin{align}
\mathbb{E} \left[ 1_{\{\tau < h\}} e^{-r \tau} \left( G(x) - P^{A}(s, x, y) \right) \right] \leq 0.
\label{eq:term3_1}
\end{align}
Combining the bounds from \eqref{eq:term1_1}, \eqref{eq:term2_1}, \eqref{eq:term3_1}, we conclude
\[
0 \leq P^{A}(t, z) -P^{A}(s, z) \leq C \left( 1 + |z| \right) h^{1/2} + C \left( 1 + |z|^{1/2} \right) h^{1/4}.
\]
The proof is complete.
\end{proof}

\begin{remark}
From Lemma \ref{lem:Lipschitz_cons_1} and Lemma \ref{lem:time_continous}, we deduce that for any \( (t, z_1), (s,z_2) \in [0, T] \times \mathbb{R} \times \mathbb{R}^+ \),
\begin{equation}
\begin{aligned}
    |P^{A}(t, z_1) - P^{A}(s, z_2)|
    &\leq |P^{A}(t, z_1) - P^{A}(s, z_1)| + |P^{A}(s, z_1) - P^{A}(s, z_2)| \\
    &\leq C \left( 1 + |z_1| \right) |t-s|^{1/2} +    C\left( 1 + |z_1|^{1/2} \right) |t-s|^{1/4} +
    C \left(|z_1 - z_2| + \sqrt{|z_1 - z_2|}\right).
\end{aligned}
\end{equation}

Consequently, \(P^A \in C([0, T] \times \mathbb{R} \times \mathbb{R}^+)\).
\end{remark}

Having established the continuity of \(P^A\), we are now in a position to characterise it as a viscosity solution. The continuity is crucial here: it ensures that the definition of viscosity solution applies and that the comparison principle, which underpins uniqueness, can be carried out.

\section{Viscosity solution}\label{sec:viscosity_solution}
In this section, we introduce several essential definitions related to viscosity solutions. Following this, we establish that the American option value function is the unique viscosity solution to Equation~\eqref{eq:heston_pde} in Proposition \ref{lem:viscosity_solution} and
Proposition \ref{lem:unique_viscosity_solution2}.

\begin{definition}
\label{def:viscosity}
A function \( v \in C([0, T] \times \mathbb{R} \times \mathbb{R}^+) \) is called a \emph{viscosity solution} of Equation~\eqref{eq:heston_pde} if it satisfies both the supersolution and subsolution properties outlined below:
\begin{itemize}
\item[(i)] \emph{Viscosity supersolution}: For every test function \( \varphi \in C^{1,2}([0, T] \times \mathbb{R} \times \mathbb{R}^+)\) such that \( v - \varphi \) achieves a global minimum at some point \( (t_0, x_0, y_0) \in [0, T) \times \mathbb{R} \times \mathbb{R}^+ \), the following inequality holds:
\[
\min \left\{ -\partial_t \varphi(t_0, x_0, y_0) - \mathcal{L} \varphi(t_0, x_0, y_0) + r v(t_0, x_0, y_0), \, v(t_0, x_0, y_0) - G(x_0) \right\} \geq 0,
\]
and \( v(T, x, y) \geq G(x) \) for all \( (x, y) \in \mathbb{R} \times \mathbb{R}^+ \).

\item[(ii)] \emph{Viscosity subsolution}: For every test function \( \varphi \in C^{1,2}([0, T] \times \mathbb{R} \times \mathbb{R}^+) \) such that \( v - \varphi \) achieves a global maximum at some point \( (t_0, x_0, y_0) \in [0, T) \times \mathbb{R} \times \mathbb{R}^+ \), the following inequality holds:
\[
\min \left\{ -\partial_t \varphi(t_0, x_0, y_0) - \mathcal{L} \varphi(t_0, x_0, y_0) + r v(t_0, x_0, y_0), \, v(t_0, x_0, y_0) - G(x_0) \right\} \leq 0,
\]
and \( v(T, x, y) \leq G(x) \) for all \( (x, y) \in \mathbb{R} \times \mathbb{R}^+ \).
\end{itemize}
\end{definition}

\begin{definition}
\label{def:jets2}
Let
\[
\mathcal{X} = \begin{pmatrix}
\partial^2_{xx} & \partial^2_{xy} & 0 \\
\partial^2_{yx} & \partial^2_{yy} & 0 \\
0 & 0 & 0
\end{pmatrix}
\]
denote a $3 \times 3$ symmetric matrix with the specified zero entries in the third row and column. Define the following submatrices:
\[
\mathcal{X}_1 = \begin{pmatrix} \partial^2_{xx} \end{pmatrix}, \quad
\mathcal{X}_2 = \begin{pmatrix} \partial^2_{xy} & 0 \end{pmatrix}, \quad
\mathcal{X}_3 = \begin{pmatrix} \partial^2_{yy} & 0 \\ 0 & 0 \end{pmatrix}, \quad
\mathcal{X}_4 = \begin{pmatrix} \partial^2_{xx} & \partial^2_{xy} \\ \partial^2_{yx} & \partial^2_{yy} \end{pmatrix}.
\]
Let $p = (p_1, p_2, p_3) \in \mathbb{R}^3$. For a continuous function $v \in C([0, T] \times \mathbb{R} \times \mathbb{R}^+)$ and a point $(t, z) \in [0, T] \times \mathbb{R} \times \mathbb{R}^+$, the \emph{parabolic superjet} $J^{2,+} v(t, z)$ is the set of all pairs $(p, \mathcal{X}) \in  \mathbb{R}^3 \times \mathbb{S}_3$ such that
\begin{equation*}
v(s, z') \leq v(t, z) + p_3 (s - t) + \begin{pmatrix} p_1 & p_2 \end{pmatrix} (z' - z) + \frac{1}{2} (z' - z)^T \mathcal{X}_4 (z' - z) + o(|s - t| + |z' - z|^2)
\end{equation*}
as $(s, z') \to (t, z)$. The \emph{parabolic subjet} is defined as $J^{2,-} v(t, z) = -J^{2,+} (-v)(t, z)$.
\end{definition}

\begin{proposition}
\label{lem:viscosity_solution}
The value function for an American put option \( P^A(t, x, y) \) is a viscosity solution of Equation \eqref{eq:heston_pde}.
\end{proposition}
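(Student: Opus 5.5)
The proof will follow the standard dynamic-programming route, as in \citet{huyenpham}. The main tools are the continuity of $P^A$ (remark after Lemma~\ref{lem:time_continous}) and the dynamic programming principle \eqref{eq:CA_t}: for every stopping time $\theta$ with values in $[0,T-t]$,
\[
P^A(t,z)=\sup_{\tau\in\mathcal{T}_{0,T-t}}\mathbb{E}\bigl[1_{\{\tau<\theta\}}e^{-r\tau}G(X^{t,z}_{t+\tau})+1_{\{\tau\ge\theta\}}e^{-r\theta}P^A(t+\theta,Z^{t,z}_{t+\theta})\bigr].
\]

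The terminal conditions are immediate, since $P^A(T,x,y)=G(x)$. Taking $\tau=0$ in \eqref{def: PA_american_option} gives $P^A\ge G$ everywhere, which covers the obstacle part of the supersolution inequality. By adding a constant we may assume $P^A(t_0,z_0)=\varphi(t_0,z_0)$ at the extremum point.

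\emph{Supersolution.} Let $P^A-\varphi$ attain a global minimum at $(t_0,z_0)$ with $t_0<T$, $y_0>0$, normalised so that $P^A\ge\varphi$ with equality at $(t_0,z_0)$. It remains to show $(-\partial_t-\mathcal{L}+r)\varphi(t_0,z_0)\ge0$. In the DPP take $\tau=\theta$, where $\theta=h\wedge\tau_\eta$ and $\tau_\eta$ is the first exit time of $Z^{t_0,z_0}$ from a small ball $B_\eta(z_0)$ chosen so that $\overline{B_\eta(z_0)}\subset\mathbb{R}\times\mathbb{R}^+$. This gives
\[
\varphi(t_0,z_0)\ge\mathbb{E}\bigl[e^{-r\theta}\varphi(t_0+\theta,Z_{t_0+\theta})\bigr].
\]
Applying Itô's formula to $e^{-rs}\varphi(t_0+s,Z_{t_0+s})$ up to $\theta$ yields
\[
0\ge\mathbb{E}\int_0^\theta e^{-rs}(\partial_t+\mathcal{L}-r)\varphi(t_0+s,Z_{t_0+s})\,ds.
\]
The stochastic integral is a true martingale because $Z$ stays in a compact set away from $y=0$ up to $\theta$, where $\sigma$ and $D\varphi$ are bounded. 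Divide by $h$ and let $h\to0$. Continuity of the integrand on the compact ball, together with $\mathbb{P}(\tau_\eta\le h)\to0$ (Lemma~\ref{lem:stochastic_bound_1}(b) and Chebyshev) and hence $\mathbb{E}[\theta]/h\to1$, gives $(-\partial_t-\mathcal{L}+r)\varphi(t_0,z_0)\ge0$.

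\emph{Subsolution.} Let $P^A-\varphi$ attain a global maximum $0$ at $(t_0,z_0)$, so $P^A\le\varphi$. If $P^A(t_0,z_0)=G(x_0)$ we are done. Otherwise argue by contradiction: suppose $P^A(t_0,z_0)>G(x_0)$ and $(-\partial_t-\mathcal{L}+r)\varphi(t_0,z_0)<0$. By continuity of $P^A$, $G$ and the derivatives of $\varphi$, there exist $\eta,\delta>0$ such that on the cylinder $N=[t_0,t_0+\eta]\times\overline{B_\eta(z_0)}$ (with $t_0+\eta<T$ and the ball kept inside $y>0$) we have $(-\partial_t-\mathcal{L}+r)\varphi\le-\delta$ and $P^A\ge G+\delta$.

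Replace $\varphi$ by $\tilde\varphi=\varphi+|z-z_0|^4+(t-t_0)^2$. This keeps the contact point and, after shrinking $\eta$ and $\delta$, the operator inequality on $N$. It also makes $P^A\le\tilde\varphi-\gamma$ on the parabolic boundary of $N$ for some $\gamma>0$.

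Let $\theta^*$ be the exit time of $(t_0+s,Z_{t_0+s})$ from $N$. On $N$ stopping is strictly suboptimal, since $P^A\ge G+\delta$ there. Splitting an arbitrary $\tau$ as $\tau\wedge\theta^*$ and using the DPP with $\theta=\theta^*$, together with the bound $G\le P^A\le\tilde\varphi$ (and $G\le P^A-\delta$ inside $N$), gives
\[
P^A(t_0,z_0)\le\sup_\tau\mathbb{E}\bigl[e^{-r(\tau\wedge\theta^*)}\tilde\varphi(\cdot)\bigr]-\min(\delta,\gamma)\,\mathbb{E}\bigl[e^{-r(\tau\wedge\theta^*)}\bigr].
\]
Itô's formula then bounds $\mathbb{E}[e^{-r(\tau\wedge\theta^*)}\tilde\varphi]$ by $\tilde\varphi(t_0,z_0)$, because the drift term is negative on $N$. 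Since $\mathbb{E}[e^{-r(\tau\wedge\theta^*)}]\ge e^{-rT}$, we get $P^A(t_0,z_0)\le\tilde\varphi(t_0,z_0)-c$ with $c=\min(\delta,\gamma)e^{-rT}>0$, contradicting equality at the contact point.

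\emph{Main obstacle.} The delicate step is this subsolution contradiction, in particular handling the terminal contribution from stopping before exit inside $N$. The strict margin $\min(\delta,\gamma)$ must survive the supremum over $\tau$, which requires splitting $\{\tau<\theta^*\}$ from $\{\tau\ge\theta^*\}$ carefully. Because $y_0>0$ and all neighbourhoods avoid $\{y=0\}$, the degeneracy of $\mathcal{L}$ and the non-Lipschitz square root play no role here: locally the coefficients are smooth and bounded.
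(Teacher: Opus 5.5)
Your argument is correct once a sign slip in the subsolution step is fixed. For the subsolution property it takes a different route from the paper.

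\textbf{The sign slip.} When $P^A(t_0,z_0)>G(x_0)$, the negation of the subsolution inequality is $(-\partial_t-\mathcal{L}+r)\varphi(t_0,z_0)>0$. So on $N$ you want $(-\partial_t-\mathcal{L}+r)\tilde\varphi\ge\delta$, not $\le-\delta$. As you wrote it, the hypothesis is already the desired conclusion. Worse, with $\le-\delta$ the It\^o drift $(\partial_t+\mathcal{L}-r)\tilde\varphi$ is positive, so the bound $\mathbb{E}[e^{-r(\tau\wedge\theta^*)}\tilde\varphi]\le\tilde\varphi(t_0,z_0)$ fails. Your later remark that ``the drift term is negative on $N$'' shows you intended the correct sign. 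Also, the gap $\gamma$ holds on the exit set of the forward process, namely the lateral boundary and the top face $\{t_0+\eta\}\times\overline{B_\eta(z_0)}$. It does not hold on the bottom face, which contains the contact point.

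\textbf{Supersolution.} This is essentially the paper's argument: a supermartingale/DPP inequality at a short stopping time, It\^o's formula, then divide by $h$. You carry it out directly at the contact point rather than along a sequence $(t_n,x_n,y_n)$ with $h_n=\sqrt{\eta_n}$. That is legitimate because $P^A$ is continuous. Keeping $\overline{B_\eta(z_0)}$ inside $\{y>0\}$ also makes the boundedness of $(\partial_t+\mathcal{L}-r)\varphi$ along the path transparent.

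\textbf{Subsolution.} Here the routes genuinely differ.
\begin{itemize}
\item The paper uses two optimal-stopping facts: $\tau_0$ (first entry into $\mathcal{D}$) is optimal, and the discounted price is a martingale on $[0,\tau_0]$. Stopping at $\tau=\tau_0\wedge\tau_1$ and using $P^A\le\varphi$ gives $0\le-\delta e^{-rT}\mathbb{E}\tau$. Because the martingale identity is an equality, this contradicts $\tau>0$ a.s.\ with no modification of $\varphi$.
\item You work instead with the supremum over all $\tau$ in the DPP with $\theta=\theta^*$. The margin must then be uniform in $\tau$, and the drift term alone gives nothing for $\tau=0$. So you need two gaps: the obstacle gap $\delta$ on $\{\tau<\theta^*\}$, and the boundary gap $\gamma$ on $\{\tau\ge\theta^*\}$ produced by the perturbation $|z-z_0|^4+(t-t_0)^2$.
\end{itemize}
Your route needs neither an optimal stopping time nor the martingale property of the value process. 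It only needs the DPP for a random $\theta$. That version is standard, but it is slightly more than the deterministic-$h$ form quoted in \eqref{eq:CA_t}, so cite it as such. The paper's route is shorter, but it relies on the optimality of $\tau_0$ and the martingale property, which it quotes without proof.
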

\begin{proof}
From Definition~\ref{def:viscosity}, to establish that \(P^A\) is a viscosity solution of Equation \eqref{eq:heston_pde}, it is essential to prove that it serves as both a viscosity supersolution and a viscosity subsolution. The proof is organized into two principal steps.

Step 1: We prove the supersolution property. Consider a point \( (t, x, y) \in [0, T) \times \mathbb{R} \times \mathbb{R}^+ := Q_T \) and a test function \( \varphi \in C^{1,2}([0, T] \times \mathbb{R} \times \mathbb{R}^+) \) such that, without loss of generality,
\begin{equation}
\label{eq:supersolution_setting}
    0 = (P^A - \varphi)(t, x, y) = \min_{(s, \xi, \zeta) \in Q_T} (P^A - \varphi)(s, \xi, \zeta).
\end{equation}
It is clear that $P^A(t,x,y) \ge G(x).$ Let \( (t_n, x_n, y_n), n \in \mathbb{N}\) be a sequence in \( Q_T \) converging to \( (t, x, y) \); then
\[
P^A(t_n, x_n, y_n) \to P^A(t, x, y)
\]
as $n$ tends to infinity. We define
\begin{equation}
\label{eq:eta}
\eta_n = P^A(t_n, x_n, y_n) - \varphi(t_n, x_n, y_n) > 0,
\end{equation}
which approaches zero as \( n \to \infty \). For each \( n \in \mathbb{N} \), we introduce the stopping time
\begin{equation}
    \label{eq:bound_unity_ball}
\theta_n := \inf \left\{ s > t_n : \left( s - t_n, X_{s}^{t_n,x_n,y_n} - x_n, Y_s^{t_n,x_n,y_n} - y_n \right) \notin [0, h_n) \times \mathcal{B} \times \mathcal{B} \right\},
\end{equation}
where \( \mathcal{B} \) denotes the unit ball in \( \mathbb{R} \), and
\[
h_n := \sqrt{\eta_n} \, 1_{\{\eta_n \neq 0\}} + \frac{1}{n} \, 1_{\{\eta_n = 0\}}.
\]
As \( n \to \infty \), we have that $h_n \to 0$, and thus \( \theta_n \to t \). By the supermartingale property of American option prices, we have
\begin{equation}
\label{eq: American_martigel}
    P^A(t_n, x_n, y_n) \geq \mathbb{E} \left[ e^{-r(h - t_n)} P^A\left( h, X_h^{t_n, x_n,y_n}, Y_h^{t_n, x_n,y_n} \right) \right],
\end{equation}
for any \( h \geq t_n \). Applying Itô's formula to \( e^{-r(h - t_n)} \varphi(h, X_h^{t_n, x_n,y_n}, Y_h^{t_n, x_n,y_n}) \) and using the boundedness from \eqref{eq:bound_unity_ball}, we obtain
\begin{align*}
\mathbb{E} \left[ e^{-r(\theta_n - t_n)} \varphi \left( \theta_n, X_{\theta_n}^{t_n, x_n,y_n}, Y_{\theta_n}^{t_n, x_n,y_n} \right) \right]
&= \varphi(t_n, x_n, y_n) \\
&\quad + \mathbb{E} \left[ \int_{t_n}^{\theta_n} e^{-r(s - t_n)} \left( \partial_t + \mathcal{L} - r \right) \varphi \left( s, X_s^{t_n, x_n,y_n}, Y_s^{t_n, x_n,y_n} \right) \, ds \right].
\end{align*}
According to \eqref{eq:supersolution_setting}, it follows that $\varphi(\theta_n, X_{\theta_n}^{t_n, x_n, y_n}, Y_{\theta_n}^{t_n, x_n, y_n}) \leq P^A(\theta_n, X_{\theta_n}^{t_n, x_n, y_n}, Y_{\theta_n}^{t_n, x_n, y_n})$. Furthermore, by \eqref{eq:eta}, we obtain
\begin{align*}
\mathbb{E} \left[ e^{-r(\theta_n - t_n)} P^A \left( \theta_n, X_{\theta_n}^{t_n, x_n,y_n}, Y_{\theta_n}^{t_n, x_n,y_n} \right) \right]
&\geq P^A(t_n, x_n, y_n) - \eta_n \\
&\quad + \mathbb{E} \left[ \int_{t_n}^{\theta_n} e^{-r(s - t_n)} \left( \partial_t + \mathcal{L} - r \right) \varphi \left( s, X_s^{t_n, x_n,y_n}, Y_s^{t_n, x_n,y_n} \right) \, ds \right].
\end{align*}
By \eqref{eq: American_martigel}, we obtain
\[
0 \geq -\eta_n + \mathbb{E} \left[ \int_{t_n}^{\theta_n} e^{-r(s - t_n)}\left( \partial_t + \mathcal{L} - r \right) \varphi \left( s, X_s^{t_n, x_n,y_n}, Y_s^{t_n, x_n,y_n} \right) \, ds \right],
\]
and thus,
\[
\mathbb{E} \left[ \int_{t_n}^{\theta_n} \frac{1}{h_n} e^{-r(s - t_n)}\left( \partial_t + \mathcal{L} - r \right) \varphi \left( s, X_s^{t_n, x_n,y_n}, Y_s^{t_n, x_n,y_n} \right) \, ds \right] \leq \frac{\eta_n}{h_n}.
\]
Note that for $n \geq N(\omega)$ sufficiently large, $\theta_n(\omega) = t_n+h_n $. Since the sequence of random variables
\[
\int_{t_n}^{\theta_n} \frac{1}{h_n} e^{-r(s - t_n)}\left( \partial_t + \mathcal{L} - r \right) \varphi \left( s, X_s^{t_n, x_n,y_n}, Y_s^{t_n, x_n,y_n} \right) \, ds
\]
is uniformly bounded in \( n \), on the interval \([t_n,\theta_n]\). Applying the dominated convergence theorem and the mean value theorem, we obtain
\[
\left( \partial_t + \mathcal{L} - r \right) \varphi(t, x, y) \leq 0 \quad \text{or equivalently,} \quad -\left( \partial_t + \mathcal{L} - r \right) \varphi(t, x, y) \geq 0.
\]
Thus, \( P^A(t, x, y) \) is a viscosity supersolution.

Step 2: We prove the subsolution property. Let \( (t, x, y) \in Q_T \) and \( \varphi \in C^{1,2}([0, T] \times \mathbb{R} \times \mathbb{R}^+) \) such that, without loss of generality,
\begin{equation}
\label{eq:super_solution_1}
0 = (P^A - \varphi)(t, x, y) = \max_{(s, \xi, \zeta) \in Q_T} (P^A - \varphi)(s, \xi, \zeta).
\end{equation}
We need to prove that
\begin{equation}\label{eq:sub}
\min \Bigl\{ (-\partial_t - \mathcal{L} + r) \varphi(t,x,y),\
             P^A(t,x,y) - G(x) \Bigr\} \leq 0.
\end{equation}
From the fact that \( P^A(t, x, y) \geq G(x) \), if \( P^A(t, x, y) = G(x) \) then \eqref{eq:sub} holds trivially. Therefore, we consider the case \( P^A(t, x, y) > G(x) \) and also \(\left( -\partial_t - \mathcal{L} + r \right) \varphi(t,x,y) > 0\).
We introduce the stopping time
\[
\tau_{0} := \inf \left\{ 0 \leq s \leq T - t : P^A \left( s + t, X_{t+s}^{t, x,y}, Y_{t+s}^{t,x, y} \right) = G \left( X_{t+s}^{t, x,y} \right)  \right\} \wedge (T-t),
\]
which is also the optimal stopping time. Since
\(\left( -\partial_t - \mathcal{L} + r \right) \varphi\) is continuous at \((t,x,y)\), there exists \(0 < \delta < T-t\) such that
\[
\left( -\partial_t - \mathcal{L} + r \right) \varphi(t', x', y')  > \delta
\quad \text{for all } (t', x', y') \in B_{\delta}(t, x, y),
\]
where $B_\delta(t,x,y) = \{(t',x',y') \in Q_T : (t'-t)^2 + (x'-x)^2 + (y'-y)^2 \le \delta^2 \}$. Now let \(\tau_1\) be the stopping time such that
\begin{equation}
\tau_1 := \inf \left\{ 0 \leq s < T - t : \left( t+s, X_{t+s}^{t,x,y}, Y_{t+s}^{t,x,y} \right) \notin B_{\delta}(t, x, y)\right\}.
\end{equation}
Then let \(\tau = \tau_0 \wedge \tau_1\). From our assumptions, and the definitions of \(\tau_0\), \(\tau_1\), we have \(\tau_0 > 0\) and \(\tau_1 > 0\) a.s. Since \(\tau \le \tau_0\) and by the martingale property of the American option up to \(\tau_0\), we obtain
\[
P^A(t, x, y) = \mathbb{E} \left[e^{-r\tau} P^A \left( t + \tau, X_{t + \tau}^{t, x,y}, Y_{t + \tau}^{t,x, y} \right) \, \right].
\]
Moreover, for every \( s \in [0, \tau] \), the point \( (t + s, X_{t+s}^{t, x, y}, Y_{t+s}^{t, x, y}) \) belongs to \( B_\delta(t, x, y) \). Consequently, we have
\begin{align*}
0 &= \mathbb{E} \left[e^{-r\tau} P^A \left( t + \tau, X_{t + \tau}^{t, x,y}, Y_{t + \tau}^{t,x, y} \right) \, \right] - P^A(t, x, y) \\
&\leq \mathbb{E} \left[e^{-r\tau} \varphi\left( t + \tau, X_{t + \tau}^{t, x,y}, Y_{t + \tau}^{t,x, y} \right) \, \right] - \varphi(t, x, y) \\
&= -\mathbb{E} \int_0^\tau e^{-rs} \left( -\partial_t - \mathcal{L} + r \right)\varphi(t + s, X_{t+s}^{t, x,y}, Y_{t+s}^{t,x, y})\, ds \\
&\leq -\delta\, e^{-rT}\, \mathbb{E}\tau,
\end{align*}
where the first inequality comes from \eqref{eq:super_solution_1}, and the last inequality uses \(e^{-rs} \geq e^{-rT}\).

This implies that $\tau = 0$ a.s. On the other hand, $\tau > 0$ a.s.\ by definition. This is a contradiction, which proves that \(\left( -\partial_t - \mathcal{L} + r \right) \varphi(t,x,y) \le 0\); that is, $P^A$ is a viscosity subsolution of Equation~\eqref{eq:heston_pde}.

Finally, since $P^A(T, x, y) = G(x)$ by definition \eqref{def: PA_american_option}, the terminal conditions $P^A(T, x, y) \geq G(x)$ and $P^A(T, x, y) \leq G(x)$ required by Definition~\ref{def:viscosity} are both satisfied.
\end{proof}

We prove the uniqueness of viscosity solutions by adapting the arguments of \cite{ishii2021existence}.

\begin{proposition}
\label{lem: u_v_sub_super}
Recall \( Q_T \), which was mentioned in the proof of Proposition \ref{lem:viscosity_solution}.
Suppose \(u\) is a viscosity subsolution and \(v\) is a viscosity supersolution of Equation~\eqref{eq:heston_pde} satisfying the conditions
\begin{itemize}
\item \(\max_{Q_T} |u| \le K\) and \(\max_{Q_T}|v| \le K\),
\item \(u, v \in C([0,T] \times \mathbb{R} \times \mathbb{R}^{+})\),
\item \(u(T, \cdot) \leq v(T, \cdot)\) for all \((x, y) \in \mathbb{R} \times \mathbb{R}^+\).
\end{itemize}
Then, it holds that
\begin{equation}
u(t, \cdot) \leq v(t, \cdot) \quad \text{in } Q_T.
\label{eq:52}
\end{equation}
\end{proposition}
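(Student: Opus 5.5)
We argue by contradiction, using the doubling-of-variables technique of \cite{ishii2021existence} together with the Crandall--Ishii lemma for parabolic semijets (Definition~\ref{def:jets2}). The two difficulties are the unbounded state space and the non-Lipschitz diffusion $\sqrt{y}$, which degenerates at $y=0$. The boundedness of $u$ and $v$ handles the first; the second is handled by the structure of $\sigma(t,z)$ in \eqref{eq:def_of_signma_b}.

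\textbf{Step 1 (strict subsolution and localisation).} Suppose $\sup_{Q_T}(u-v) = \mu > 0$. First replace $u$ by $u_\beta(t,z) = u(t,z) - \beta/t' - \beta\,\psi(z)$, where $t' = t$ or $T-t$ depending on the time orientation. The term $-\beta/(\cdot)$ makes the subsolution strict in time and pushes maxima away from $t=0$. The function $\psi(x,y) = \log(1+x^2) + \log(1+y) + 1/y$ (or a similar Lyapunov function) is chosen to satisfy three properties: $\psi \to \infty$ as $|x| \to \infty$, $y \to \infty$ and $y \downarrow 0$; $\mathcal{L}\psi \le C(1+\psi)$; and $\mathcal{L}(1/y) = (\sigma^2 - \kappa\theta)/y^2 + \kappa/y$, which is controlled thanks to the Feller condition $2\kappa\theta \ge \sigma^2$. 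If necessary, multiply by $e^{\lambda t}$ so that $u_\beta$ is a strict subsolution with a gap of order $\beta$. Because $u$ and $v$ are bounded by $K$, $u_\beta - v$ attains its maximum on a compact set $\{\psi \le C/\beta\}$ bounded away from $y=0$, and for small $\beta$ this maximum is still at least $\mu/2$.

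\textbf{Step 2 (doubling variables).} Set
\[
\Phi_\varepsilon(t,z,w) = u_\beta(t,z) - v(t,w) - \frac{|z-w|^2}{2\varepsilon},
\]
with a maximum point $(\hat t,\hat z,\hat w)$. The standard penalisation lemma gives $|\hat z-\hat w|^2/\varepsilon \to 0$, and the maximum points stay in a compact set $\{y \ge y_0(\beta)>0\}$ independent of $\varepsilon$. Since $u(T,\cdot) \le v(T,\cdot)$, we have $\hat t < T$ for small $\varepsilon$. If $u_\beta(\hat t,\hat z) \le G(\hat x)$, then combining this with $v(\hat t,\hat w) \ge G(\hat x')$ and the Lipschitz continuity of $G$ gives $\Phi_\varepsilon \le C|\hat z-\hat w| \to 0$, contradicting $\mu/2>0$. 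Hence the operator part of the subsolution inequality is active for $u_\beta$.

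\textbf{Step 3 (Ishii's lemma and the key estimate).} The parabolic Crandall--Ishii lemma gives $(a, p, X) \in \overline{J}^{2,+}u_\beta(\hat t,\hat z)$ and $(a,p,Y) \in \overline{J}^{2,-}v(\hat t,\hat w)$ with $p = (\hat z-\hat w)/\varepsilon$ and
\[
\begin{pmatrix} X & 0\\ 0 & -Y\end{pmatrix} \le \frac{3}{\varepsilon}\begin{pmatrix} I_2 & -I_2\\ -I_2 & I_2\end{pmatrix}.
\]
Subtracting the two viscosity inequalities leaves the discount term $r(u_\beta - v) \ge r\mu/2$ on one side. On the other side it leaves two error terms. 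The first is the drift difference $\langle b(\hat z)-b(\hat w), p\rangle \le C|\hat z-\hat w|^2/\varepsilon$, since $b$ is Lipschitz. The second is the diffusion difference
\[
\tfrac12\mathrm{Tr}\bigl(\sigma\sigma^T(\hat z)X - \sigma\sigma^T(\hat w)Y\bigr) \le \frac{3}{2\varepsilon}\bigl|\sigma(\hat z)-\sigma(\hat w)\bigr|^2 \le \frac{C}{\varepsilon}\bigl|\sqrt{\hat y}-\sqrt{\hat y'}\bigr|^2,
\]
where $\hat y$ and $\hat y'$ are the $y$-components of $\hat z$ and $\hat w$. Since both points satisfy $y \ge y_0(\beta)$, $\sqrt{y}$ is Lipschitz there, so this error is also $\le C_\beta|\hat z-\hat w|^2/\varepsilon \to 0$. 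The $\beta$-perturbation terms are absorbed by the strictness obtained in Step 1. Letting $\varepsilon \to 0$ gives $r\mu/2 \le 0$, a contradiction, and finally letting $\beta \to 0$ yields \eqref{eq:52}.

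\textbf{Main obstacle.} The main difficulty is Step 1: constructing $\psi$ so that it forces maxima away from $\{y=0\}$ and infinity while keeping $\mathcal{L}\psi$ controlled, which relies on the Feller condition. If that fails, the fallback is to use the elementary inequality $|\sqrt a-\sqrt b|^2 \le |a-b|$ directly. This bounds the diffusion error by $C|\hat y-\hat y'|/\varepsilon$ with no lower bound on $y$, so it only tends to zero if one uses a refined penalty such as $|x-x'|^2/\varepsilon + |y-y'|^2/\varepsilon^{2}$, or a quartic penalty, in the spirit of the modification of \cite{ishii2021existence}.
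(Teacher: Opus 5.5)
Your strategy differs from the paper's and is sound in outline: a Lyapunov weight confines the maximum to a compact set $\{y\ge y_0(\beta)\}$ on which $\sqrt{y}$ is Lipschitz, and then the textbook isotropic doubling with the bound $\frac{3}{2\varepsilon}|\sigma(\hat z)-\sigma(\hat w)|^2$ finishes. The gap is that the $\psi$ you write down does not satisfy $\mathcal{L}\psi\le C(1+\psi)$, and everything rests on Step~1. It fails in two places.

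\emph{Near $y=0$.} From your own formula $\mathcal{L}(1/y)=(\sigma^2-\kappa\theta)/y^2+\kappa/y$, the Feller condition $2\kappa\theta\ge\sigma^2$ does not give $\sigma^2-\kappa\theta\le 0$; that would require $\kappa\theta\ge\sigma^2$. For $2\kappa\theta=\sigma^2$ the leading term is $\sigma^2/(2y^2)$, which dominates $C/y$ as $y\downarrow 0$. The subsolution inequality for $u_\beta$ then carries a negative error of order $\beta/\hat y^{2}$. At the maximum you only know $\beta/\hat y\le 2K$, so this error can be of order $1/\beta$, and neither the $\beta$-gap nor $r\mu/2$ absorbs it.

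\emph{Near $y=\infty$.} We have $\mathcal{L}\log(1+x^2)=\frac{y(1-x^2)}{(1+x^2)^2}+\bigl(r-\delta-\frac{y}{2}\bigr)\frac{2x}{1+x^2}$, which equals $y$ at $x=0$, because the $x$-coefficients of $\mathcal{L}$ grow linearly in $y$. Your $\psi$ grows only like $\log y$.

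Since $\mathcal{L}\psi/\psi$ is unbounded in both regimes, no time factor rescues it. (Also, in the backward orientation of the statement the time factor must decrease in $t$, so $e^{\lambda(T-t)}$ rather than $e^{\lambda t}$.) Your fallback does not help either. Both rows of $\sigma(z)$ carry $\sqrt{y}$, so with the penalty $|x-x'|^2/\varepsilon+|y-y'|^2/\varepsilon^2$ the inequality $|\sqrt a-\sqrt b|^2\le|a-b|$ bounds the diffusion error only by $C\varepsilon^{-2}|y-y'|$. Penalisation only yields $|y-y'|=o(\varepsilon)$, so this bound is $o(\varepsilon^{-1})$, not $o(1)$.

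The repair is to take
\[
\psi(x,y)=\log(1+x^2+y^2)-\log y\;\ge\;\log 2 .
\]
Its sublevel sets are compact in $\mathbb{R}\times(0,\infty)$, and it satisfies $\mathcal{L}\psi\le C$. Indeed, $\mathcal{L}\log(1+x^2+y^2)$ is bounded above: the terms $y/(1+|z|^2)$ and $xy/(1+|z|^2)$ are bounded, and $-2\kappa y^2/(1+|z|^2)\le 0$. Moreover $\mathcal{L}(-\log y)=\kappa-(2\kappa\theta-\sigma^2)/(2y)\le\kappa$, which is exactly where the Feller condition, and nothing stronger, enters. Then $u-\beta\bigl(\psi+\lambda(T-t)\bigr)$ with $\lambda>C$ lies below $u$ and is a strict subsolution with gap of order $\beta$, and your Steps~2--3 go through.

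For comparison, the paper uses no barrier at $y=0$ and no Feller condition in this proof. It subtracts $\varepsilon(1+x^2+y^2)e^{-Ct}$, adds $\gamma(t-T)$, and doubles all variables with $\alpha(x-\xi)^2+\beta(y-\eta)^2+\beta(t-\tau)^2$. It then applies the block Crandall--Ishii lemma (Corollary~4.4 of \cite{ishii2021existence}) and lets $\beta\to\infty$ before $\alpha\to\infty$, so that the $y$-coordinates merge first. That route needs no barrier or Feller condition. Your repaired localisation instead buys:
\begin{itemize}
\item Lipschitz coefficients on the relevant set, so only the standard lemma is needed;
\item no use of $u$, $v$ or the viscosity inequalities at $y=0$ (the paper maximises over $\overline{Q}_T\times\overline{Q}_T$);
\item no need to control Hessian blocks of size $O(\beta)$ against the non-Lipschitz factor $\sqrt{y}$, which is the delicate point of the paper's Step~4.
\end{itemize}
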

\begin{proof}
In this proof, we define the operator \(\mathcal{M}u = -\partial_t u - \mathcal{L}u\). Additionally, we introduce the function \(F: \mathbb{R}^+ \times \mathbb{R}^3 \times \mathbb{S}_3 \to \mathbb{R}\) as follows:
\[
F(y, p, \mathcal{X}) = -p_3 - \frac{y}{2} \mathcal{X}_1 - \langle \rho \sigma y \mathcal{X}_2, \textbf{1} \rangle - \operatorname{Tr}\left( \frac{1}{2} \sigma^2 y \mathcal{X}_3 \right) - \left( r -\delta - \frac{y}{2} \right) p_1 - \kappa (\theta - y) p_2,
\]
where \(\textbf{1}=(1,1)^{T}\) and \(p = (p_1, p_2, p_3) \in \mathbb{R}^3\).
With these notations, we have
\[
\mathcal{M}\phi(x, y, t) = F(y, D\phi(x, y, t), D^2\phi(x, y, t)).
\]
We divide our proof into four steps.

\textbf{Step 1:} We define the functions \(\rho: \mathbb{R} \times \mathbb{R}^+ \to \mathbb{R}\) and \(f: Q_T \to \mathbb{R}\) as follows:
\[
\rho(x,y) = 1 + x^2 + y^2, \quad f(t,x,y) = \rho(x,y) e^{-Ct},
\]
where \(C\) is a constant that will be specified subsequently. A straightforward calculation reveals that \(\partial_t f = -C f\) on \(Q_T\). It is easily verified that there exists a constant \(C_1 > 0\) such that
\[
\mathcal{L}f \leq C_1 f \quad \text{on } Q_T.
\]
Notably, the constant \(C_1\) is independent of both \(C\) and \(t\). Consequently, we obtain
\[
(\mathcal{M} + r)f \geq (C - C_1 + r)f \quad \text{on } Q_T.
\]
By selecting \(C = C_1 - r\), we ensure that
\begin{equation}
(\mathcal{M} + r)f \geq 0 \quad \text{on } Q_T. \label{eq:mf-ineq}
\end{equation}

For any \(\varepsilon > 0\), we introduce the function \(u_\varepsilon\) on \(\overline{Q}_T\) defined by
\[
u_\varepsilon(t,x,y) = u(t,x,y) - \varepsilon f(t,x,y).
\]
Given that \(f\) is a classical solution of \eqref{eq:mf-ineq},
and \(f > 0\) on \(Q_T\), it follows that \(u_\varepsilon\) is a viscosity subsolution to the equation
\[
\min\{(\mathcal{M} + r)u_\varepsilon, u_\varepsilon - G\} = 0 \quad \text{on } Q_T.
\]
It is clear that
\begin{equation}\label{eq:53}
u_\varepsilon(t,x,y) \le K - \varepsilon e^{-CT}(1+x^2+y^2).
\end{equation}

\textbf{Step 2:}
Given that \(u_\varepsilon \leq u\), it immediately follows that
\[
\sup_{\overline{Q}_T} (u_\varepsilon - v) \leq \sup_{\overline{Q}_T} (u - v).
\]
It is straightforward to verify that if \(\sup_{\overline{Q}_T} (u_\varepsilon - v) \leq 0\) holds for all \(\varepsilon > 0\), then \(\sup_{\overline{Q}_T} (u - v) \leq 0\). Thus, to establish \eqref{eq:52}, it is sufficient to show that
\begin{equation}
u_\varepsilon \leq v \quad \text{on } \overline{Q}_T
\label{eq:dpcm}
\end{equation}
for every \(\varepsilon > 0\). We proceed by contradiction. Suppose there exists \(\varepsilon_0 > 0\) such that
\begin{equation}
\sup_{(t,x,y) \in \overline{Q}_T} \bigl( u_{\varepsilon_0}(t,x,y) - v(t,x,y) \bigr) > 0.
\label{eq:54}
\end{equation}
For \(\gamma > 0\), introduce the function \(u_{\varepsilon_0}^\gamma(t,x,y) := u_{\varepsilon_0}(t,x,y) + \gamma(t - T)\). This satisfies \(u_{\varepsilon_0}^\gamma \leq u_{\varepsilon_0}\) on \(\overline{Q}_T\) and is a subsolution to
\[
\min \bigl\{ (\mathcal{M} + r) u_{\varepsilon_0}^\gamma(t,x,y) + \gamma, \, u_{\varepsilon_0}^\gamma(t,x,y) - G \bigr\} = 0 \quad \text{in } Q_T.
\]
To confirm \eqref{eq:dpcm}, it suffices to show that \(u_{\varepsilon_0}^\gamma(t,x,y) \leq v(t,x,y)\) for all \((t,x,y) \in \overline{Q}_T\) and every \(\gamma > 0\). Therefore, we may assume, by substituting \(u_{\varepsilon_0}\) with \(u_{\varepsilon_0}^\gamma\) when necessary, that \(u_{\varepsilon_0}\) is a subsolution of
\[
\min \bigl\{ (\mathcal{M} + r) u + \gamma, \, u - G \bigr\} = 0 \quad \text{in } Q_T.
\]
Let \(\alpha > 0\) and \(\beta > 0\), and consider the auxiliary function
\[
\Phi(t,x,y,\tau,\xi,\eta) := u_{\varepsilon_0}(t,x,y) - v(\tau,\xi,\eta) - \alpha(x - \xi)^2 - \beta(y - \eta)^2 - \beta(t - \tau)^2
\]
defined on \(\overline{Q}_T \times \overline{Q}_T\). From \eqref{eq:53}, combined with the fact that \( |v| \leq K\) and \(u, v \in C(\overline{Q}_T)\), the function \(\Phi\) attains a global maximum. Denote a point of maximum by \((t_{\alpha\beta}, x_{\alpha\beta}, y_{\alpha\beta}, \tau_{\alpha\beta}, \xi_{\alpha\beta}, \eta_{\alpha\beta})\). From \eqref{eq:53}, it follows that as \((\alpha, \beta) \to (\infty, \infty)\), the sequence of maximum points remains bounded. Observing that
\begin{equation}
\label{eq:max_ine_1}
\begin{aligned}
\max_{\overline{Q}_T} (u_{\varepsilon_0} - v) &= \max_{(t,x,y) \in \overline{Q}_T} \Phi(t,x,y,t,x,y) \\
&\leq \Phi(t_{\alpha\beta}, x_{\alpha\beta}, y_{\alpha\beta}, \tau_{\alpha\beta}, \xi_{\alpha\beta}, \eta_{\alpha\beta}) \\
&\leq u_{\varepsilon_0}(t_{\alpha\beta}, x_{\alpha\beta}, y_{\alpha\beta}) - v(\tau_{\alpha\beta}, \xi_{\alpha\beta}, \eta_{\alpha\beta}),
\end{aligned}
\end{equation}
and from \eqref{eq:54}, we deduce that
\begin{equation}
\sup_{\alpha > 1, \, \beta > 1} \Bigl( \alpha(x_{\alpha\beta} - \xi_{\alpha\beta})^2 + \beta(y_{\alpha\beta} - \eta_{\alpha\beta})^2 + \beta(t_{\alpha\beta} - \tau_{\alpha\beta})^2 \Bigr) < \infty.
\label{eq:sup-bound}
\end{equation}
Moreover, given the boundedness of \((t_{\alpha\beta}, x_{\alpha\beta}, y_{\alpha\beta}, \tau_{\alpha\beta}, \xi_{\alpha\beta}, \eta_{\alpha\beta})\) and \eqref{eq:sup-bound}, for any sequences \(\{\alpha_k\}\) and \(\{\beta_k\}\) with \(\alpha_k \to \infty\) and \(\beta_k \to \infty\), there exists a subsequence such that, as \((\alpha, \beta) \to (\infty, \infty)\),
\[
(t_{\alpha\beta}, x_{\alpha\beta}, y_{\alpha\beta}, \tau_{\alpha\beta}, \xi_{\alpha\beta}, \eta_{\alpha\beta}) \to (t_0, x_0, y_0, t_0, x_0, y_0).
\]
From \eqref{eq:max_ine_1}, we also observe that
\begin{align*}
\max_{\overline{Q}_T} (u_{\varepsilon_0} - v) &\leq \lim_{(\alpha,\beta) \to (\infty,\infty)} \Phi(t_{\alpha\beta}, x_{\alpha\beta}, y_{\alpha\beta}, \tau_{\alpha\beta}, \xi_{\alpha\beta}, \eta_{\alpha\beta}) \\
&\leq u_{\varepsilon_0}(t_0, x_0, y_0) - v(t_0, x_0, y_0) \leq \max_{\overline{Q}_T} (u_{\varepsilon_0} - v).
\end{align*}
Consequently, \((t_0, x_0, y_0)\) is a maximum point of \(u_{\varepsilon_0} - v\), and from \eqref{eq:54}, this yields \(t_0 \neq T\). We can thus find the sequences \(\{\alpha_k\}_{k \in \mathbb{N}}\) and \(\{\beta_m\}_{m \in \mathbb{N}}\) such that
\begin{equation}
\begin{cases}
\displaystyle \lim_{k \to \infty} \lim_{m \to \infty} (t_{\alpha_k \beta_m}, x_{\alpha_k \beta_m}, y_{\alpha_k \beta_m}, \tau_{\alpha_k \beta_m}, \xi_{\alpha_k \beta_m}, \eta_{\alpha_k \beta_m}) = (t_0, x_0, y_0, t_0, x_0, y_0), \\[12pt]
\displaystyle \lim_{k \to \infty} \lim_{m \to \infty} \Bigl[ \alpha_k (x_{\alpha_k \beta_m} - \xi_{\alpha_k \beta_m})^2 + \beta_m (y_{\alpha_k \beta_m} - \eta_{\alpha_k \beta_m})^2 + \beta_m (t_{\alpha_k \beta_m} - \tau_{\alpha_k \beta_m})^2 \Bigr] = 0, \\[12pt]
\displaystyle \lim_{k \to \infty} \lim_{m \to \infty} u_{\varepsilon_0}(t_{\alpha_k \beta_m}, x_{\alpha_k \beta_m}, y_{\alpha_k \beta_m}) = u_{\varepsilon_0}(t_0, x_0, y_0), \\[8pt]
\displaystyle \lim_{k \to \infty} \lim_{m \to \infty} v(\tau_{\alpha_k \beta_m}, \xi_{\alpha_k \beta_m}, \eta_{\alpha_k \beta_m}) = v(t_0, x_0, y_0).
\end{cases}
\label{eq:55}
\end{equation}
Additionally, since \(t_0 \neq T\), we may assume that for all \(\alpha_k\) and \(\beta_m\),
\begin{equation}
(t_{\alpha_k \beta_m}, x_{\alpha_k \beta_m}, y_{\alpha_k \beta_m}) \in Q_T, \quad (\tau_{\alpha_k \beta_m}, \xi_{\alpha_k \beta_m}, \eta_{\alpha_k \beta_m}) \in Q_T.
\label{eq:56}
\end{equation}
Furthermore, given that \(u_{\varepsilon_0}(t_0, x_0, y_0) > v(t_0, x_0, y_0)\) by \eqref{eq:54}, and in view of \eqref{eq:56}, we may assume that for all \(\alpha_k\) and \(\beta_m\),
\begin{equation}
\bigl( u_{\varepsilon_0} - G \bigr)(t_{\alpha_k \beta_m}, x_{\alpha_k \beta_m}, y_{\alpha_k \beta_m}) > \bigl( v - G \bigr)(\tau_{\alpha_k \beta_m}, \xi_{\alpha_k \beta_m}, \eta_{\alpha_k \beta_m}).
\label{eq:57}
\end{equation}
For fixed \(k, m \in \mathbb{N}\), we simplify notations by setting \(\alpha = \alpha_k\) and \(\beta = \beta_m\), and prepare to apply Corollary 4.4 from \cite{ishii2021existence} to the pair \(u_{\varepsilon_0}\) and \(-v\). To this end, we specify
\[
\begin{cases}
n = 3, & n_1 = 1, \quad n_2 = 2, \\[4pt]
U = V = Q_T, & U_1 = V_1 = \mathbb{R}, \quad U_2 = V_2 = \mathbb{R}^+ \times (0,T), \\[4pt]
\hat{\theta} = (t_{\alpha\beta}, x_{\alpha\beta}, y_{\alpha\beta}), & \hat{\zeta} = (\tau_{\alpha\beta}, \xi_{\alpha\beta}, \eta_{\alpha\beta}).
\end{cases}
\]
 We define the smooth functions \(\varphi, \tilde{\varphi}_i \in C^2(U \times V)\) and \(\tilde{\varphi}_i \in C^2(U_i \times V_i)\) for \(i=1,2\) as
\[
\begin{cases}
\varphi(t,x,y,\tau,\xi,\eta) = \alpha(x - \xi)^2 + \beta(y - \eta)^2 + \beta(t - \tau)^2, \\[4pt]
\tilde{\varphi}_1(t,x,y,\tau,\xi,\eta) = \varphi_1(t,x,y,\tau,\xi,\eta) = \alpha(x - \xi)^2, \\[4pt]
\tilde{\varphi}_2(y,t,\eta,\tau) = \varphi_2(y,t,\eta,\tau) = \beta(y - \eta)^2 + \beta(t - \tau)^2,
\end{cases}
\]
and set
\[
\mathcal{A}_1 = D^2 \tilde{\varphi}_1(\hat{\theta}, \hat{\zeta}), \quad \mathcal{A}_2 = D^2 \tilde{\varphi}_2(\hat{\theta}, \hat{\zeta}), \quad \mathcal{A} = D^2 \varphi(\hat{\theta}, \hat{\zeta}).
\]
Direct computation yields

\begin{equation}
\begin{gathered}
\begin{aligned}
\mathcal{A}_1 &= 2\alpha
\begin{pmatrix}
1 & 0 & 0 & -1 & 0 & 0 \\
0 & 0 & 0 & 0 & 0 & 0 \\
0 & 0 & 0 & 0 & 0 & 0 \\
-1 & 0 & 0 & 1 & 0 & 0 \\
0 & 0 & 0 & 0 & 0 & 0 \\
0 & 0 & 0 & 0 & 0 & 0
\end{pmatrix},
&
\mathcal{A}_2 &= 2\beta
\begin{pmatrix}
0 & 0 & 0 & 0 & 0 & 0 \\
0 & 1 & 0 & 0 & -1 & 0 \\
0 & 0 & 1 & 0 & 0 & -1 \\
0 & 0 & 0 & 0 & 0 & 0 \\
0 & -1 & 0 & 0 & 1 & 0 \\
0 & 0 & -1 & 0 & 0 & 1
\end{pmatrix},
\end{aligned} \\[8pt]
\mathcal{A} = \mathcal{A}_1 + \mathcal{A}_2, \quad
\mathcal{A}^2 = \mathcal{A}_1^2 + \mathcal{A}_2^2 = 4\alpha \mathcal{A}_1 + 4\beta \mathcal{A}_2, \quad
|\mathcal{A}_1| = 4\alpha, \quad |\mathcal{A}_2| = 4\beta.
\end{gathered}
\label{eq:58}
\end{equation}

We choose
\[
\varepsilon_1 = \frac{1}{2\alpha}, \quad \varepsilon_2 = \frac{1}{2\beta}, \quad \lambda_i = \frac{1}{\varepsilon_i} + |\mathcal{A}_i| \quad \text{for } i=1,2,
\]
and note that
\[
\lambda_1 = 6\alpha, \quad \lambda_2 = 6\beta.
\]
Let \(\varepsilon = (\varepsilon_1, \varepsilon_2)\) and \(\lambda = (\lambda_1, \lambda_2)\). Noting that
\begin{equation}
\mathcal{A}_\varepsilon = \mathcal{A} + \varepsilon_1 \cdot 4\alpha \mathcal{A}_1 + \varepsilon_2 \cdot 4\beta \mathcal{A}_2 = 3\mathcal{A},
\label{eq:59}
\end{equation}
we define \(\varphi_\varepsilon\) on \(Q_T \times Q_T\) by
\[
\varphi_\varepsilon(\theta, \zeta) = \varphi(\theta, \zeta) + 2 \varphi(\theta - \hat{\theta}, \zeta - \hat{\zeta}),
\]
and observe that
\[
D^2 \varphi_\varepsilon(\hat{\theta}, \hat{\zeta}) = 3\mathcal{A} = \mathcal{A}_\varepsilon.
\]
Furthermore, condition (4.3) of Theorem 4.2 in \cite{ishii2021existence} is satisfied.

\textbf{Step 3:} We define \(w(\theta, \zeta) = u_{\varepsilon_0}(\theta) - v(\zeta)\). Select a compact neighbourhood \(W\) of \((\hat{\theta}, \hat{\zeta})\) within \(Q_T \times Q_T\), and apply Corollary 4.4 from \cite{ishii2021existence} to obtain sequences \(\{(\theta_j, \zeta_j) \subset Q_T \times Q_T\}\), \(\{(\mathcal{X}_j, \mathcal{Y}_j) \subset \mathbb{S}_3 \times \mathbb{S}_3\}\), and \(\{\varphi_j\} \subset C^2(Q_T \times Q_T)\) satisfying condition (4.4) in Theorem 4.2 of \cite{ishii2021existence}, with \(-v\) replacing \(v\). For each \(j \in \mathbb{N}\), condition (4.4) in Theorem 4.2 of \cite{ishii2021existence} implies
\[
\max_{Q_T \times Q_T} (w - \varphi_j) = (w - \varphi_j)(\theta_j, \zeta_j),
\qquad
(D \varphi_j(\theta_j, \zeta_j), \mathcal{X}_j) \in J^{2,+} u_{\varepsilon_0}(\theta_j),
\]
\[
-(D \varphi_j(\theta_j, \zeta_j), \mathcal{Y}_j) \in J^{2,-} v(\zeta_j).
\]
Denote
\[
\theta_j = (t_j, x_j, y_j) \in (0, T) \times \mathbb{R} \times \mathbb{R}^+,
\qquad
\zeta_j = (\tau_j, \xi_j, \eta_j) \in (0, T) \times \mathbb{R} \times \mathbb{R}^+,
\]
and
\[
D_\theta \varphi_j(\theta_j, \zeta_j) = p_j,
\quad
D_\zeta \varphi_j(\theta_j, \zeta_j) = q_j.
\]
This yields
\begin{equation}
\begin{cases}
\min \bigl\{ F(y_j, p_j, \mathcal{X}_j) + r u_{\varepsilon_0}(t_j, x_j, y_j) + \gamma, \, (u_{\varepsilon_0} - G)(t_j, x_j, y_j) \bigr\} \leq 0, \\[8pt]
\min \bigl\{ F(\eta_j, -q_j, -\mathcal{Y}_j) + r v(\tau_j, \xi_j, \eta_j), \, (v - G)(\tau_j, \xi_j, \eta_j) \bigr\} \geq 0.
\end{cases}
\label{ref:510}
\end{equation}
Additionally, from condition (4.4) in Theorem 4.2 of \cite{ishii2021existence}, we have
\begin{equation}
\lim_{j \to \infty} (\theta_j, \zeta_j) = (\hat{\theta}, \hat{\zeta}),
\qquad
\lim_{j \to \infty} \varphi_j = \varphi_\varepsilon \quad \text{in } C^2(U \times V).
\label{511}
\end{equation}
It follows that
\begin{equation}
\begin{cases}
(w - \varphi_\varepsilon)(\hat{\theta}, \hat{\zeta}) = \max_{Q_T \times Q_T} (w - \varphi_\varepsilon) = \lim_{j \to \infty} (w - \varphi_j)(\theta_j, \zeta_j), \\[10pt]
u_{\varepsilon_0}(\hat{\theta}) = \lim_{j \to \infty} u_{\varepsilon_0}(\theta_j),
\qquad
v(\hat{\zeta}) = \lim_{j \to \infty} v(\zeta_j).
\end{cases}
\label{eq:512}
\end{equation}
These relations, combined with \eqref{eq:57}, permit us to assume, after relabelling \(\theta_j\), \(\zeta_j\), and extracting subsequences if needed, that for all \(j \in \mathbb{N}\),
\[
(u_{\varepsilon_0} - G)(t_j, x_j, y_j) > (v - G)(\tau_j, \xi_j, \eta_j).
\]
Together with \eqref{ref:510}, this implies
\begin{equation}
F(y_j, p_j, \mathcal{X}_j) + r u_{\varepsilon_0}(t_j, x_j, y_j) + \gamma \leq 0 \leq F(\eta_j, -q_j, -\mathcal{Y}_j) + r v(\tau_j, \xi_j, \eta_j) \quad \text{for all } j \in \mathbb{N}.
\label{ref:513}
\end{equation}
Furthermore, the inequality
\[
-E_\lambda \leq
\begin{pmatrix}
\mathcal{X}_j & 0 \\
0 & \mathcal{Y}_j
\end{pmatrix}
\leq D^2 \varphi_j(\theta_j, \zeta_j),
\]
where \(E_\lambda \in \mathbb{S}_6\) is the symmetric, positive semidefinite block-diagonal matrix
\[
E_\lambda \;=\;
\begin{pmatrix}
\lambda_1 I_1 & 0 & 0 & 0 \\
0 & \lambda_2 I_2 & 0 & 0 \\
0 & 0 & \lambda_1 I_1 & 0 \\
0 & 0 & 0 & \lambda_2 I_2
\end{pmatrix}
\;=\;
\begin{pmatrix}
6\alpha\, I_1 & 0 & 0 & 0 \\
0 & 6\beta\, I_2 & 0 & 0 \\
0 & 0 & 6\alpha\, I_1 & 0 \\
0 & 0 & 0 & 6\beta\, I_2
\end{pmatrix}.
\]
Then we combine with the convergence
\[
\lim_{j \to \infty} (\theta_j, \zeta_j) = (\hat{\theta}, \hat{\zeta}),
\qquad
\lim_{j \to \infty} \varphi_j = \varphi_\varepsilon \quad \text{in } C^2(Q_T \times Q_T),
\]
ensures that \(\{\mathcal{X}_j, \mathcal{Y}_j\}\) is bounded in \(\mathbb{S}_3 \times \mathbb{S}_3\). Thus, by extracting a subsequence if necessary, we have for some \((\mathcal{X}_{\alpha\beta}, \mathcal{Y}_{\alpha\beta}) \in \mathbb{S}_3 \times \mathbb{S}_3\),
\[
\lim_{j \to \infty} (\mathcal{X}_j, \mathcal{Y}_j) = (\mathcal{X}_{\alpha\beta}, \mathcal{Y}_{\alpha\beta}).
\]
The matrix inequality then gives
\begin{equation}
-E_\lambda \leq
\begin{pmatrix}
\mathcal{X}_{\alpha\beta} & 0 \\
0 & \mathcal{Y}_{\alpha\beta}
\end{pmatrix}
\leq D^2 \varphi_\varepsilon(\hat{\theta}, \hat{\zeta}) = \mathcal{A}_\varepsilon.
\label{ref:514}
\end{equation}
Taking the limit as \(j \to \infty\) in \eqref{ref:513}, we obtain
\begin{equation}
F(y_{\alpha\beta}, p_{\alpha\beta}, \mathcal{X}_{\alpha\beta}) + r u_{\varepsilon_0}(t_{\alpha\beta}, x_{\alpha\beta}, y_{\alpha\beta}) + \gamma \leq 0 \leq F(\eta_{\alpha\beta}, -q_{\alpha\beta}, -\mathcal{Y}_{\alpha\beta}) + r v(\tau_{\alpha\beta}, \xi_{\alpha\beta}, \eta_{\alpha\beta}),
\label{ref:515}
\end{equation}
where \(p_{\alpha\beta} := D_\theta \varphi(t_{\alpha\beta}, x_{\alpha\beta}, y_{\alpha\beta}, \tau_{\alpha\beta}, \xi_{\alpha\beta}, \eta_{\alpha\beta})\) and \(q_{\alpha\beta} := D_\zeta \varphi(t_{\alpha\beta}, x_{\alpha\beta}, y_{\alpha\beta}, \tau_{\alpha\beta}, \xi_{\alpha\beta}, \eta_{\alpha\beta})\).

\textbf{Step 4:} We aim to show that \eqref{ref:515}, together with \eqref{ref:514} and \eqref{eq:55}, leads to the required contradiction. To this end, conforming to the partition \(n = n_1 + n_2 = 1 + 2\) fixed in Step~2, we decompose \(\mathcal{X}_{\alpha\beta}, \mathcal{Y}_{\alpha\beta} \in \mathbb{S}_3\) into blocks as
\[
\mathcal{X}_{\alpha\beta} =
\begin{pmatrix}
X_{1,\alpha\beta} & X_{2,\alpha\beta} \\
X_{2,\alpha\beta}^{\top} & X_{3,\alpha\beta}
\end{pmatrix},
\qquad
\mathcal{Y}_{\alpha\beta} =
\begin{pmatrix}
Y_{1,\alpha\beta} & Y_{2,\alpha\beta} \\
Y_{2,\alpha\beta}^{\top} & Y_{3,\alpha\beta}
\end{pmatrix},
\]
where \(X_{1,\alpha\beta}, Y_{1,\alpha\beta} \in \mathbb{R}\) are the leading (top-left) \(1\times 1\) diagonal blocks, \(X_{3,\alpha\beta}, Y_{3,\alpha\beta} \in \mathbb{S}_2\) are the trailing (bottom-right) \(2\times 2\) diagonal blocks, and \(X_{2,\alpha\beta}, Y_{2,\alpha\beta} \in \mathbb{R}^{1\times 2}\) are the corresponding off-diagonal blocks.

By evaluating the quadratic forms linked to the matrices in \eqref{ref:514} at \((\xi, 0, \eta, 0) \in \mathbb{R}^6\), where \(\xi, \eta \in \mathbb{R}\), and utilizing \eqref{ref:514}, \eqref{eq:58}, and \eqref{eq:59}, we obtain
\[
-6\alpha I_2 \leq
\begin{pmatrix}
X_{1,\alpha\beta} & 0 \\
0 & Y_{1,\alpha\beta}
\end{pmatrix}
\leq 6\alpha
\begin{pmatrix}
I_1 & -I_1 \\
-I_1 & I_1
\end{pmatrix},
\]
\[
\begin{pmatrix}
0 & 0 \\
0 & 0
\end{pmatrix}
\leq
\begin{pmatrix}
X_{2,\alpha\beta} & 0 \\
0 & Y_{2,\alpha\beta}
\end{pmatrix}
\leq
\begin{pmatrix}
0 & 0 \\
0 & 0
\end{pmatrix},
\]
\[
-6\beta I_4 \leq
\begin{pmatrix}
X_{3,\alpha\beta} & 0 \\
0 & Y_{3,\alpha\beta}
\end{pmatrix}
\leq 6\beta
\begin{pmatrix}
I_2 & -I_2 \\
-I_2 & I_2
\end{pmatrix}.
\]
Consequently, for each \(j \in \{1, 3\}\),
\[
X_{j,\alpha\beta} + Y_{j,\alpha\beta} \leq 0,
\]
and
\begin{equation}
\label{eq:X_Y_equal}
X_{2,\alpha\beta} = Y_{2,\alpha\beta} = 0.
\end{equation}
From these inequalities, by extracting a subsequence if needed, we assume that for each \(k \in \mathbb{N}\), as \(m \to \infty\), the sequence \(\{(X_{j,\alpha_k \beta_m}, Y_{j,\alpha_k \beta_m})\}\) converges to some \((X_{j,k}, Y_{j,k})\). The preceding inequalities then imply
\begin{equation}
X_{1,k} + Y_{1,k} \leq 0, \quad X_{3,k} + Y_{3,k} \leq 0, \quad X_{2,k} = Y_{2,k} = 0.
\label{ref:516}
\end{equation}
Applying \eqref{eq:55}, there exists \((\bar{t}_k, \bar{x}_k, \bar{y}_k) \in Q_T\) and \(\bar{\xi}_k \in \mathbb{R}\) such that
\begin{equation}
\begin{cases}
\displaystyle
\lim_{m \to \infty} (t_{\alpha_k \beta_m}, x_{\alpha_k \beta_m}, y_{\alpha_k \beta_m}, \tau_{\alpha_k \beta_m}, \xi_{\alpha_k \beta_m}, \eta_{\alpha_k \beta_m}) = (\bar{t}_k, \bar{x}_k, \bar{y}_k, \bar{t}_k, \bar{\xi}_k, \bar{y}_k), \\[10pt]
\displaystyle
\lim_{m \to \infty} \beta_m (y_{\alpha_k \beta_m} - \eta_{\alpha_k \beta_m})^2 = 0.
\end{cases}
\label{ref:517}
\end{equation}
Note that \(\bar{x}_k \neq \bar{\xi}_k\). Now we observe that
\begin{align*}
p_{\alpha\beta} &= D_\theta \varphi(\hat{\theta}, \hat{\zeta}) = 2 \bigl( \alpha(x_{\alpha\beta} - \xi_{\alpha\beta}), \beta(y_{\alpha\beta} - \eta_{\alpha\beta}), \beta(t_{\alpha\beta} - \tau_{\alpha\beta}) \bigr), \\
q_{\alpha\beta} &= D_\zeta \varphi(\hat{\theta}, \hat{\zeta}) = -2 \bigl( \alpha(x_{\alpha\beta} - \xi_{\alpha\beta}), \beta(y_{\alpha\beta} - \eta_{\alpha\beta}), \beta(t_{\alpha\beta} - \tau_{\alpha\beta}) \bigr).
\end{align*}
Furthermore,
\begin{align*}
F(y_{\alpha\beta}, p_{\alpha\beta}, \mathcal{X}_{\alpha\beta})
&= -2\beta(t_{\alpha\beta} - \tau_{\alpha\beta}) - \frac{y_{\alpha\beta}}{2} X_{1,\alpha\beta} - \langle \rho \sigma y_{\alpha\beta} X_{2,\alpha\beta}, \textbf{1} \rangle\\
&\quad - \operatorname{Tr}\left(\frac{1}{2} \sigma^2 y_{\alpha\beta} X_{3,\alpha\beta}\right) - 2\alpha \left( r-\delta - \frac{y_{\alpha\beta}}{2} \right) (x_{\alpha\beta} - \xi_{\alpha\beta}) - 2\beta \kappa (\theta - y_{\alpha\beta}) (y_{\alpha\beta} - \eta_{\alpha\beta}), \\[12pt]
F(\eta_{\alpha\beta}, -q_{\alpha\beta}, -\mathcal{Y}_{\alpha\beta})
&= -2\beta(t_{\alpha\beta} - \tau_{\alpha\beta}) + \frac{\eta_{\alpha\beta}}{2} Y_{1,\alpha\beta} + \langle \rho \sigma \eta_{\alpha\beta}  Y_{2,\alpha\beta}, \textbf{1} \rangle \\
&\quad + \operatorname{Tr}\left(\frac{1}{2} \sigma^2 \eta_{\alpha\beta} Y_{3,\alpha\beta}\right) - 2\alpha \left( r-\delta  - \frac{\eta_{\alpha\beta}}{2} \right) (x_{\alpha\beta} - \xi_{\alpha\beta}) - 2\beta \kappa (\theta - \eta_{\alpha\beta}) (y_{\alpha\beta} - \eta_{\alpha\beta}).
\end{align*}

Combining these expressions with \eqref{eq:X_Y_equal}, we have
\begin{equation}
\begin{aligned}
&F(\eta_{\alpha\beta}, -q_{\alpha\beta}, -\mathcal{Y}_{\alpha\beta}) - F(y_{\alpha\beta}, p_{\alpha\beta}, \mathcal{X}_{\alpha\beta}) + r \bigl( v(\tau_{\alpha\beta}, \xi_{\alpha\beta}, \eta_{\alpha\beta}) - u_{\varepsilon_0}(t_{\alpha\beta}, x_{\alpha\beta}, y_{\alpha\beta}) \bigr) \\
&= \frac{\eta_{\alpha\beta} Y_{1,\alpha\beta} + y_{\alpha\beta} X_{1,\alpha\beta}}{2} + \operatorname{Tr}\left(\frac{1}{2} \sigma^2 ( \eta_{\alpha\beta} Y_{3,\alpha\beta} + y_{\alpha\beta} X_{3,\alpha\beta}) \right) \\
&\quad - 2\alpha \left( \frac{y_{\alpha\beta} - \eta_{\alpha\beta}}{2} \right) (x_{\alpha\beta} - \xi_{\alpha\beta}) - 2\beta \kappa (y_{\alpha\beta} - \eta_{\alpha\beta})^2 + r \bigl( v(\tau_{\alpha\beta}, \xi_{\alpha\beta}, \eta_{\alpha\beta}) - u_{\varepsilon_0}(t_{\alpha\beta}, x_{\alpha\beta}, y_{\alpha\beta}) \bigr).
\end{aligned}
\label{eq:f-difference}
\end{equation}

Setting \(\alpha = \alpha_k\) and \(\beta = \beta_m\), and taking the limit as \(m \to \infty\),
we have
\begin{align*}
\gamma \leq{}& \frac{1}{2} \bar{y}_k (X_{1,k} + Y_{1,k}) + \operatorname{Tr}\left( \frac{1}{2} \sigma^2 \bar{y}_k (X_{3,k} + Y_{3,k}) \right) + r \bigl( v(\tau_{k}, \xi_{k}, \eta_{k}) - u_{\varepsilon_0}(t_{k}, x_{k}, y_{k}) \bigr) \\
\leq{}& r \bigl( v(\tau_{k}, \xi_{k}, \eta_{k}) - u_{\varepsilon_0}(t_{k}, x_{k}, y_{k}) \bigr).
\end{align*}
Now, from \eqref{eq:54}, taking \(k \to \infty\) yields a contradiction, and the proof is complete.
\end{proof}

\begin{proposition}
\label{lem:unique_viscosity_solution2}
There exists a unique viscosity solution of Equation \eqref{eq:heston_pde}.
\end{proposition}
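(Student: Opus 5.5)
Existence is already settled by Proposition~\ref{lem:viscosity_solution}: the American put value $P^A$ is a viscosity solution of \eqref{eq:heston_pde}, and by the Remark following Lemma~\ref{lem:time_continous} it lies in $C([0,T]\times\mathbb{R}\times\mathbb{R}^+)$. It is also bounded, since $0\le G\le K$ gives $0\le P^A\le K$ directly from \eqref{def: PA_american_option}. So the real content is uniqueness, and we obtain it from the comparison principle Proposition~\ref{lem: u_v_sub_super}. Because that proposition only applies to bounded continuous functions, uniqueness will be asserted within the class of bounded continuous viscosity solutions, which is the class $P^A$ belongs to.

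Let $w\in C([0,T]\times\mathbb{R}\times\mathbb{R}^+)$ be any bounded viscosity solution. We first check that $w$ and $P^A$ are bounded by a common constant $K'$, for example $K'=\max\{K,\sup|w|\}$. The constant $K$ in Proposition~\ref{lem: u_v_sub_super} is only a bound and plays no role in the argument beyond coercivity of $\Phi$ via \eqref{eq:53}, so any common bound will do.

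Definition~\ref{def:viscosity} gives $w(T,\cdot)\le G\le w(T,\cdot)$, hence $w(T,x,y)=G(x)=P^A(T,x,y)$. We then apply Proposition~\ref{lem: u_v_sub_super} twice:
\begin{itemize}
\item with $u=w$ (subsolution) and $v=P^A$ (supersolution), which gives $w\le P^A$ on $Q_T$;
\item with $u=P^A$ (subsolution) and $v=w$ (supersolution), which gives $P^A\le w$ on $Q_T$.
\end{itemize}
Combined with equality at $t=T$, this yields $w\equiv P^A$ on $[0,T]\times\mathbb{R}\times\mathbb{R}^+$.

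The main obstacle is the growth class. The comparison principle is proved only for bounded solutions, so without a growth restriction uniqueness may fail. We therefore state the result as uniqueness among bounded continuous viscosity solutions and verify that $P^A$ lies in that class. If one instead wants uniqueness among solutions of at most quadratic growth, the penalisation $\varepsilon f$ with $f=(1+x^2+y^2)e^{-Ct}$ in Step 1 of the comparison proof would have to be strengthened to a faster-growing weight, for instance $(1+x^2+y^2)^{2}e^{-Ct}$. Since $\mathcal{L}$ has coefficients of linear growth, $\mathcal{L}f\le C_1 f$ still holds for such an $f$, and the rest of the argument is unchanged. We would note this extension only as a remark.
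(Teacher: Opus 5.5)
Your proof is correct and follows the paper's argument. Existence comes from Proposition~\ref{lem:viscosity_solution}, and uniqueness comes from applying the comparison principle Proposition~\ref{lem: u_v_sub_super} in both directions to two bounded continuous solutions with the same terminal data. You also make explicit two points the paper leaves implicit: deriving $w(T,\cdot)=G$ from Definition~\ref{def:viscosity}, and replacing the bound $K$ in the comparison principle by any common bound $K'$. The second point is what is needed later, when uniqueness is applied to $u^*$, which is only known to satisfy $u^*\le K+1$.
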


\begin{proof}
Assume that there exist two functions \(u\) and \(v\), both of which are viscosity solutions satisfying the conditions in Proposition \ref{lem: u_v_sub_super}, and \(u(T,x,y) = v(T,x,y)\). Since a viscosity solution is both a viscosity subsolution and a viscosity supersolution, it follows that \(u \leq v\) in \(Q_T\), and similarly that \(v \leq u\) in \(Q_T\). Therefore, \(u = v\).
\end{proof}

The uniqueness of the viscosity solution established above is the key link between the probabilistic value function \(P^A\) and the PDE regularity theory that follows. In the next section, we construct a second viscosity solution \(u^*\) via the penalty method, but one that carries additional Sobolev regularity. Since the viscosity solution is unique, \(u^*\) must coincide with \(P^A\), and therefore \(P^A\) inherits the regularity of \(u^*\).

\section{Main regularity results}\label{sec:results}
In this section, we present the main regularity result to prove that the American put option value function under the Heston stochastic volatility model satisfies the smooth-fit property and is \(C^{1,2}\)-regular in the continuation region. Our approach is based on transforming the original obstacle problem into a penalised partial differential equation (PDE), which we solve using a sequence of smooth approximations. We establish that the solutions to the penalised PDE belong to a high-regularity Sobolev space, allowing us to extract a convergent subsequence that yields a viscosity solution to the original problem. By leveraging Sobolev embedding theorems, we demonstrate that this viscosity solution possesses H\"older continuous derivatives, thereby confirming the smooth-fit property.

To facilitate the regularity analysis of \( P^A(t, x, y) \), we reverse time by setting \( \tilde{t} = T - t \), and define \( v(\tilde{t}, x, y) = u(T - \tilde{t}, x, y) \). For notational simplicity, we write \(t\) in place of \(\tilde{t}\) hereafter. Substituting into Equation \eqref{eq:heston_pde}, the problem becomes:
\begin{equation}
\label{eq:heston_variational_inequality}
\begin{cases}
\min \{ (\partial_t - \mathcal{L} + r)v, v - G \} = 0, & (t, x, y) \in (0, T] \times \mathbb{R} \times \mathbb{R}^+, \\
v(0, x, y) = G(x), & (x, y) \in \mathbb{R} \times \mathbb{R}^+,
\end{cases}
\end{equation}
where \( \mathcal{L} \) is the Heston operator defined in~\eqref{eq:heston_operator}. This transformation shifts the terminal condition to an initial condition at \( t = 0 \), aligning with standard parabolic PDE frameworks.

\subsection{Mollification and Penalty Approximation}\label{subsec:penalty}

\begin{lemma}
\label{lem:mollified_obstacle}
There exists a sequence of mollified functions \( G^\epsilon(x) \) converging to \( G(x) \) as \( \epsilon \to 0^+ \), satisfying:
\begin{itemize}
\item[(i)] \(\sup_{x \in \mathbb{R}} |\partial_x G^\epsilon| \leq K\), \(\forall \epsilon > 0\),
\item[(ii)] \( \partial_x G^\epsilon \leq     \partial^{2}_{xx} G^\epsilon \), \(\forall \epsilon > 0\).
\end{itemize}
\end{lemma}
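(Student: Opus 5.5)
The plan is to take $G^\epsilon$ to be the standard mollification of $G$ in the $x$-variable:
\[
G^\epsilon(x) := (G * \phi_\epsilon)(x) = \int_{\mathbb{R}} G(x - z)\,\phi_\epsilon(z)\,dz ,
\]
where $\phi_\epsilon(z) = \epsilon^{-1}\phi(z/\epsilon)$ and $\phi \in C_c^\infty(\mathbb{R})$ is nonnegative, supported in $[-1,1]$, with $\int \phi = 1$. Then $G^\epsilon \in C^\infty(\mathbb{R})$, and convolution commutes with distributional derivatives. Both properties (i) and (ii) will be read off from the distributional derivatives of $G$.

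\emph{Step 1: convergence.} The function $G(x) = (K - e^x)^+$ is Lipschitz with constant $K$, since $|G'(x)| = e^x \le K$ for $x < \log K$ and $G' = 0$ for $x > \log K$. Hence
\[
|G^\epsilon(x) - G(x)| \le \int |G(x-z) - G(x)|\,\phi_\epsilon(z)\,dz \le K\epsilon ,
\]
so $G^\epsilon \to G$ uniformly on $\mathbb{R}$ as $\epsilon \to 0^+$. We also get $0 \le G^\epsilon \le K$.

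\emph{Step 2: property (i).} Since $G$ is Lipschitz, its weak derivative is $G' = -e^x \mathbf 1_{\{x<\log K\}}$. Therefore $\partial_x G^\epsilon = G' * \phi_\epsilon$, and
\[
\sup_{x \in \mathbb{R}} |\partial_x G^\epsilon| \le \|G'\|_{L_\infty} \le K .
\]

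\emph{Step 3: property (ii), the only nontrivial point.} We compute the distribution $G'' - G'$.
\begin{itemize}
\item On $\{x < \log K\}$ we have $G = K - e^x$, so $G' = G'' = -e^x$, and the difference vanishes there.
\item On $\{x > \log K\}$ both derivatives vanish.
\item At the kink $x = \log K$, $G'$ jumps from $-K$ to $0$, producing a Dirac mass $K\delta_{\log K}$ in $G''$.
\end{itemize}
Hence, as distributions,
\[
G'' - G' = K\,\delta_{\log K} \ge 0 .
\]
Convolving with $\phi_\epsilon$ gives
\[
\partial^2_{xx} G^\epsilon - \partial_x G^\epsilon = (G'' - G') * \phi_\epsilon = K\,\phi_\epsilon(x - \log K) \ge 0 ,
\]
which is exactly (ii) for every $\epsilon > 0$.

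The identity $G'' - G' = K\delta_{\log K}$ is the key observation. It can be checked by integrating against a test function and integrating by parts separately on $(-\infty, \log K)$ and $(\log K, \infty)$; the boundary term gives the jump $K$ of $G'$. Once this is in place, nonnegativity is preserved by convolution with a nonnegative kernel, and no further obstacle is expected.
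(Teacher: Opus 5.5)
Your proposal is correct and follows essentially the same approach as the paper: the same mollification $G^\epsilon = G*\phi_\epsilon$, the same bound $|\partial_x G^\epsilon|\le K$, and the same key identity $\partial^2_{xx}G^\epsilon = \partial_x G^\epsilon + K\phi_\epsilon(x-\ln K)$. The only difference is that you get this identity by convolving the distributional relation $G''-G' = K\delta_{\ln K}$ with $\phi_\epsilon$, whereas the paper applies the Leibniz rule to the explicit integral $\int_{\max(x-\ln K,-\epsilon)}^{\epsilon}(K-e^{x-z})\phi_\epsilon(z)\,dz$; your derivation is slightly cleaner, and it also makes explicit the convergence $G^\epsilon\to G$ (uniform, at rate $K\epsilon$), which the paper leaves unstated.
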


\begin{proof}
The aim is to construct a smooth approximation \(G^\epsilon(x)\) of the obstacle \(G(x)\) by mollification, while ensuring uniform bounds on its derivatives and preserving the structure of the spatial derivatives in regions where \(G\) is smooth.

We begin by defining the standard mollifier
\begin{equation}
\phi(z) =
\begin{cases}
C \exp\left( -\dfrac{1}{1 -  z^2} \right) & \text{if } z^2 < 1, \\
0 & \text{otherwise},
\end{cases}
\label{eq:mollifier_def}
\end{equation}
where \(C > 0\) is the normalization constant chosen such that \(\int_{\mathbb{R}} \phi(z)\, dz = 1\). The rescaled mollifier is
\begin{equation}
\phi_\epsilon(z) = \frac{1}{\epsilon} \phi\left(\frac{z}{\epsilon} \right),
\label{eq:rescaled_mollifier}
\end{equation}
supported in the interval \(|z| < \epsilon\), integrating to 1, and infinitely differentiable. The mollified obstacle is defined as
\begin{equation}
G^\epsilon(x) = (G \ast \phi_\epsilon)(x) = \int_{\mathbb{R}} G(x-z) \phi_\epsilon(z) \, dz.
\label{eq:mollified_function}
\end{equation}
Due to the compact support of \(\phi_\epsilon\), this simplifies to
\begin{equation}
G^\epsilon(x) = \int_{|z| < \epsilon} G(x-z) \phi_\epsilon(z)\, dz.
\label{eq:mollified_function_simplified}
\end{equation}
Using the explicit expression
\[
G(x-z) =
\begin{cases}
K - e^{x-z} & \text{if } z > x - \ln K, \\
0 & \text{if } z \leq x - \ln K,
\end{cases}
\]
we obtain
\begin{equation}
G^\epsilon(x) = \int_{\max(x - \ln K, -\epsilon)}^{\epsilon } \bigl(K - e^{x-z}\bigr) \phi_\epsilon(z) \, dz .
\label{eq:mollified_explicit}
\end{equation}

Differentiating under the integral sign and applying the Leibniz rule, distinguishing the cases
\[ x > \ln K - \epsilon \quad \text{and} \quad x \leq \ln K - \epsilon,\]
we find
\begin{equation}
\partial_x G^\epsilon(x) =  \int_{\max(x - \ln K , -\epsilon)}^{\epsilon} -e^{x-z } \phi_\epsilon(z) \, dz.
\label{eq:first_derive}
\end{equation}
For the second derivative,
\begin{equation}
\partial_{xx}^2 G^\epsilon(x) =
\begin{cases}
\displaystyle \int_{-\epsilon}^{\epsilon}
-e^{x - z } \phi_\epsilon(z) \, dz,
& \text{if } x \le \ln K - \epsilon, \\[1.5em]
\begin{aligned}
& \displaystyle \int_{x - \ln K }^{\epsilon}
-e^{x - z} \phi_\epsilon(z) \, dz  \\
& +  K \phi_\epsilon( x - \ln K) \,
\end{aligned}
& \text{if } x > \ln K - \epsilon.
\end{cases}
\label{eq:second_derive}
\end{equation}
Thus,
\begin{equation}
\begin{cases}
\partial_{xx}^2 G^\epsilon(x) = \partial_x G^\epsilon(x)
& \text{if } x \leq \ln K - \epsilon, \\[1em]
\partial_{xx}^2 G^\epsilon(x) = \partial_x G^\epsilon(x) +  K \phi_\epsilon( x - \ln K ) \,
& \text{if } x > \ln K - \epsilon.
\end{cases}
\label{eq:compare_g_derive}
\end{equation}
From \eqref{eq:compare_g_derive}, since \(\phi_\epsilon \geq 0\), it follows that \(\partial_x G^\epsilon \leq \partial_{xx}^2 G^\epsilon\). For the bound on the first derivative, note that the integration domain in \eqref{eq:first_derive} satisfies \(z \geq x - \ln K\), so that \(x - z \leq \ln K\) and hence \(e^{x-z} \leq K\) throughout the integral. Since \(\phi_\epsilon \geq 0\) and \(\int \phi_\epsilon = 1\), it follows that \(\sup_{x \in \mathbb{R}} |\partial_x G^\epsilon| \leq K\).
\end{proof}

\begin{lemma}
\label{lem:g_epsilon_con}
Let $g(x)$ be a function on $\mathbb{R}$ satisfying the following conditions:
\begin{itemize}
    \item[(i)] \(g(x) \geq 0\), \(\forall x \in \mathbb{R} \),
    \item[(ii)] \( |\partial_x g| \) is bounded uniformly by some constant \(M\) (in particular, \(g\) is Lipschitz continuous),
    \item[(iii)] \(\partial_x g \leq \partial^2_{xx} g\),
    \item[(iv)] \(G \leq g \leq K + 1\).
\end{itemize}
Then for each \(\epsilon > 0\), let \(g^{\epsilon}\) be the mollified sequence of \(g\) constructed in the same manner as in Lemma~\ref{lem:mollified_obstacle}. Then \(g^{\epsilon}\) satisfies:
\begin{itemize}
    \item[(i)] \(g^{\epsilon} \geq 0\) on \(\mathbb{R}\),
    \item[(ii)] \( |\partial_x g^{\epsilon}| \) is bounded by \(M\),
    \item[(iii)] \(\partial_x g^{\epsilon} \leq \partial^2_{xx} g^{\epsilon}\),
    \item[(iv)] \(G^{\epsilon} \leq g^{\epsilon} \leq K + 1\).
\end{itemize}
\end{lemma}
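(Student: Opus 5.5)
The plan is to treat each of the four properties as a consequence of the basic facts about convolution with the nonnegative, unit-mass kernel $\phi_\epsilon$ of Lemma~\ref{lem:mollified_obstacle}. Throughout, $g^\epsilon = g \ast \phi_\epsilon$. Since $g$ is Lipschitz by (ii), it is absolutely continuous. Its distributional derivative $\partial_x g$ is an $L^\infty$ function with $|\partial_x g|\le M$. Differentiation commutes with convolution, so $\partial_x g^\epsilon = (\partial_x g)\ast\phi_\epsilon$ and $\partial^2_{xx} g^\epsilon = \partial_x g \ast \partial_x\phi_\epsilon$, which is again a smooth function.

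The first steps are immediate. Property (i) holds because $g\ge 0$ and $\phi_\epsilon\ge 0$, so $g^\epsilon(x)=\int g(x-z)\phi_\epsilon(z)\,dz\ge 0$. Property (ii) follows from $|\partial_x g^\epsilon(x)|\le \int |\partial_x g(x-z)|\phi_\epsilon(z)\,dz\le M\int\phi_\epsilon = M$. For property (iv), convolution against a nonnegative kernel preserves order. Since $G\le g\le K+1$ pointwise, we get $G^\epsilon = G\ast\phi_\epsilon \le g\ast\phi_\epsilon = g^\epsilon \le (K+1)\ast\phi_\epsilon = K+1$. Here $G^\epsilon$ is exactly the mollification of Lemma~\ref{lem:mollified_obstacle}, because the same $\phi_\epsilon$ is used.

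The only delicate point is (iii), and it is the step I expect to be the main obstacle. The hypothesis $\partial_x g\le \partial^2_{xx} g$ has to be interpreted correctly when $g$ is merely Lipschitz. The model case is $g=G$, where $\partial^2_{xx}G$ contains a Dirac mass at $\ln K$. I would read (iii) in the sense of distributions: the distribution $\mu := \partial^2_{xx} g - \partial_x g$ is nonnegative, i.e. a nonnegative Radon measure. Convolution with $\phi_\epsilon\ge 0$ maps nonnegative distributions to nonnegative smooth functions, because $(\mu\ast\phi_\epsilon)(x)=\langle \mu, \phi_\epsilon(x-\cdot)\rangle \ge 0$ with a nonnegative test function. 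Linearity of convolution gives
\[
\partial^2_{xx} g^\epsilon - \partial_x g^\epsilon = (\partial^2_{xx} g - \partial_x g)\ast\phi_\epsilon = \mu\ast\phi_\epsilon \ge 0,
\]
which is (iii). If $g$ happens to be $C^2$, the same argument applies with the classical derivatives and the direct integral $\int(\partial^2_{xx}g-\partial_x g)(x-z)\phi_\epsilon(z)\,dz\ge 0$.

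To stay consistent with the explicit computation in Lemma~\ref{lem:mollified_obstacle}, I would add a remark there. Under the distributional reading, $G$ itself satisfies (iii): its second derivative is its a.e. derivative plus $K\delta_{\ln K}$, which matches \eqref{eq:compare_g_derive}. This makes the statement applicable in the later iteration, where $g$ will be a previously constructed obstacle-type function.
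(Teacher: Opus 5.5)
Your proof is correct and is essentially the argument the paper intends. The paper's proof only says the properties are easily verified by the arguments of Lemma~\ref{lem:mollified_obstacle}, and you supply those details: nonnegativity and unit mass of $\phi_\epsilon$, plus the commutation $\partial_x(g\ast\phi_\epsilon)=(\partial_x g)\ast\phi_\epsilon$. Your distributional reading of (iii) covers merely Lipschitz $g$ and recovers the Dirac term $K\phi_\epsilon(x-\ln K)$ of \eqref{eq:compare_g_derive} for $g=G$; in the paper, however, the lemma is only applied to the smooth functions $g=g_n$, so your remark about a previously constructed obstacle-type $g$ is not how it is used there, and the classical identity $\partial^2_{xx}g^\epsilon-\partial_x g^\epsilon=\int(\partial^2_{xx}g-\partial_x g)(x-z)\phi_\epsilon(z)\,dz\ge 0$ already suffices.
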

\begin{proof}
    Referring to the arguments in Lemma~\ref{lem:mollified_obstacle}, we can easily verify.
\end{proof}

\begin{remark}
    We introduce a function $g$
serving as a \emph{template} for the boundary data: $g$ is any function
satisfying the conditions listed in Lemma \ref{lem:g_epsilon_con}, and $g^\epsilon$ denotes
its mollification in the sense of Lemma \ref{lem:mollified_obstacle}. A concrete choice of $g$ will be
specified later in the proof of Proposition~\ref{lem:subsequence_existence}, where we set
$g = g_n$ for each bounded subdomain $D^n_T$.
\end{remark}

\subsection{Auxiliary PDE Tools: Maximum Principles and Barriers}\label{subsec:auxiliary}
In this subsection, we provide foundational PDE tools, such as maximum principles and barriers, to establish solvability and bounds for the penalised equation.

\begin{lemma}\label{lem:nonnegative_v}
Let $D \subset \mathbb{R} \times \mathbb{R}^+$ be a bounded open subset, and let $f(t, x, y) \geq 0$ on $\overline{D_T}$. If $v \in C(\overline{D_T}) \cap C^{1,2}(D_T)$ satisfies
\begin{enumerate}[label=(\roman*)]
  \item $v(t, x, y) \geq 0$ for all $(t, x, y) \in \partial_{P} D_T$,
  \item $(\partial_t - \mathcal{L} + f)\, v(t, x, y) \geq 0$ for all $(t, x, y) \in D_T$,
\end{enumerate}
then $v(t, x, y) \geq 0$ on $\overline{D_T}$.
\end{lemma}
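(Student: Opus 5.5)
This is a weak maximum principle for the parabolic operator $\partial_t - \mathcal{L} + f$ with zeroth-order coefficient $f \ge 0$. I would prove it by the classical perturbation-and-contradiction argument. No uniform parabolicity is needed; only the fact that the second-order part of $\mathcal{L}$ is degenerate elliptic, i.e. its coefficient matrix
\[
\frac{y}{2}\begin{pmatrix} 1 & \rho\sigma \\ \rho\sigma & \sigma^2 \end{pmatrix}
\]
is positive semidefinite for $y \ge 0$. This holds because $|\rho|<1$, and $\bar D$ lies in $\mathbb{R}\times[0,\infty)$. Since $D$ is bounded, all coefficients of $\mathcal{L}$ are bounded on $\overline{D_T}$, and $f$ is bounded there as well, being continuous on a compact set. (I take $f$ continuous, as is implicit in the statement.)

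\textbf{Step 1 (reduction to a strict inequality).} Fix $\lambda>0$ and set $w=e^{-\lambda t}v$. A direct computation gives
\[
(\partial_t-\mathcal{L}+f+\lambda)w = e^{-\lambda t}(\partial_t-\mathcal{L}+f)v \ge 0 \quad \text{in } D_T,
\]
and $w\ge 0$ on $\partial_P D_T$. To obtain a strict inequality, fix $\eta>0$ and consider $w_\eta = w+\eta$. Since $\mathcal{L}$ annihilates constants and $f+\lambda\ge\lambda>0$, we get
\[
(\partial_t-\mathcal{L}+f+\lambda)w_\eta \ge (f+\lambda)\eta \ge \lambda\eta>0 \quad \text{in } D_T,
\]
and $w_\eta\ge\eta>0$ on $\partial_P D_T$. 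It then suffices to show $w_\eta\ge 0$ on $\overline{D_T}$ for every $\eta$, and to let $\eta\to 0$.

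\textbf{Step 2 (contradiction at a negative minimum).} Suppose $\min_{\overline{D_T}} w_\eta<0$. The minimum is attained because $w_\eta$ is continuous on the compact set $\overline{D_T}$. Call the minimiser $(t_0,x_0,y_0)$. It cannot lie on $\partial_P D_T$, where $w_\eta>0$. Hence $(t_0,z_0)\in D\times(0,T]$, which is contained in $D_T$, so $w_\eta$ is $C^{1,2}$ near this point. The first- and second-order conditions at this interior-in-space minimum are
\begin{itemize}
\item $D_z w_\eta = 0$;
\item $D_z^2 w_\eta \ge 0$;
\item $\partial_t w_\eta\le 0$, with equality if $t_0<T$ and the one-sided inequality if $t_0=T$.
\end{itemize}
Because the diffusion matrix $A$ is positive semidefinite, $\operatorname{Tr}(A\,D^2_z w_\eta)\ge0$. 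The drift terms vanish since the gradient is zero. Therefore $\mathcal{L}w_\eta(t_0,z_0)\ge 0$. Combined with $f+\lambda>0$ and $w_\eta(t_0,z_0)<0$, this yields
\[
(\partial_t-\mathcal{L}+f+\lambda)w_\eta(t_0,z_0) < 0,
\]
contradicting Step 1. Hence $w_\eta\ge0$, so $w\ge -\eta$ for all $\eta>0$, giving $w\ge 0$ and therefore $v\ge0$.

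\textbf{Main point to check.} The only delicate issues are the trace inequality $\operatorname{Tr}(AB)\ge0$ for positive semidefinite $A,B$, and the case $t_0=T$, where only the one-sided time derivative is available. The latter is handled by the definition $D_T=(0,T]\times D$ and the fact that $v\in C^{1,2}(D_T)$ includes $t=T$. If a minimum occurred at a boundary point with $y_0=0$, it would lie on $\partial_P D_T$ and be excluded, so the degeneracy of $\mathcal{L}$ plays no role.
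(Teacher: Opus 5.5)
Your proof is correct. Its skeleton matches the paper's: argue by contradiction at a negative minimum, which must lie in $D_T=(0,T]\times D$. There you use the one-sided condition $\partial_t\le 0$, the vanishing spatial gradient, the nonnegative spatial Hessian, and positive semidefiniteness of the diffusion matrix to get $\mathcal{L}\ge 0$ at that point.

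The difference is your preliminary perturbation $w_\eta=e^{-\lambda t}v+\eta$, and it is not cosmetic. The paper works directly with $v$ and concludes $(\partial_t-\mathcal{L}+f)v(t_0,x_0,y_0)\le 0$. This does not contradict hypothesis (ii), which only says this quantity is $\ge 0$, unless the inequality is strict. The only source of strictness in the paper's computation is the term $f v$, which requires $f(t_0,x_0,y_0)>0$. So the paper's argument as written covers $f>0$ at the minimiser. That suffices for its later uses with $f\ge r$ only when $r>0$. Your version replaces $f$ by $f+\lambda\ge\lambda>0$ and makes the right-hand side at least $\lambda\eta>0$. It therefore proves the lemma under the stated hypothesis $f\ge 0$, including $f\equiv 0$.

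One small remark: you never use continuity or boundedness of $f$, only $f(t_0,z_0)\ge 0$ at the minimiser. That extra assumption can be dropped. This matters because the paper applies the lemma with $f=r+p_\epsilon'(\xi)$, where $\xi$ comes from the mean value theorem and need not depend continuously on the point.
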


\begin{proof}
Suppose, for the sake of contradiction, that $v$ takes a negative value at some point in $D_T$. Since $D$ is bounded, $\overline{D_T}$ is compact. As $v \in C(\overline{D_T})$, the function $v$ attains its minimum on $\overline{D_T}$. Because $v \geq 0$ on the parabolic boundary $\partial_P D_T$ and \(v\) is bounded below, the minimum must be attained at some interior point $(t_0, x_0, y_0) \in D_T$ with
\[
  v(t_0, x_0, y_0) < 0.
\]

Since $(t_0, x_0, y_0)$ is a global minimum of $v$ on $\overline{D_T}$, the standard necessary conditions for an interior minimum of a $C^{1,2}$ function yield:
\begin{align}
  &\partial_t v(t_0, x_0, y_0) \leq 0, \label{eq:min_dt}\\[4pt]
  &\partial_x v(t_0, x_0, y_0) = 0, \qquad \partial_y v(t_0, x_0, y_0) = 0, \label{eq:min_grad}\\[4pt]
  &D^2 v(t_0, x_0, y_0) \text{ is nonnegative definite}. \label{eq:min_hessian}
\end{align}
Note that the condition $\partial_t v \leq 0$ in \eqref{eq:min_dt} accounts for the possibility that $t_0 = T$; if $t_0 \in (0, T)$, then $\partial_t v(t_0, x_0, y_0) = 0$. We now evaluate each component of $(\partial_t - \mathcal{L} + f)\,v$ at $(t_0, x_0, y_0)$.

Since $\partial_x v = \partial_y v = 0$ at $(t_0, x_0, y_0)$, the first-order terms of the Heston operator vanish. The second-order part of $\mathcal{L}v$ can be written as $\mathrm{Tr}\bigl(a(t_0, x_0, y_0)\, D^2 v(t_0, x_0, y_0)\bigr)$, where the diffusion matrix is
\[
  a(t_0, x_0, y_0) = \frac{y_0}{2}
  \begin{pmatrix} 1 & \rho\sigma \\[2pt] \rho\sigma & \sigma^2 \end{pmatrix}.
\]
Since $D \subset \mathbb{R} \times \mathbb{R}^+$, we have $y_0 > 0$. Moreover, the matrix $\begin{pmatrix} 1 & \rho\sigma \\ \rho\sigma & \sigma^2 \end{pmatrix}$ is nonnegative definite because its eigenvalues are nonnegative (its determinant is $\sigma^2(1 - \rho^2) \geq 0$ and its trace is $1 + \sigma^2 > 0$). Therefore, $a(t_0, x_0, y_0)$ is nonnegative definite. By \eqref{eq:min_hessian}, $D^2 v(t_0, x_0, y_0)$ is also nonnegative definite. Since the trace of the product of two nonnegative definite symmetric matrices is nonnegative, we conclude that
\[
  \mathrm{Tr}\bigl(a(t_0, x_0, y_0)\, D^2 v(t_0, x_0, y_0)\bigr) \geq 0,
\]
and hence $\mathcal{L}v(t_0, x_0, y_0) \geq 0$.
Collecting the estimates above, we obtain
\[
  (\partial_t - \mathcal{L} + f)\, v(t_0, x_0, y_0)
  = \underbrace{\partial_t v(t_0, x_0, y_0)}_{\leq\, 0}
  - \underbrace{\mathcal{L}v(t_0, x_0, y_0)}_{\geq\, 0}
  + \underbrace{f(t_0, x_0, y_0) \cdot v(t_0, x_0, y_0)}_{\leq\, 0}
  \;\leq\; 0.
\]
This contradicts assumption (ii). Therefore, no such negative value can exist, and $v \geq 0$ on $\overline{D_T}$.
\end{proof}

\begin{definition}[Barrier function]
\label{def:barrier}
Let \((\bar{t},\bar{z}) \in \partial_P D_T\). A function \(w\) is called a \emph{barrier function for the operator
\(\partial_t - \mathcal{L} + r\) at \((\bar{t},\bar{z})\)} if there exists an open set \(V \subset \mathbb{R}^3\)
with \((\bar{t},\bar{z}) \in V\) such that \(w \in C^2(V \cap \overline{D_T};\, \mathbb{R})\) and
\begin{itemize}
\item[(i)]  \(-\bigl(\partial_t - \mathcal{L} + r\bigr) w \leq -1\) in \(V \cap D_T\);
\item[(ii)] \(w > 0\) in \(\bigl(V \cap \overline{D_T}\bigr) \setminus \{(\bar{t},\bar{z})\}\) \quad and \quad \(w(\bar{t},\bar{z}) = 0\).
\end{itemize}
\end{definition}

\begin{proposition}
\label{lem: barrier_existence}
Provided that the domain \( D \) is bounded away from the set \( \{ y = 0 \} \) and  \(\partial_P D\) belongs to the class \(H^{\beta+2}\) with \(\beta \in (0, 1)\), there exists a barrier function for the operator \( \partial_t - \mathcal{L} + r \) at every point \( (t, z) \in \partial_P D_T \).
\end{proposition}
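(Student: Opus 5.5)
The plan is to reduce to the classical barrier construction for uniformly parabolic operators, since on $D$ the Heston operator is uniformly parabolic. Because $D$ is bounded and bounded away from $\{y=0\}$, there are constants $0<y_{\min}\le y\le y_{\max}$ on $\overline D$. The diffusion matrix $a=\tfrac{y}{2}\begin{pmatrix}1&\rho\sigma\\ \rho\sigma&\sigma^2\end{pmatrix}$ then has eigenvalues in $[\lambda,\Lambda]$, with $\lambda=\tfrac{y_{\min}}{2}\lambda_{\min}>0$ (using $\rho\in(-1,1)$, so the determinant $\sigma^2(1-\rho^2)>0$). The drift $b$ from \eqref{eq:def_of_signma_b} is bounded on $\overline D$. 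I will split $\partial_P D_T$ into the bottom $\{0\}\times\overline D$ and the lateral part $(0,T]\times\partial D$, and treat each separately.

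\textbf{Bottom points.} Take $(\bar t,\bar z)=(0,\bar z)$ and set
\[
w(t,z)=A t+|z-\bar z|^2 .
\]
Then $w(0,\bar z)=0$ and $w>0$ elsewhere on $\overline{D_T}$. A direct computation gives
\[
(\partial_t-\mathcal L+r)w\ \ge\ A-2\operatorname{Tr}a-2|b|\,|z-\bar z|\ \ge\ A-C,
\]
where $C$ depends only on $y_{\max}$ and $\operatorname{diam}D$. Choosing $A\ge C+1$ gives condition (i) of Definition~\ref{def:barrier}.

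\textbf{Lateral points.} Take $(\bar t,\bar z)$ with $\bar z\in\partial D$. Since $\partial D\in H^{2+\beta}$, it satisfies a uniform exterior sphere condition, so there is a ball $B_R(z_0)$ with $\overline{B_R(z_0)}\cap\overline D=\{\bar z\}$. Put $d(z)=|z-z_0|$, so $d\ge R$ on $\overline D$ with equality only at $\bar z$. Following the classical construction (Friedman, Lieberman), I use
\[
w(t,z)=k\bigl(R^{-q}-d(z)^{-q}\bigr)+(t-\bar t)^2 .
\]
Since $-d^{-q}$ is increasing in $d$, $w>0$ on $\overline{D_T}\setminus\{(\bar t,\bar z)\}$ and $w(\bar t,\bar z)=0$, which is condition (ii).

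For condition (i), compute on the bounded region $V\cap D_T$:
\[
-\mathcal L(-d^{-q}) = q d^{-q-2}\Bigl[(q+2)\,\tfrac{\langle a(z-z_0),z-z_0\rangle}{d^2}-\operatorname{Tr}a-\langle b,z-z_0\rangle\Bigr].
\]
By uniform ellipticity the bracket is at least $(q+2)\lambda-2\Lambda-|b|\,d$, which is $\ge 1$ once $q$ is large, since $d$ is bounded on $\overline D$. This gives a strictly positive lower bound $c_0$ for the spatial part. The time term contributes $2(t-\bar t)\ge -2T$, and the zero-order term $rw\ge0$ has the right sign. Choosing $k$ large then yields $(\partial_t-\mathcal L+r)w\ge1$, i.e. condition (i).

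\textbf{Main obstacle.} The substantive point is making the exterior-sphere construction quantitative while the operator has variable, non-diagonal coefficients. The key is to fix $q$ using only the ellipticity ratio $\Lambda/\lambda$ and the bound on $|b|\operatorname{diam}D$; these stay finite exactly because $D$ avoids $\{y=0\}$. This is the only place the hypothesis enters: near $y=0$ the constant $\lambda$ would degenerate and no finite $q$ would work. I also need to check that $w\in C^2$ near the boundary point, which holds since $d\ge R>0$ there. The $H^{2+\beta}$ regularity is used only to guarantee the exterior sphere. For corner points $(0,\bar z)$ with $\bar z\in\partial D$, the lateral barrier already applies, because its time term makes $w>0$ for $t\ne\bar t$.
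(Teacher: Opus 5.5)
Your proposal is correct and follows essentially the same route as the paper: at bottom points a barrier that is quadratic in $z$ plus a large multiple of $t$, and at lateral points an exterior-sphere barrier $R_0^{-p}-R^{-p}$ whose exponent is fixed using the uniform ellipticity available away from $\{y=0\}$. The differences are cosmetic. The paper uses the space-time distance $\bigl(|z-\tilde z|^2+(t-\bar t)^2\bigr)^{1/2}$ and an $e^{rt}$ prefactor, whereas you use the purely spatial distance plus a separate $(t-\bar t)^2$ term and invoke $rw\ge 0$ directly.
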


\begin{proof}
Since $D$ is bounded away from $\{y = 0\}$, there exists $\epsilon > 0$ such that $y \geq \epsilon$ for all $(x,y) \in D$. Therefore, the Heston operator
\begin{equation}
\mathcal{L} = \frac{y}{2} \left( \frac{\partial^2}{\partial x^2} + 2 \rho \sigma \frac{\partial^2}{\partial x \partial y} + \sigma^2 \frac{\partial^2}{\partial y^2} \right) + \left( r-\delta - \frac{y}{2} \right) \frac{\partial}{\partial x} + \kappa (\theta - y) \frac{\partial}{\partial y}
\end{equation}
is uniformly parabolic on $D_T$. We consider two cases.

\noindent\textbf{Case 1:}
Suppose the boundary point is of the form \((0, \bar{x}, \bar{y})\). Fix any \(R > 0\) and let
\begin{equation}
\label{eq: V_neigbor_defined}
  V \;:=\; B_R(0, \bar{x}, \bar{y})
\;=\; \bigl\{ (t,x,y) \in \mathbb{R}^3 \;:\; t^2 + (x - \bar{x})^2 + (y - \bar{y})^2 < R^2 \bigr\},
\end{equation}
which is an open neighbourhood of \((0, \bar{x}, \bar{y})\) in \(\mathbb{R}^3\) in the sense of Definition~\ref{def:barrier}.
Define
\begin{equation}
\label{eq:barrier_case1}
w(t, x, y) \;=\; e^{r t} \left[ (x - \bar{x})^2 + (y - \bar{y})^2 + C t \right] \qquad \text{on } V \cap \overline{D_T},
\end{equation}
where \(C > 0\) is a constant to be determined below. Clearly \(w \in C^2(V \cap \overline{D_T})\), \(w(0, \bar{x}, \bar{y}) = 0\), and \(w > 0\) on \((V \cap \overline{D_T}) \setminus \{(0, \bar{x}, \bar{y})\}\), so condition~(ii) of Definition~\ref{def:barrier} holds. A direct computation yields
\begin{align*}
- \left( \partial_t - \mathcal{L} + r \right) w
&= e^{r t} \bigg[\, y(1 + \sigma^2) + 2\left( r - \delta - \tfrac{y}{2} \right)(x - \bar{x}) \\
& \qquad\qquad + 2\kappa (\theta - y)(y - \bar{y}) - C \,\bigg] - 2 r w.
\end{align*}
From \eqref{eq: V_neigbor_defined}, the bracketed expression (without the \(-C\) term) is bounded above by
\[
M(R) \;:=\; (\bar{y} + R)(1 + \sigma^2) + \bigl(2|r - \delta| + \bar{y} + R\bigr)\, R + 2\kappa\,(\theta + \bar{y} + R)\, R,
\]
which depends only on \(R\) and the model parameters. Since \(-2rw \leq 0\), it follows that
\[
- \left( \partial_t - \mathcal{L} + r \right) w \;\leq\; e^{rt}\bigl[\, M(R) - C \,\bigr].
\]
Now from the fact that \(e^{rt} \geq 1\), we can select \(C\) sufficiently large to ensure that \(- \left( \partial_t - \mathcal{L} + r \right) w \leq -1\) on \(V \cap D_T\), so condition~(i) of Definition~\ref{def:barrier} holds. Therefore \(w\) is a barrier function for \(\partial_t - \mathcal{L} + r\) at \((0, \bar{x}, \bar{y})\), with associated open neighbourhood \(V\).

\noindent\textbf{Case 2:}
If the boundary point is of the form \(( \bar{t}, \bar{z} ) \in \partial_P D_T\) with \(\bar{t} \in (0,T]\), let \((\tilde{t}, \tilde{z})\) be the centre of a sphere that is externally tangent to the cylinder at \((\bar{t}, \bar{z})\), and set \(R_0 := |\bar{z} - \tilde{z}|\). Let
\begin{equation}
\label{eq: V_neigbor_case2}
V \;:=\; B_1(\bar{t}, \bar{z})
\;=\; \bigl\{ (t,z) \in \mathbb{R}^3 \;:\; (t - \bar{t})^2 + |z - \bar{z}|^2 < 1 \bigr\}.
\end{equation}
Define, on $V \cap \overline{D_T}$,
\[
w(t,z) \;=\; C\, e^{r t} \left( \frac{1}{R_0^{\,p}} - \frac{1}{R^{p}} \right),
\qquad R \;:=\; \bigl( |z - \tilde{z}|^2 + (t - \bar{t})^2 \bigr)^{1/2},
\]
where $C, p > 0$ will be determined below. Since $R \geq R_0$ on
$V \cap \overline{D_T}$ with equality only at $(\bar{t}, \bar{z})$, we have
$w(\bar{t}, \bar{z}) = 0$ and $w > 0$ elsewhere on $V \cap \overline{D_T}$, so
condition~(ii) of Definition~\ref{def:barrier} holds. A direct computation yields
\begin{align*}
-\bigl(\partial_t - \mathcal{L} + r\bigr) w
&= \frac{C p\, e^{r t}}{R^{p+4}} \bigg(\! -\frac{p+2}{2} \sum_{i,j=1}^{2}\sigma_{ij}(z_i - \tilde{z}_i)(z_j - \tilde{z}_j) \\
&\qquad + \frac{R^2}{2} \sum_{i=1}^{2} \sigma_{ii}
+ R^2 \sum_{i=1}^{2} b_i (z_i - \tilde{z}_i)
- (t - \bar{t})\, R^2 \bigg) \;-\; 2 r w,
\end{align*}
where \(z_i\) (\(i = 1, 2 \)) denotes the \(i\)-th coordinate of \(z \in D\), \(\sigma_{ij}\) denotes the \((i,j)\)-entry of the matrix \(\sigma\sigma^T\), and \(\sigma, b\) are defined in \eqref{eq:def_of_signma_b}. By uniform ellipticity
of $\mathcal{L}$ on $\overline{D_T}$, there exists $\lambda > 0$ with
\[
  \sum_{i,j=1}^{2} \sigma_{ij}(z_i - \tilde{z}_i)(z_j - \tilde{z}_j)
  \;\geq\; \lambda\,|z - \tilde{z}|^2 \;\geq\; \lambda\,R_0^{\,2}
  \qquad\text{on } V \cap D_T.
\]
Moreover, on $V$ we have $|z - \tilde{z}| \leq R_0 + 1$ and $|t - \bar{t}| \leq 1$,
so the remaining three terms in the bracket are bounded above by a constant
$K = K(R_0,\sigma,b) > 0$ depending only on $R_0$ and the model parameters.
Hence, choosing
\begin{equation}
\label{eq:choice_of_p}
  p \;>\; \frac{2K}{\lambda\, R_0^{\,2}} \;-\; 2,
\end{equation}
the bracket is bounded above by a strictly negative constant on $V \cap D_T$.
Since $-2rw \leq 0$, taking $C > 0$ sufficiently large yields
$-(\partial_t - \mathcal{L} + r) w \leq -1$ on $V \cap D_T$, so condition~(i)
of Definition~\ref{def:barrier} holds. Therefore $w$ is a barrier function for
$\partial_t - \mathcal{L} + r$ at $(\bar{t}, \bar{z})$, with associated open
neighbourhood $V$.
\end{proof}

\subsection{Existence and uniform bound of the solution}\label{subsec:existence}

To apply the penalty method to the problem \eqref{eq:heston_variational_inequality}, we introduce a penalised problem for each \( \epsilon \in (0, 1) \). For each bounded open subset \(D \subset \mathbb{R} \times \mathbb{R}^+\) whose parabolic boundary \(\partial_P D\) belongs to the class \(H^{\beta+2}\) with \(\beta \in (0, 1)\) and does not intersect the line \(y=0\), consider the following equation:
\begin{equation}
\label{eq:penalty_equation}
\begin{cases}
(\partial_t - \mathcal{L} + r) v + p_\epsilon(v - G^\epsilon) = 0, & (t, x, y) \in D_T, \\
v|_{\partial_P D_T} = g^{\epsilon}(x),
\end{cases}
\end{equation}
where \( \{G^\epsilon\}_{\epsilon \in (0,1)} \) is the mollified sequence from Lemma~\ref{lem:mollified_obstacle}, \( g^{\epsilon} \) is defined in Lemma~\ref{lem:g_epsilon_con}, and the penalty function \( p_\epsilon(\xi) \in C^\infty(\mathbb{R}) \) satisfies:
\begin{align}
    &\text{(i)} \ p_\epsilon(\xi) \leq 0, \quad
    \text{(ii)} \ p_\epsilon(\xi) = 0 \text{ for } \xi \geq \epsilon, \quad
    \text{(iii)} \ p_\epsilon(0) = - (\left| r-\delta \right| K + r K), \notag \\
    &\text{(iv)} \ p'_\epsilon(\xi) \geq 0, \quad
    \text{(v)} \ p''_\epsilon(\xi) \leq 0, \quad \text{and} \quad
    \text{(vi)} \ \lim_{\epsilon \downarrow 0} p_\epsilon(\xi) =
    \begin{cases}
        0, & \xi > 0, \\
        -\infty, & \xi < 0.
    \end{cases} \label{eq:43_hxing}
\end{align}
Note that \(p_{\epsilon}\) can be chosen as a smooth mollification of the function \(\min(-2p_{\epsilon}(0)\xi/\epsilon + p_{\epsilon}(0), 0)\).

\begin{proposition}
\label{lem:penalty_var}
Assume the following conditions hold:
\begin{itemize}
    \item[(i)] The function \(h = h(t,x,y, v) \in \operatorname{Lip}\left( \overline{D_T} \times \mathbb{R} \right)\),
    \item[(ii)] The boundary data $\overline{g} \in C^{\infty}(\mathbb{R}) $ and bounded in \(\mathbb{R}\).
\end{itemize}
Then there exists a classical solution $v \in H^{2+\beta,1+\beta/2}(\overline{D_T})$, for every $\beta \in (0,1)$, to the parabolic problem:
\begin{equation}\label{eq:main-pde}
\begin{cases}
-(\partial_t - \mathcal{L} + r)v = h(\cdot, v), & \text{in } D_T, \\[0.5em]
v|_{\partial_P D_T} = \overline{g}(x).
\end{cases}
\end{equation}
\end{proposition}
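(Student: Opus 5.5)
The plan is to obtain the solution by a fixed-point (Leray--Schauder / Schauder) argument built on the classical linear theory of \citet{Ladyzenskaja1968}. This theory applies because $D$ is bounded away from $\{y=0\}$: the coefficients of $\partial_t-\mathcal{L}+r$ are smooth (polynomial in $y$) and bounded on $\overline{D_T}$, and the diffusion matrix $\tfrac{y}{2}\bigl(\begin{smallmatrix}1&\rho\sigma\\ \rho\sigma&\sigma^2\end{smallmatrix}\bigr)$ has smallest eigenvalue at least $c\,\inf_D y>0$. The operator is therefore uniformly parabolic on $\overline{D_T}$, as already used in Proposition~\ref{lem: barrier_existence}.

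\textbf{Step 1: the linear problem.} For $w\in H^{\alpha,\alpha/2}(\overline{D_T})$ with small $\alpha\in(0,1)$, I consider the linear problem $-(\partial_t-\mathcal{L}+r)v=h(\cdot,w)$ in $D_T$ with $v=\overline g$ on $\partial_P D_T$. Since $h$ is Lipschitz, $h(\cdot,w)$ is H\"older in $(t,z)$. The boundary data $\overline g$ are smooth and bounded, and $\partial_P D$ has class $H^{\beta+2}$. The linear Schauder theory (Ladyzhenskaya et al., Ch.~IV, Thm.~5.2) then gives a unique solution in $H^{2+\alpha,1+\alpha/2}(\overline{D_T})$, together with the Schauder estimate. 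One caveat concerns the first-order compatibility condition at the corner $\{0\}\times\partial D$: because $\overline g$ does not depend on $t$, this condition reads $\mathcal{L}\overline g - r\overline g + h(0,\cdot,\overline g)=0$ on $\partial D$. If this fails, I would fall back on the version of the theory that gives $H^{2+\beta,1+\beta/2}$ regularity in the interior and up to the lateral boundary together with continuity on $\overline{D_T}$. This is where the barriers of Proposition~\ref{lem: barrier_existence} enter: via Perron's method they guarantee that the boundary values are attained continuously at every point of $\partial_P D_T$.

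\textbf{Step 2: a priori bound and the fixed point.} Define $\Phi(w)=v$. I set up Leray--Schauder on $C(\overline{D_T})$, or on $H^{\alpha,\alpha/2}$, for the family $v=\lambda\Phi(v)$ with $\lambda\in[0,1]$. Compactness of $\Phi$ follows from the Schauder estimate combined with the compact embedding $H^{2+\alpha,1+\alpha/2}\Subset H^{\alpha,\alpha/2}$. For the uniform bound I use the Lipschitz structure: $|h(t,z,v)|\le |h(t,z,0)|+L|v|$. Set $\tilde v=e^{-\mu t}v$ with $\mu>L$. Applying the maximum principle argument of Lemma~\ref{lem:nonnegative_v} to $M\pm\tilde v$ yields $\sup|v|\le C(\sup|\overline g|,\sup|h(\cdot,0)|,L,T)$, independent of $\lambda$. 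H\"older bounds then follow from the Krylov--Safonov / De Giorgi--Nash estimates of Ladyzhenskaya et al.\ (or directly from the $W^{2,1}_p$ estimate plus embedding), since the right-hand side is bounded. Alternatively, I could avoid the fixed point altogether by citing Ladyzhenskaya et al., Ch.~V, Thm.~6.1 for quasilinear equations $a_{ij}(t,z)v_{ij}+a(t,z,v,Dv)=0$, whose hypotheses (the structure bound $|a|\le C(1+|v|)$ and the boundedness of $v$ established above) are exactly what Lipschitz $h$ provides.

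\textbf{Step 3: bootstrapping.} Once a solution $v\in H^{\alpha,\alpha/2}$ exists, $h(\cdot,v)$ is H\"older continuous with exponent $\alpha$. Reapplying the linear theory gives $v\in H^{2+\alpha,1+\alpha/2}$. Then $v$ is Lipschitz in $z$ and $1/2$-H\"older in $t$, so $h(\cdot,v)\in H^{\beta,\beta/2}$ for every $\beta\in(0,1)$, and one more application of the linear theory gives $v\in H^{2+\beta,1+\beta/2}(\overline{D_T})$ for all $\beta\in(0,1)$.

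\textbf{Main obstacle.} I expect the main difficulty to be the corner compatibility at $t=0$ on $\partial D$, which is needed for regularity up to the full closure $\overline{D_T}$. I would first check whether the intended application ($h=-p_\epsilon(v-G^\epsilon)$, $\overline g=g^\epsilon$) satisfies the condition. If it does not, I would state the regularity as $H^{2+\beta,1+\beta/2}$ on compact subsets of $\overline{D_T}\setminus(\{0\}\times\partial D)$, with $v\in C(\overline{D_T})$ secured by the barriers.
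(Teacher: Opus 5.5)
Your plan is sound, but it reaches existence by a different mechanism from the paper, which uses no fixed-point theorem.

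\textbf{The paper's route.} It runs a monotone iteration. It solves $-(\partial_t-\mathcal{L}+r)v_j-\lambda v_j=h(\cdot,v_{j-1})-\lambda v_{j-1}$ with $v_j=\overline{g}$ on $\partial_P D_T$, where $\lambda$ is the Lipschitz constant of $h$ in $v$. The iteration starts from the explicit supersolution $v_0=e^{ct}(1+\|\overline{g}\|_{L_\infty})-1$, with $|h|\le c(1+|v|)$. The shift by $\lambda$ makes the right-hand side nondecreasing in $v_{j-1}$, so Lemma~\ref{lem:nonnegative_v} alone gives $-v_0\le v_{j+1}\le v_j\le v_0$. The limit then comes from monotone convergence plus a Schauder bound and Arzel\`a--Ascoli in the interior, and boundary continuity comes separately from barriers. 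Your $e^{-\mu t}$ sup bound plays the role of $v_0$.

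\textbf{What each route buys.} The iteration is constructive and avoids compactness of a nonlinear map. Your route needs no monotone structure and makes explicit the chain: sup bound, then H\"older bound, then Schauder bound. The paper needs that chain too. Its claim that $\|v_j\|^{(2+\beta)}$ is bounded uniformly in $j$ ``since $v_{j-1}$ is uniformly bounded'' is too quick, because $\|h(\cdot,v_{j-1})-\lambda v_{j-1}\|^{(\beta)}$ involves the H\"older seminorm of $v_{j-1}$. One closes the induction by interpolation, $\|w\|^{(\beta)}\le\varepsilon\|w\|^{(2+\beta)}+C_\varepsilon\sup|w|$, or by passing through $W^{2,1}_p$.

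\textbf{The corner caveat.} Your concern about $\{0\}\times\partial D$ is justified, and the paper's proof does not address it. It applies the global Schauder theorem of Ladyzhenskaya et al.\ to each $v_j$ on $\overline{D_T}$, which requires the first-order compatibility condition. Mind the sign: the equation reads $\partial_t v=\mathcal{L}v-rv-h(\cdot,v)$ and $\overline{g}$ is time-independent, so the condition is $\mathcal{L}\overline{g}-r\overline{g}-h(0,\cdot,\overline{g})=0$ on $\partial D$, not with $+h$. It fails in general. Already for $h\equiv 0$, $\overline{g}\equiv 1$ and $r\neq 0$, one has $\partial_t v=-r$ at $t=0$ inside $D$ but $\partial_t v=0$ on the lateral boundary, so $v\notin H^{2+\beta,1+\beta/2}(\overline{D_T})$. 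The statement therefore holds only in the form of your fallback: $H^{2+\beta,1+\beta/2}$ away from the corner, together with $v\in C(\overline{D_T})$. This is what the paper's argument actually delivers, since its limit is taken locally in $D_T$ and continuity at $\partial_P D_T$ is proved with barriers. It is also all the later sections use.

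\textbf{How to set up your fixed point.} Do not run Leray--Schauder in $H^{\alpha,\alpha/2}$ with the $H^{2+\alpha}$ theory. The corner condition for $\Phi(w)$ involves $h(0,\cdot,w(0,\cdot))$, so $\Phi$ is not defined on the whole space. The clean version is Schaefer's theorem on $C(\overline{D_T})$, built on two ingredients:
\begin{itemize}
\item the $W^{2,1}_p$ theory, which needs only zero-order compatibility; this is automatic here, since the same $\overline{g}$ gives both the initial and the lateral data;
\item the embedding of $W^{2,1}_p(D_T)$ into a H\"older space for $p>2$, which is compact in $C(\overline{D_T})$.
\end{itemize}
Fixed points are then in $C^{1,2}(D_T)$ by interior Schauder estimates. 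So the maximum-principle argument of Lemma~\ref{lem:nonnegative_v} applies and gives a sup bound uniform in the homotopy parameter.
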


\begin{proof}
We construct a solution via a recursive sequence. Choose a constant \(c > 0\) such that
\begin{equation}
|h(t,x,y,v)| \leq c(1 + |v|) \quad \text{for all } (t,x,y,v) \in \overline{D_T} \times \mathbb{R}.
\end{equation}
We set \(v_0(t,x,y) = e^{ct} (1 + \|\overline{g}\|_{L_\infty}) - 1\) and consider the following problem for each \(j \geq 1\),
\begin{equation}
\label{eq:recursive-seq}
\begin{cases}
-(\partial_t - \mathcal{L} + r)v_j - \lambda v_j = h(\cdot, v_{j-1}) - \lambda v_{j-1} & \text{in } D_T, \\
v_j|_{\partial_P D_T} = \overline{g},
\end{cases}
\end{equation}
where \(\lambda > 0\) denotes the Lipschitz constant of \(h\) with respect to the last variable. By Theorem~5.2 in~\cite{Ladyzenskaja1968} (p.~320), there exists a unique solution \(v_j \in H^{2+\beta,1+\beta/2}(\overline{D_T})\) for every \(\beta \in (0, 1)\). We then prove by induction that the sequence \((v_j)_{j \geq 0}\) is decreasing and uniformly bounded from below.

For the base case, we consider \(w = v_1 - v_0\); then
\begin{align}
-(\partial_t - \mathcal{L} + r)w - \lambda w &= h(\cdot, v_0) - \lambda v_0 + (\partial_t - \mathcal{L} + r)v_0 + \lambda v_0 \notag \\
&= h(\cdot, v_0) + (\partial_t - \mathcal{L} + r)v_0 \notag \\
&= h(\cdot, v_0) + c(v_0 + 1) + r v_0 \geq 0 \quad \text{in } D_T.
\end{align}
By the choice of $v_0$, we have $v_1 = \overline{g} \leq v_0$ on $\partial_P D_T$. Consequently, Lemma~\ref{lem:nonnegative_v} yields $v_1 \leq v_0$ throughout $\overline{D_T}$. Next, we assume \(v_j \leq v_{j-1}\) in \(\overline{D_T}\). Let \(w = v_{j+1} - v_j\). Then
\begin{align}
-(\partial_t - \mathcal{L} + r)w - \lambda w &= h(\cdot, v_j) - h(\cdot, v_{j-1}) - \lambda (v_j - v_{j-1}) \notag \\
&\geq 0 \quad \text{in } D_T,
\end{align}
where the inequality follows from the Lipschitz continuity of \(h\) and the induction hypothesis \(v_j \leq v_{j-1}\). Moreover, \(w = 0\) on \(\partial_P D_T\). Applying Lemma~\ref{lem:nonnegative_v} yields \(v_{j+1} \leq v_j\) in \(\overline{D_T}\). A similar induction (considering \(-v_j\)) shows \(v_j \geq -v_0\) in \(\overline{D_T}\). Thus,
\begin{equation}
-v_0 \leq v_{j+1} \leq v_j \leq v_0 \quad \text{in } \overline{D_T}
\end{equation}
for all \(j \geq 0\). By the monotone convergence theorem, there exists a function \(v\) such that \(v_j \to v\) pointwise in \(\overline{D_T}\) as \(j \to \infty\), and \(-v_0 \leq v \leq v_0\). By Theorem~5.2 in~\cite{Ladyzenskaja1968}, we have
\[
\|v_j\|_{\overline{D_T}}^{(\beta+2)} \leq c \bigl( \|h(\cdot, v_{j-1})\|_{D_T}^{(\beta)} + \|\bar{g}\|_{\partial_P D_T}^{(\beta+2)} \bigr) \leq C,
\]
where \(C\) is independent of \(j\), since \(v_{j-1}\) is uniformly bounded and \(h\) is Lipschitz. By the Arzelà--Ascoli theorem (embedded compactly), there exists a subsequence (relabelled \(v_j\)) converges locally in \(H^{2+\beta,1+\beta/2}(D_T)\). Taking the limit in \eqref{eq:recursive-seq} along this subsequence, we obtain
\begin{equation}
\begin{cases}
-(\partial_t - \mathcal{L} + r)v = h(\cdot, v) & \text{in } D_T, \\
v|_{\partial_P D_T} = \overline{g}.
\end{cases}
\end{equation}
\textbf{Continuity up to boundary}

We now prove that \(v \in C(\overline{D_T})\); the idea of the proof is to utilise the barrier function. Given \(\bar{\xi} = (\bar{t}, \bar{x},\bar{y}) \in \partial_P D_T\) and \(\varepsilon > 0\), we consider an open neighbourhood \(V\) of \(\bar{\xi}\) such that
\[
|\bar{g}(x) - \bar{g}(\bar{x})| \leq \varepsilon, \quad \forall \xi = (t,x,y) \in V \cap \partial_P D_T.
\]
From Proposition \ref{lem: barrier_existence}, there exists a barrier function \(w\) in \(V \cap D_T\). We define
\[
v^\pm(\xi) = \bar{g}(\bar{x}) \pm (\varepsilon + k_\varepsilon w(\xi)),
\]
where \(k_\varepsilon > 0\) is a sufficiently large constant, independent of \(j\), chosen such that
\[
-(\partial_t - \mathcal{L} + r)(v_j - v^+) \geq h(\cdot, v_{j-1}) - \lambda (v_{j-1} - v_j) + k_\varepsilon \geq 0 \quad \text{in } V \cap D_T,
\]
and \(v_j \leq v^+\) on \(\partial(V \cap D_T)\). By Lemma \ref{lem:nonnegative_v}, we obtain \(v_j \leq v^+\) in \(V \cap D_T\). Analogously, \(v_j \geq v^-\) in \(V \cap D_T\). Passing to the limit as \(j \to \infty\), we deduce
\[
\bar{g}(\bar{x}) - \varepsilon - k_\varepsilon w(\xi) \leq v(\xi) \leq \bar{g}(\bar{x}) + \varepsilon + k_\varepsilon w(\xi), \quad \forall \xi \in V \cap D_T.
\]
Thus,
\[
\bar{g}(\bar{x}) - \varepsilon \leq \liminf_{\xi \to \bar{\xi}} v(\xi) \leq \limsup_{\xi \to \bar{\xi}} v(\xi) \leq \bar{g}(\bar{x}) + \varepsilon, \quad \forall\xi \in V \cap D_T.
\]
Since \(\varepsilon > 0\) is arbitrary, this implies \(v(\xi) \to g(\bar{x})\) as \(\xi \to \bar{\xi}\) along the parabolic boundary, establishing the continuity of \(v\) up to \(\partial_P D_T\).
\end{proof}

\begin{remark}
In this paper, rather than solving Equation~\eqref{eq:penalty_equation} over the entire space \(\mathbb{R} \times \mathbb{R}^+\) as in~\cite{haoxing2012regularity}, we first address it within bounded open domains \(D \Subset \mathbb{R} \times \mathbb{R}^+\) excluding the line \(y=0\). This approach is adopted because the operator \(\mathcal{L}\) is uniformly elliptic in such domains but degenerates along \(y=0\).
\end{remark}

\begin{proposition}
\label{lem:bounds_v_epsilon}
\label{lem:bound_epsilon}
For each \(\epsilon \) in \( (0,1) \),
the Equation \eqref{eq:penalty_equation} has a solution $v^{\epsilon} \in H^{2 + \beta, 1 + \frac{\beta}{2}}(\overline{D_T})$ that satisfies:
\begin{enumerate}[label=\alph*)]
    \item \textit{\( v^{\epsilon} \) is the unique bounded classical solution of Equation \eqref{eq:penalty_equation},}
    \item \textit{\( 0 \leq v^{\epsilon}(t,x,y) \leq K+1 \) on \(\overline{D_T}\),}
    \item \textit{\( v^{\epsilon}(t,x,y) \geq G^{\epsilon}(x) \) on \( \overline{D_T} \),}
    \item \textit{\( p_\epsilon \left(v^{\epsilon}(t,x,y) - G^{\epsilon}(x)\right) \) is bounded uniformly in \( \epsilon \) and in \(D\).}
\end{enumerate}
\end{proposition}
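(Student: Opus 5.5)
Existence will come from Proposition~\ref{lem:penalty_var} with $h(t,x,y,v) := p_\epsilon(v - G^\epsilon(x))$ and $\overline{g} := g^\epsilon$. Since $p_\epsilon$ is smooth, nondecreasing and concave, $p'_\epsilon$ is bounded on any half-line $[a,\infty)$, and $p_\epsilon$ vanishes on $[\epsilon,\infty)$. So $p_\epsilon$ is globally Lipschitz with constant $p'_\epsilon$ evaluated far to the left, which suffices on the relevant range. If global Lipschitz fails on $(-\infty,a)$, I will first truncate $p_\epsilon$ below a level that a posteriori bounds make irrelevant. Because $G^\epsilon$ is smooth with bounded derivative, $h$ is Lipschitz on $\overline{D_T}\times\mathbb{R}$. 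The data $g^\epsilon$ is smooth and bounded by Lemma~\ref{lem:g_epsilon_con}. Hence a solution $v^\epsilon \in H^{2+\beta,1+\beta/2}(\overline{D_T})$ exists, and it is continuous up to $\partial_P D_T$.

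For uniqueness (a), take two bounded classical solutions $v_1,v_2$ and set $w = v_1 - v_2$. Then
\[
(\partial_t - \mathcal{L} + r)w + c(t,x,y)\,w = 0, \qquad c = \int_0^1 p'_\epsilon\bigl(v_2 - G^\epsilon + s w\bigr)\,ds \ge 0,
\]
with $w = 0$ on $\partial_P D_T$. Applying Lemma~\ref{lem:nonnegative_v} with $f = r + c \ge 0$ to $\pm w$ gives $w = 0$.

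For the bounds (b), the upper bound uses $\psi := K+1$. It satisfies $(\partial_t - \mathcal{L} + r)\psi + p_\epsilon(\psi - G^\epsilon) \ge r(K+1) + p_\epsilon(\cdot)$. Here $\psi - G^\epsilon \ge 1 > \epsilon$, since $G^\epsilon \le K$, so the penalty term vanishes and $\psi$ is a supersolution. On the boundary, $g^\epsilon \le K+1$. The difference $\psi - v^\epsilon$ again satisfies a linear equation with nonnegative zeroth-order coefficient $r + c$, so Lemma~\ref{lem:nonnegative_v} gives $v^\epsilon \le K+1$. The lower bound $v^\epsilon \ge 0$ follows from (c), since $G^\epsilon \ge 0$.

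For (c), I compare $v^\epsilon$ with $G^\epsilon$ directly. Compute
\[
(\partial_t - \mathcal{L} + r)G^\epsilon = -\tfrac{y}{2}\bigl(\partial_{xx}G^\epsilon - \partial_x G^\epsilon\bigr) - (r-\delta)\partial_x G^\epsilon + rG^\epsilon \le |r-\delta|K + rK,
\]
using Lemma~\ref{lem:mollified_obstacle}(ii), $|\partial_x G^\epsilon| \le K$, and $G^\epsilon \le K$. Set $w = v^\epsilon - G^\epsilon$. Then
\[
(\partial_t - \mathcal{L} + r)w = -p_\epsilon(w) - (\partial_t - \mathcal{L} + r)G^\epsilon \ge -p_\epsilon(w) + p_\epsilon(0),
\]
by normalization (iii) of $p_\epsilon$. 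Writing $p_\epsilon(0) - p_\epsilon(w) = -c\,w$ with $c \ge 0$ (a difference quotient, since $p'_\epsilon \ge 0$) gives $(\partial_t - \mathcal{L} + r + c)w \ge 0$. On $\partial_P D_T$ we have $w = g^\epsilon - G^\epsilon \ge 0$ by Lemma~\ref{lem:g_epsilon_con}(iv). Lemma~\ref{lem:nonnegative_v} then yields $w \ge 0$.

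Part (d) is immediate. Since $v^\epsilon - G^\epsilon \ge 0$, monotonicity of $p_\epsilon$ gives $p_\epsilon(0) \le p_\epsilon(v^\epsilon - G^\epsilon) \le 0$. The bound $|p_\epsilon(0)| = |r-\delta|K + rK$ is independent of $\epsilon$ and $D$.

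The main obstacle is the supersolution computation for $G^\epsilon$ in (c). It is exactly where the structural condition $\partial_x G^\epsilon \le \partial_{xx}G^\epsilon$ and the choice of $p_\epsilon(0)$ must combine to absorb the degenerate $y$-dependent term, which is unbounded in $y$. A secondary technical point is verifying the Lipschitz hypothesis of Proposition~\ref{lem:penalty_var}, possibly via the truncation described above.
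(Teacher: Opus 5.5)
Your proposal is correct and follows the paper's proof essentially step for step: existence comes from Proposition~\ref{lem:penalty_var}, and uniqueness and the bounds come from Lemma~\ref{lem:nonnegative_v} applied to linearised equations whose zeroth-order coefficient is $r$ plus a nonnegative difference quotient of $p_\epsilon$; part (c) uses the supersolution estimate $(\partial_t-\mathcal{L}+r)G^\epsilon\le -p_\epsilon(0)$, and part (d) uses the monotonicity of $p_\epsilon$. The deviations are minor: you get $v^\epsilon\ge 0$ from (c) and $G^\epsilon\ge 0$ instead of a separate maximum-principle argument, and you explicitly address the global Lipschitz hypothesis on $p_\epsilon$, which the paper leaves implicit; your truncation works because (c) makes the truncated region irrelevant, and the paper's suggested mollified piecewise-linear $p_\epsilon$ is globally Lipschitz anyway.
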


\begin{proof}
We establish each property of \( v^\epsilon \) systematically, using the structure of the penalised equation, the maximum principle, and derivative estimates. The proof proceeds by addressing each property in turn, labelled as parts a) through d), to establish the desired results for the solution \( v^{\epsilon} \).
\begin{enumerate}[label=\alph*)]
\item \textbf{Uniqueness of \(v^{\epsilon}\) as a bounded classical solution}. We replace $h(\cdot, v)$, $\overline{g}$ in Proposition \ref{lem:penalty_var} by \(p_{\epsilon}(v - G^{\epsilon})\), \(g^{\epsilon}\) respectively
to infer a solution $v^{\epsilon} \in H^{2+\beta,1+\frac{\beta}{2}}(\overline{D_T})$ in Equation \eqref{eq:penalty_equation}. To demonstrate uniqueness, suppose \( v_1 \) and \( v_2 \) are two distinct bounded classical solutions to Equation \eqref{eq:penalty_equation}. Then their difference \( v_1 - v_2 \) satisfies the following equation:
\[
\begin{cases}
(\partial_t - \mathcal{L} + r)(v_1 - v_2) + p_\epsilon(v_1 - G^\epsilon) - p_\epsilon(v_2 - G^\epsilon) = 0, & (t,x,y) \in D_T, \\
(v_1 - v_2)|_{\partial_P D_T} = 0.
\end{cases}
\]
By the mean value theorem, we can express the penalty term difference as:
\[
p_\epsilon(v_1 - G^\epsilon) - p_\epsilon(v_2 - G^\epsilon) = p_\epsilon'(\xi)(v_1 - v_2),
\]
where \( \xi \in \mathbb{R} \) and \( p_\epsilon'(\xi) \geq 0 \). Applying Lemma~\ref{lem:nonnegative_v} with \( f = r + p_\epsilon'(\xi) \) (note that Lemma~\ref{lem:nonnegative_v} remains applicable here, even though $f$ depends on $v_1 - v_2$), we conclude that \( v_1 \geq v_2 \) on \( \overline{D_T} \). Repeating the argument for \( v_2 - v_1 \), we obtain the reverse inequality, thus proving \( v_1 = v_2 \).

\item \textbf{Uniform bound on \(v^\epsilon\)}.
To establish uniform bounds on \(v^\epsilon\), observe that Equation~\eqref{eq:penalty_equation} implies
\[
(\partial_t - \mathcal{L} + r)v^\epsilon = -p_\epsilon(v^\epsilon - G^\epsilon) \geq 0 \quad \text{in } D_T,
\]
together with the boundary condition \(v^\epsilon|_{\partial_P D_T} = g^\epsilon \geq 0\).
By Lemma~\ref{lem:nonnegative_v}, it follows that \(v^\epsilon(t,x,y) \geq 0\) on \(\overline{D_T}\). Next, consider the auxiliary function \(w = K + 1 - v^\epsilon\). Then \(w\) satisfies
\[
(\partial_t - \mathcal{L} + r)w = r(K + 1) + p_\epsilon(v^\epsilon - G^\epsilon) \quad \text{in } D_T.
\]
Since \(G^\epsilon(x) \leq K\), we have \(p_\epsilon(K + 1 - G^\epsilon) = 0\). Applying the mean value theorem, there exists some \(\xi \in \mathbb{R}\), such that
\[
(\partial_t - \mathcal{L} + r)w + p_\epsilon(K + 1 - G^\epsilon) - p_\epsilon(v^\epsilon - G^\epsilon) = \bigl[\partial_t - \mathcal{L} + r + p_\epsilon'(\xi)\bigr]w = r(K + 1) \geq 0.
\]
Moreover, by Lemma~\ref{lem:g_epsilon_con}, the boundary condition reads
\[
w|_{\partial_P D_T} = K + 1 - g^\epsilon(x) \geq 0.
\]
An application of Lemma~\ref{lem:nonnegative_v} with the nonnegative term \(f = r + p_\epsilon'(\xi) \geq 0\) yields \(w(t,x,y) \geq 0\) on \(\overline{D_T}\) for every \(\epsilon \in (0,1)\). Therefore,
\[
0 \leq v^\epsilon(t,x,y) \leq K + 1 \quad \text{for all } (t,x,y) \in \overline{D_T} \text{ and } \epsilon \in (0,1).
\]
\item \textbf{Verify \(v^\epsilon(t,x,y) \geq G^\epsilon(x)\)}.
We first derive an upper bound for \((\partial_t - \mathcal{L} + r) G^\epsilon\). Direct computation yields
\begin{align}
(\partial_t - \mathcal{L} + r) G^\epsilon(x)
&=  - \frac{y}{2} \partial_{xx}^2 G^\epsilon
+ \left( \frac{y}{2} - r+\delta \right) \partial_x G^\epsilon + r G^\epsilon \notag \\
&\leq  - (r-\delta) \partial_x G^\epsilon + r G^\epsilon \notag \\
&\leq \left| r-\delta \right| K + r K = -p_\epsilon(0),
\end{align}
with the first and second inequalities following from Lemma~\ref{lem:mollified_obstacle}. Thus,
\[
(\partial_t - \mathcal{L} + r)(v^\epsilon - G^\epsilon)
= -p_\epsilon(v^\epsilon - G^\epsilon) - (\partial_t - \mathcal{L} + r) G^\epsilon
\geq -p_\epsilon(v^\epsilon - G^\epsilon) + p_\epsilon(0).
\]
By the mean value theorem, there exists \(\xi\) such that
\[
(\partial_t - \mathcal{L} + r + p_\epsilon'(\xi))(v^\epsilon - G^\epsilon) \geq 0 \quad \text{in } D_T.
\]
Moreover, Lemma~\ref{lem:g_epsilon_con} guarantees
\[
(v^\epsilon - G^\epsilon)|_{\partial_P D_T} = (g^\epsilon - G^\epsilon) \geq 0.
\]
Applying Lemma~\ref{lem:nonnegative_v} with the nonnegative term \(f = r + p_\epsilon'(\xi) \geq 0\) therefore implies
\[
v^\epsilon(t,x,y) \geq G^\epsilon(x) \quad \text{in } \overline{D_T}.
\]

\item
From part c), we have \( v^{\epsilon} - G^{\epsilon} \geq 0 \). Since \( p_{\epsilon} \) is nondecreasing (\eqref{eq:43_hxing}~(iv)) and \( p_{\epsilon} \leq 0 \) (\eqref{eq:43_hxing}~(i)), it follows that
\[
p_{\epsilon}(0) \leq p_{\epsilon}(v^{\epsilon} - G^{\epsilon}) \leq 0 \quad \text{on } \overline{D_T}.
\]
By \eqref{eq:43_hxing}~(iii), the value \( p_{\epsilon}(0) = -(|r - \delta|K + rK) \) depends only on the model parameters and is independent of both \( \epsilon \) and the domain \( D \). Therefore, \( |p_{\epsilon}(v^{\epsilon} - G^{\epsilon})| \leq |r - \delta|K + rK \) uniformly in \( \epsilon \in (0,1) \) and \( D \).
\end{enumerate}
\end{proof}

\begin{proposition}
\label{lem:cauchy_problem_solution}
Let $v \in W^{2,1}_{p,\mathrm{loc}}(D_T) \cap C(\overline{D_T})$ be a solution to the Cauchy problem
\begin{equation}
\begin{cases}
(\partial_t - \mathcal{L} + r) v = f(t, x, y), & (t, x, y) \in D_T, \\
v|_{\partial_P D_T} = g(x),
\end{cases}
\end{equation}
where $f \in L_{p,\mathrm{loc}}(D_T)$. For any bounded domains $\mathcal{O}^{(1)} \Subset \mathcal{O}^{(2)} \Subset D$, let $d(\mathcal{O}^{(1)}, \partial \mathcal{O}^{(2)})$ denote the infimum of the distances from points in $\mathcal{O}^{(1)}$ to the boundary of $\mathcal{O}^{(2)}$. Suppose $\eta > 0$ satisfies $d(\mathcal{O}^{(1)}, \partial \mathcal{O}^{(2)}) > \eta$. Then, for every $p \ge 3$, there exists a positive constant $C$, depending only on $p$, $\mathcal{O}^{(1)}$, $\mathcal{O}^{(2)}$, and $\eta$ (in particular, $C$ is independent of $D$), such that
\[
\|v\|_{W^{2,1}_p(\mathcal{O}^{(1)}_T)} \leq C \left( \|v\|_{L_p(\mathcal{O}^{(2)}_T)} + \|f\|_{L_p(\mathcal{O}^{(2)}_T)} \right).
\]
\end{proposition}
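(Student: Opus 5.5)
The plan is to reduce the statement to the classical interior $L_p$ estimate for uniformly parabolic operators with continuous coefficients. The first observation is that $\overline{\mathcal{O}^{(2)}} \subset D \subset \mathbb{R}\times\mathbb{R}^+$ is compact. Hence there exist constants $0<y_{\min}\le y_{\max}<\infty$, determined by $\mathcal{O}^{(2)}$ alone, with $y_{\min}\le y\le y_{\max}$ on $\overline{\mathcal{O}^{(2)}}$. On $\mathcal{O}^{(2)}_T$ the diffusion matrix
\[
a(x,y)=\frac{y}{2}\begin{pmatrix}1 & \rho\sigma\\ \rho\sigma & \sigma^2\end{pmatrix}
\]
therefore satisfies
\[
\lambda|\xi|^2\le \langle a\xi,\xi\rangle\le \Lambda|\xi|^2 ,
\]
where $\lambda=\tfrac{y_{\min}}{2}\lambda_{\min}$ and $\Lambda=\tfrac{y_{\max}}{2}\lambda_{\max}$, with $\lambda_{\min}>0$ (since $|\rho|<1$) and $\lambda_{\max}$ the eigenvalues of the constant matrix. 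Moreover, the coefficients of $\partial_t-\mathcal{L}+r$ are affine in $y$. They are thus Lipschitz, and in particular uniformly continuous, on $\overline{\mathcal{O}^{(2)}}$, with bounds and moduli of continuity depending only on $\mathcal{O}^{(2)}$ and the model parameters. None of these quantities involves $D$. This is the point that makes the constant independent of $D$.

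Next I would invoke the standard interior $W^{2,1}_p$ estimate, for instance Ladyzhenskaya--Solonnikov--Ural'tseva (Chapter IV, local estimates near \S 10) or Lieberman, Theorem 7.22/7.35. For a uniformly parabolic operator with continuous leading coefficients and bounded lower-order coefficients on a cylinder, and for $u\in W^{2,1}_{p,\mathrm{loc}}$, it gives
\[
\|u\|_{W^{2,1}_p(Q')}\le C\bigl(\|u\|_{L_p(Q)}+\|(\partial_t-\mathcal{L}+r)u\|_{L_p(Q)}\bigr).
\]
This holds for subcylinders $Q'$ whose spatial distance to the lateral boundary of $Q$ is at least $\eta$. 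The constant depends only on $p$, $\lambda$, $\Lambda$, the coefficient bounds, the continuity modulus of $a$, $\eta$ and the sizes of the cylinders. I would apply it with $Q=\mathcal{O}^{(2)}_T$, $Q'=\mathcal{O}^{(1)}_T$ and $u=v$.

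Because the time interval is $(0,T]$ and the cylinders reach $t=0$, the estimate needed is interior in space but up to the initial time. I would handle this in one of two ways. The first is to use that $v=g$ on $\{0\}\times D$: if one insists on local estimates near $t=0$, subtracting $g$ leaves zero initial data. Here $g$ is smooth (indeed the boundary datum will be $g^\epsilon$), so $(\partial_t-\mathcal{L}+r)g$ is bounded on $\mathcal{O}^{(2)}_T$, and the half-interior estimate with zero initial data applies. The second is simply to quote the version of the local estimate valid up to $t=0$ for zero initial trace. I expect this step to be the main technical obstacle, together with making the covering argument precise.

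For the covering argument, I would cover $\overline{\mathcal{O}^{(1)}}$ by finitely many balls $B_{\eta/2}(z_i)$, whose number depends only on $\mathcal{O}^{(1)}$ and $\eta$, with $B_\eta(z_i)\subset\mathcal{O}^{(2)}$. On each $B_\eta(z_i)\times(0,T]$ I would apply the local estimate with a cutoff function, and then sum the resulting estimates. The constant then depends only on $p$, $\mathcal{O}^{(1)}$, $\mathcal{O}^{(2)}$ and $\eta$.

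Finally, the restriction $p\ge 3$ is not needed for the estimate itself, which holds for every $p\in(1,\infty)$. I would simply note that it is kept because $p>(n+2)/2=2$ in dimension $n=2$, which is the range that later yields continuous spatial gradients via the Sobolev embedding.
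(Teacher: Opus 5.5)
Your strategy is essentially the paper's. On $\overline{\mathcal{O}^{(2)}}$ the operator is uniformly parabolic, with ellipticity constants and coefficient bounds depending only on $\mathcal{O}^{(2)}$; this is what makes $C$ independent of $D$. One then applies Lieberman's local $L_p$ estimate and sums over a finite cover controlled by $\eta$. The paper uses backward cylinders $Q(R_j/2,t_j,z_j)$ with $R_j\ge\gamma\eta$, bounds $D^2v$ and $Dv$, and reads $\partial_t v$ off the equation; your spatial balls with cutoffs are an inessential variant.

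The gap is where you suspected, at $t=0$, and your fix does not give the stated inequality. Passing through $w=v-g$ yields at best
\[
\|v\|_{W^{2,1}_p(\mathcal{O}^{(1)}_T)}\le C\bigl(\|v\|_{L_p(\mathcal{O}^{(2)}_T)}+\|f\|_{L_p(\mathcal{O}^{(2)}_T)}+\|g\|_{W^{2,1}_p(\mathcal{O}^{(2)}_T)}\bigr).
\]
The last term involves $g'$ and $g''$ and is not controlled by $\|v\|_{L_p}+\|f\|_{L_p}$. So your closing claim, that the constant depends only on $p$, $\mathcal{O}^{(1)}$, $\mathcal{O}^{(2)}$ and $\eta$, does not follow. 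Quoting a zero-initial-trace estimate does not help: it applies to $w$, and the same term returns when you go back to $v$. This matters downstream. In Proposition~\ref{lem:subsequence_existence} the estimate is used uniformly in $n$ with data $g_n$, and $g_n''(\ln K)=\tfrac12K^2\sqrt n-\tfrac12K$, so $\|g_n''\|_{L_p}\to\infty$.

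No argument can close this gap, because the inequality as stated is false near $t=0$ for every $p\ge 3$. Take $f=0$, let $\mathcal{O}^{(1)}$ contain a point $(\ln K,y_0)$, and use smooth data $g_m\to G$ uniformly (e.g.\ the $g_n$ above). By the maximum principle the right-hand side stays bounded. Meanwhile $v_m\to v_\infty$ in $C^{2,1}$ on compact subsets of $\{t>0\}$, where $v_\infty$ has the kinked data $G$. Near $(0,\ln K,y_0)$ one has $|\partial_{xx}v_\infty|\approx c\,t^{-1/2}$ on a strip of width about $\sqrt t$ around $x=\ln K$. Fatou then gives $\liminf_m\|\partial_{xx}v_m\|^p_{L_p(\mathcal{O}^{(1)}_T)}\ge c\int_0^{t_0}t^{(1-p)/2}\,dt=\infty$.

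The paper's proof has the same hole, hidden in its covering step. The condition $Q(R_j,t_j,z_j)\subset\mathcal{O}^{(2)}_T$ forces $t_j\ge R_j$, so $Q(R_j/2,t_j,z_j)$ contains only times $t>R_j/2\ge\gamma\eta/2$ and cannot cover $\mathcal{O}^{(1)}_T$ near $t=0$.

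What either argument can actually prove is the estimate on $(s,T]\times\mathcal{O}^{(1)}$, with $C$ depending also on $s>0$. Yours gives this once you add a time cutoff vanishing near $t=0$, which removes any need for $g$. This is all that is needed later, since $W^{2,1}_{p,\mathrm{loc}}(E_T)$ only concerns compact subsets of $E_T=(0,T]\times\mathbb{R}\times\mathbb{R}^+$, and these stay away from $t=0$.

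A minor point: $p>2$ gives continuity of $v$, but continuous spatial gradients require $\beta=2-4/p>1$, i.e.\ $p>4$, as used in Proposition~\ref{lem:smoothfits}.
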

\begin{proof}
In this proof, we denote by \(d(z_0)\) the distance from a point \(z_0\) to the boundary \(\partial \mathcal{O}^{(2)}\). The parabolic cylinder is defined as \(Q(R, t_0, z_0) = \{ (t, z) \in \mathbb{R}^3 : |z - z_0| + |t-t_0| < R, \, t < t_0 \}\). Finally, let \(\lambda\) denote the bound on the coefficients of \(\mathcal{L}\) within \(\mathcal{O}^{(2)}\).

Given that \(d(\mathcal{O}^{(1)}, \partial \mathcal{O}^{(2)}) > \eta > 0\), the domain \(\mathcal{O}^{(1)}\) is compactly contained in \(\mathcal{O}^{(2)}\). We can thus select a finite collection of points \(\{z_j\}_{j=1}^N \subset \mathcal{O}^{(1)}\) and \(\{t_j\}_{j=1}^N \subset \mathbb{R}\) such that \(\mathcal{O}^{(1)}_{T} \subset \bigcup_{j=1}^N Q(R_j/2, t_j, z_j)\) and each \(Q(R_j, t_j, z_j) \subset \mathcal{O}^{(2)}_{T}\), where \(R_j = \gamma_j d(z_j)\) for each \(j\). This finite covering is feasible due to the relative compactness of \(\mathcal{O}^{(1)}\) in \(\mathcal{O}^{(2)}\) (note that \(N\) depends only on \(\mathcal{O}^{(1)}\), \(\mathcal{O}^{(2)}\), and \(\eta\)).

Let \((t_0, z_0)\) be an arbitrary point in \(\mathcal{O}^{(1)}\), and denote by \(\mathcal{L}_0\) the operator \(\mathcal{L}\) with coefficients frozen at \((t_0, z_0)\). Since the operator \((\partial_t - \mathcal{L} + r)\) is uniformly parabolic in \(D_T\), the constant-coefficient operator \((\partial_t - \mathcal{L}_0 + r)\) is also uniformly parabolic on \(\mathbb{R}^2\). Applying Proposition 7.18 and following the proof of Theorem 7.22 in \cite{Lieberman1996}, there exists a constant \(C\), depending only on \(\mathcal{O}^{(1)}\), \(\mathcal{O}^{(2)}\), \(\eta\), and on \(\gamma_j > 0\) chosen sufficiently small (depending on \(\mathcal{O}^{(1)}\), \(\mathcal{O}^{(2)}\), and \(p\)), such that
\[
R_j^2 \|D^2 v\|_{L_p(Q(R_j/2, t_j, z_j))} + R_j \|Dv\|_{L_p(Q(R_j/2, t_j, z_j))} \leq C \bigl( \|(\partial_t - \mathcal{L} + r)v\|_{L_p(Q(R_j, t_j ,z_j))} + \|v\|_{L_p(Q(R_j, t_j ,z_j))} \bigr).
\]
This yields
\begin{align*}
\|D^2 v\|_{L_p(Q(R_j/2, t_j, z_j))} &\leq \frac{C}{R_j^2} \Bigl( \|(\partial_t - \mathcal{L} + r)v\|_{L_p(Q(R_j, t_j, z_j))} + \|v\|_{L_p(Q(R_j, t_j, z_j))} \Bigr) \\
&= \frac{C}{R_j^2} \Bigl( \|f\|_{L_p(Q(R_j, t_j, z_j))} + \|v\|_{L_p(Q(R_j, t_j, z_j))} \Bigr).
\end{align*}
Raising both sides to the \(p\)-th power (for \(p \geq 3\)):
\begin{align*}
\|D^2 v\|_{L_p(\mathcal{O}^{(1)}_T)}^p
&= \int_{\mathcal{O}^{(1)}_T} |D^2 v|^p \, dz \, dt \\
&\leq \sum_{j=1}^N \int_{Q(R_j/2, t_j, z_j)} |D^2 v|^p \, dz \, dt \\
&= \sum_{j=1}^N \|D^2 v\|_{L_p(Q(R_j/2, t_j, z_j))}^p \\
&\leq \sum_{j=1}^N \left( \frac{C}{R_j^2} \bigl( \|f\|_{L_p(Q(R_j, t_j, z_j))} + \|v\|_{L_p(Q(R_j, t_j, z_j))} \bigr) \right)^p \\
&\leq 2^{p-1} \sum_{j=1}^N \left( \frac{C}{R_j^2} \right)^p
\Bigl( \|f\|_{L_p(Q(R_j, t_j, z_j))}^p + \|v\|_{L_p(Q(R_j, t_j, z_j))}^p \Bigr),
\end{align*}
where the last inequality follows from \((a + b)^p \leq 2^{p-1}(a^p + b^p)\) for \(a, b \geq 0\). Let \(\gamma = \min_{j=1}^N \gamma_j\). Since \(R_j = \gamma_j d(z_j) \geq \gamma \eta\), it holds that \(\frac{1}{R_j^2} \leq \frac{1}{(\gamma \eta)^2}\) for all \(j\), so
\[
\left( \frac{C}{R_j^2} \right)^p \leq \left( \frac{C}{(\gamma \eta)^2} \right)^p.
\]
Thus,
\[
\|D^2 v\|_{L_p(\mathcal{O}^{(1)}_T)}^p
\leq 2^{p-1} \left( \frac{C}{(\gamma \eta)^2} \right)^p
\left( \sum_{j=1}^N \|f\|_{L_p(Q(R_j, t_j, z_j))}^p + \sum_{j=1}^N \|v\|_{L_p(Q(R_j, t_j, z_j))}^p \right).
\]
Next, we have
\[
\sum_{j=1}^N \|f\|_{L_p(Q(R_j, t_j, z_j))}^p \leq N \|f\|_{L_p(\mathcal{O}^{(2)}_T)}^p,
\]
and similarly for the sum involving \(\|v\|_{L_p(Q(R_j, t_j, z_j))}^p\). Substituting these bounds gives
\[
\|D^2 v\|_{L_p(\mathcal{O}^{(1)}_T)}^p \leq 2^{p-1} \left( \frac{C}{(\gamma \eta)^2} \right)^p N
\left( \|f\|_{L_p(\mathcal{O}^{(2)}_T)}^p + \|v\|_{L_p(\mathcal{O}^{(2)}_T)}^p \right),
\]
which implies
\[
\|D^2 v\|_{L_p(\mathcal{O}^{(1)}_T)} \leq 2^{(p-1)/p} \frac{C}{(\gamma \eta)^2} N^{1/p}
\left( \|f\|_{L_p(\mathcal{O}^{(2)}_T)} + \|v\|_{L_p(\mathcal{O}^{(2)}_T)} \right).
\]
Equivalently, we can express this as
\begin{equation}
\label{eq:hessian-estimate}
\|D^2 v\|_{L_p(\mathcal{O}^{(1)}_T)} \leq C \left( \|f\|_{L_p(\mathcal{O}^{(2)}_T)} + \|v\|_{L_p(\mathcal{O}^{(2)}_T)} \right),
\end{equation}
where \(C\) depends only on \(p\), \(\mathcal{O}^{(1)}\), \(\mathcal{O}^{(2)}\), and \(\eta\). Following a similar procedure, we obtain
\begin{equation}
\label{eq:first-derivative-estimate}
\|Dv\|_{L_p(\mathcal{O}^{(1)}_T)} \leq C
\left( \|f\|_{L_p(\mathcal{O}^{(2)}_T)} + \|v\|_{L_p(\mathcal{O}^{(2)}_T)} \right).
\end{equation}

Next, we bound \(\|\partial_t v\|_{L_p(\mathcal{O}^{(1)}_T)}\). From the relation \(f = (\partial_t - \mathcal{L} + r)v\), we have \(\partial_t v = f + (\mathcal{L} - r)v\). Taking the \(L_p\) norm over \(\mathcal{O}^{(1)}_T\) and bounding the coefficients of \(\mathcal{L}\) by \(\lambda\), we find
\[
\|\partial_t v\|_{L_p(\mathcal{O}^{(1)}_T)} \leq
\lambda \|D^2 v\|_{L_p(\mathcal{O}^{(1)}_T)}
+ \lambda \|Dv\|_{L_p(\mathcal{O}^{(1)}_T)}
+ \|r v\|_{L_p(\mathcal{O}^{(1)}_T)}
+ \|f\|_{L_p(\mathcal{O}^{(1)}_T)}.
\]
Since \(\mathcal{O}^{(1)} \Subset \mathcal{O}^{(2)}\), it follows that
\(\|f\|_{L_p(\mathcal{O}^{(1)}_T)} \leq \|f\|_{L_p(\mathcal{O}^{(2)}_T)}\) and \(\|v\|_{L_p(\mathcal{O}^{(1)}_T)} \leq \|v\|_{L_p(\mathcal{O}^{(2)}_T)}\), yielding
\[
\|\partial_t v\|_{L_p(\mathcal{O}^{(1)}_T)} \leq
\lambda \|D^2 v\|_{L_p(\mathcal{O}^{(1)}_T)}
+ \lambda \|Dv\|_{L_p(\mathcal{O}^{(1)}_T)}
+ \|r v\|_{L_p(\mathcal{O}^{(2)}_T)}
+ \|f\|_{L_p(\mathcal{O}^{(2)}_T)}.
\]
Substituting the estimates from \eqref{eq:hessian-estimate} and \eqref{eq:first-derivative-estimate}, we arrive at
\[
\|\partial_t v\|_{L_p(\mathcal{O}^{(1)}_T)} \leq C
\left( \|f\|_{L_p(\mathcal{O}^{(2)}_T)} + \|v\|_{L_p(\mathcal{O}^{(2)}_T)} \right).
\]
Therefore, we conclude that
\[
\|v\|_{W^{2,1}_p(\mathcal{O}^{(1)}_T)} \leq C \bigl( \|f\|_{L_p(\mathcal{O}^{(2)}_T)} + \|v\|_{L_p(\mathcal{O}^{(2)}_T)} \bigr),
\]
where \(C\) depends only on \(p\), \(\mathcal{O}^{(1)}\), \(\mathcal{O}^{(2)}\), and \(\eta\).
\end{proof}

The estimates obtained above hold uniformly over bounded subdomains \(D \subset \mathbb{R} \times \mathbb{R}^+\) bounded away from \(y=0\). We now exploit this uniformity to approximate the full unbounded domain \(E_T\) by an expanding sequence of such subdomains and pass to the limit.

\subsection{Convergence to Viscosity Solution in the Unbounded Domain}\label{subsec:convergence}

\begin{definition}\label{def:strong-solution}
For each \(D_T\), a strong solution of the problem
\begin{equation}\label{eq:strong-solution-problem}
\begin{cases}
\min\{(\partial_t - \mathcal{L} + r)v, v - G\} = 0, & \text{ in } D_T, \\[0.5em]
v|_{\partial_P D_T} = g,
\end{cases}
\end{equation}
is a function \( v \in W^{2,1}_{3,\mathrm{loc}}(D_T) \cap C(\overline{D_T}) \) that satisfies \eqref{eq:strong-solution-problem} in \( D_T \) and attains the boundary data \( g \) on \( \partial_P D_T \).
\end{definition}

\begin{proposition}
\label{lem:strong-solution-existence}
Let \( D \subset \mathbb{R} \times \mathbb{R}^+ \) be a bounded open subset, bounded away from \(\{y = 0\}\), whose boundary belongs to class \(H^{\beta+2}\) for some \( \beta \in (0,1)\). Assume that the function $g$ satisfies the conditions stated in Lemma \ref{lem:g_epsilon_con}.
Then the following results hold:
\begin{itemize}
    \item[(i)] There exists a strong solution $v^*$ to the obstacle problem \eqref{eq:strong-solution-problem}.
\item[(ii)] \(v^*\) is a viscosity solution of problem \eqref{eq:strong-solution-problem}.
\end{itemize}
\end{proposition}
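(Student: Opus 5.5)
The plan is the standard penalty-method argument on the fixed bounded domain $D$, relying on the uniform bounds of Proposition~\ref{lem:bound_epsilon} and the interior estimate of Proposition~\ref{lem:cauchy_problem_solution}. For each $\epsilon\in(0,1)$, let $v^\epsilon\in H^{2+\beta,1+\beta/2}(\overline{D_T})$ be the solution of \eqref{eq:penalty_equation} with boundary data $g^\epsilon$. Set $f^\epsilon:=-p_\epsilon(v^\epsilon-G^\epsilon)$. Then $(\partial_t-\mathcal{L}+r)v^\epsilon=f^\epsilon$ in $D_T$. By Proposition~\ref{lem:bound_epsilon}\,b) and d), $\|v^\epsilon\|_{L_\infty}\le K+1$ and $\|f^\epsilon\|_{L_\infty}\le |r-\delta|K+rK$, uniformly in $\epsilon$. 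Fix $p\ge 3$ and domains $\mathcal{O}^{(1)}\Subset\mathcal{O}^{(2)}\Subset D$. Proposition~\ref{lem:cauchy_problem_solution} then gives $\|v^\epsilon\|_{W^{2,1}_p(\mathcal{O}^{(1)}_T)}\le C$ with $C$ independent of $\epsilon$.

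Next comes compactness. Take an exhaustion of $D$ by such $\mathcal{O}^{(1)}$ and use a diagonal argument. By weak compactness in $W^{2,1}_p$ and the compact embedding $W^{2,1}_p\hookrightarrow H^{\alpha,\alpha/2}$, which gives H\"older continuity of $v$ and $Dv$ for $p>3$ (in $2+1$ dimensions), we extract $\epsilon_k\to0$ with $v^{\epsilon_k}\to v^*$ weakly in $W^{2,1}_{p,\mathrm{loc}}$ and locally uniformly together with the spatial gradients. For $p=3$ we use a larger $p$ and note that $W^{2,1}_{p,\mathrm{loc}}\subset W^{2,1}_{3,\mathrm{loc}}$ on bounded sets.

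\emph{Continuity up to $\partial_P D_T$.} This is the step I expect to be the main obstacle, since the interior estimates say nothing about the boundary. I would rerun the barrier argument from the proof of Proposition~\ref{lem:penalty_var}, using the barriers of Proposition~\ref{lem: barrier_existence}. Because $|f^\epsilon|$ is bounded uniformly and $g^\epsilon$ is uniformly Lipschitz (Lemma~\ref{lem:g_epsilon_con}(ii)), the comparison functions $\bar g(\bar x)\pm(\varepsilon+k_\varepsilon w)$ can be chosen with $k_\varepsilon$ independent of $\epsilon$. This yields a modulus of continuity at $\partial_P D_T$ that is uniform in $\epsilon$. Combined with $g^\epsilon\to g$ uniformly (mollification of a Lipschitz function), this gives equicontinuity of $\{v^\epsilon\}$ on $\overline{D_T}$, and Arzel\`a--Ascoli upgrades the convergence to uniform convergence on $\overline{D_T}$. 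Hence $v^*\in C(\overline{D_T})$ and $v^*=g$ on $\partial_P D_T$.

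\emph{Identification of the limit.} From $v^\epsilon\ge G^\epsilon$ and $G^\epsilon\to G$ uniformly we get $v^*\ge G$. Since $f^\epsilon\ge0$, the weak limit gives $(\partial_t-\mathcal{L}+r)v^*\ge0$ a.e. On the open set $\{v^*>G\}$, local uniform convergence gives $v^{\epsilon}-G^{\epsilon}>\epsilon$ locally for small $\epsilon$, so $f^\epsilon=0$ there by \eqref{eq:43_hxing}(ii), and hence $(\partial_t-\mathcal{L}+r)v^*=0$ a.e.\ there. Therefore $\min\{(\partial_t-\mathcal{L}+r)v^*,v^*-G\}=0$ a.e., which proves (i).

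For (ii), rather than working directly with $W^{2,1}_p$ functions, I would pass to the limit at the level of the penalised equations using the standard stability of viscosity solutions. Each $v^\epsilon$ is a classical, hence viscosity, solution of $(\partial_t-\mathcal{L}+r)v+p_\epsilon(v-G^\epsilon)=0$. Suppose $\varphi$ touches $v^*$ from above at $(t_0,z_0)$, made strict by adding a quartic. Then $v^{\epsilon_k}-\varphi$ has local maxima at points $(t_k,z_k)\to(t_0,z_0)$. If $v^*(t_0,z_0)>G(x_0)$, then the penalty term vanishes at $(t_k,z_k)$ for large $k$, and letting $k\to\infty$ gives $(\partial_t-\mathcal{L}+r)\varphi(t_0,z_0)\le0$. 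Otherwise the minimum in the equation is already $\le0$. The supersolution case is analogous: $p_\epsilon\le0$ gives $(\partial_t-\mathcal{L}+r)\varphi\ge0$ in the limit, and $v^*\ge G$ holds. Uniform convergence on $\overline{D_T}$ and the boundary values complete the verification.
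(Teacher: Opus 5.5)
Your proof is correct. The identification of the limit on $\{v^*>G\}$ and the viscosity-stability argument in (ii) essentially coincide with the paper's. The genuine difference is in the compactness step and the role of the barriers.

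\textbf{The paper's route.} It invokes the global $L_p$ theory (\cite[Theorem~9.1, p.~341]{Ladyzenskaja1968}) to get $\|v^\epsilon\|_{W^{2,1}_p(D_T)}\le C$ on the whole cylinder. From this it obtains a uniform H\"older bound on $D_T$ and uniform convergence directly. It then runs the barrier argument only to check that $v^*\in C(\overline{D_T})$ with $v^*=g$ on $\partial_P D_T$.

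\textbf{Your route.} You use only the interior estimate of Proposition~\ref{lem:cauchy_problem_solution}, with a diagonal argument over an exhaustion, and make the barriers do the real work. Since $|p_\epsilon(v^\epsilon-G^\epsilon)|$, $|v^\epsilon|$ and $\operatorname{Lip}(g^\epsilon)$ are bounded uniformly in $\epsilon$, the constant $k_\alpha$ is $\epsilon$-independent. This gives equicontinuity at every point of $\partial_P D_T$, and hence uniform convergence on the closed cylinder via Arzel\`a--Ascoli.

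\textbf{What each buys.} The paper's route is shorter and gives the stronger conclusion $v^*\in W^{2,1}_p(D_T)$ up to the boundary, which makes its barrier step essentially redundant. However, it needs uniform-in-$\epsilon$ control of $g^\epsilon$ in the boundary trace spaces required by that theorem. The uniform Lipschitz bound and one-sided inequality of Lemma~\ref{lem:g_epsilon_con} do not by themselves supply this, although it does hold for the smooth $g_n$ used later. Your route needs only $L_\infty$ and Lipschitz information and delivers the boundary values of $v^*$ for free. It also remains valid if the interior estimate is applied only on compact subsets of $D_T$ away from $t=0$, because the barrier (Case~1 of Proposition~\ref{lem: barrier_existence}) covers the initial time as well.

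\textbf{One small correction.} In $2+1$ dimensions the parabolic embedding exponent is $2-4/p$, so $Dv$ is H\"older continuous only for $p>4$, not $p>3$. This is harmless, since you only use uniform convergence of $v^\epsilon$ and you take $p$ large anyway.
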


\begin{proof}
\begin{itemize}
 \item[(i)] Fix $\epsilon > 0$ and let $v^{\epsilon}$ denote the solution to the penalised problem \eqref{eq:penalty_equation}. By \cite[Theorem 9.1, p.~341]{Ladyzenskaja1968}, together with the uniform bound on \( p_\epsilon \left(v^{\epsilon} - G^{\epsilon}\right) \) mentioned in Proposition~\ref{lem:bounds_v_epsilon} and Lemma~\ref{lem:g_epsilon_con}, we have:
\begin{equation}
\label{eq:sobolev_bound}
\| v^{\epsilon} \|_{W_p^{2,1}(D_T)} \leq C,
\end{equation}
for a constant $C$ that does not depend on $\epsilon$. We now fix $p \geq 3$. The space $W_p^{2,1}(D_T)$ is reflexive, and so every bounded sequence admits a weakly convergent subsequence (cf.\ \citet[Appendix D.4]{evans1998}). In other words, we can extract a subsequence $(\epsilon_k)_{k \geq 0}$ with $\epsilon_k \to 0$ and a limit $v^* \in W_p^{2,1}(D_T)$ satisfying $v^{\epsilon_k} \rightharpoonup v^*$ weakly in $W_p^{2,1}(D_T)$. Now using the Sobolev embedding theorem (cf.\ \citet[Lemma 3.3, p.~80]{Ladyzenskaja1968}) and together with \eqref{eq:sobolev_bound} gives
\[
\|v^{\epsilon}\|_{D_T}^{(\beta)} \leq C, \quad \beta = 2 - \frac{4}{p},
\]
where $C$ is again independent of $\epsilon$. Because $p \geq 3$ ensures $\beta > 0$, the sequence $(v^{\epsilon_k})$ is uniformly bounded and equicontinuous on $D_T$. Hence we can apply the Arzelà–Ascoli theorem to extract a further subsequence, still written $(\epsilon_k)$ for notational simplicity, such that $v^{\epsilon_k} \to v^*$ uniformly on $D_T$. From the uniform limit of continuous functions, $v^*$ is itself continuous on $D_T$.

Let us show that $v^*$ solves Equation~\eqref{eq:strong-solution-problem} a.e.\ in \(D_T\). Since $p_{\epsilon_k}(v^{\epsilon_k} - G ^{\epsilon_k}) \leq 0$, we have $(\partial_t - \mathcal{L} + r)v^{\epsilon_k} \geq 0$ for each $\epsilon_k$. Then, utilising the dominated convergence theorem, we obtain
\[
\int (\partial_t - \mathcal{L} + r)v^* \phi \, dzdt = \lim_{\epsilon_k \to 0} \int (\partial_t - \mathcal{L} + r)v^{\epsilon_k} \phi \, dzdt \geq 0
\]
for any compactly supported smooth function $\phi$. The previous inequality then yields $(\partial_t - \mathcal{L} + r)v^* \geq 0$ on $D_{T}$ in the distributional sense. On the other hand, Proposition~\ref{lem:bounds_v_epsilon} shows that $v^{\epsilon_k} \geq G^{\epsilon_k}$, hence $v^* \geq G$ after sending $\epsilon_k \to 0$. Therefore, we obtain $\min\{(\partial_t - \mathcal{L} + r)v^*, v^* - G\} \geq 0$ on $D_{T}$ in the distributional sense.

We now prove that $(\partial_t - \mathcal{L} + r)v^* = 0$ when $v^* > G$. Take any $(t,x,y) \in D_T$ such that $v^*(t,x,y) > G(x)$. Since $v^*$ and $G$ are continuous, we can find a small $\delta > 0$ such that $v^*(\tilde{t},\tilde{x},\tilde{y}) > G(\tilde{x}) + 2\delta$ for any $(\tilde{t},\tilde{x},\tilde{y})$ inside a neighbourhood of $(t,x,y)$. Utilising the uniform convergence of $(v^{\epsilon_k})_{k>0}$ and $(G^{\epsilon_k})_{k>0}$, we can find a small enough $\epsilon_k$ such that $v^{\epsilon_k}(\tilde{t},\tilde{x},\tilde{y}) > G^{\epsilon_k}(\tilde{x}) + \delta$ in the aforementioned neighbourhood. For all $\epsilon_k < \delta$, we have $v^{\epsilon_k} - G^{\epsilon_k} > \delta > \epsilon_k$ in this neighbourhood. By \eqref{eq:43_hxing}(ii), $p_{\epsilon_k}(\xi) = 0$ for $\xi \geq \epsilon_k$, so $p_{\epsilon_k}(v^{\epsilon_k} - G^{\epsilon_k}) = 0$ and therefore $(\partial_t - \mathcal{L} + r)v^{\epsilon_k} = 0$ in this neighbourhood. After sending $\epsilon_k \to 0$, we obtain $(\partial_t - \mathcal{L} + r)v^* = 0$ in the distributional sense when $v^* > G$. Also from \eqref{eq:sobolev_bound}, we easily deduce that $v^* \in W^{2,1}_{p,\mathrm{loc}}(D_T)$ for any \(p \geq 3\), hence $ v^*$ also solves Equation~\eqref{eq:strong-solution-problem} a.e.\ in $D_T$.

We finally prove that \(v^* \in C(\overline{D_T})\). The argument is similar to the proof in Proposition \ref{lem:penalty_var}. Consider the subsequence \((\epsilon_j)_{j \in \mathbb{N}}\) along which \(v^{\epsilon_j} \to v^*\) pointwise in \(\overline{D_T}\). Let \(\bar{\xi} = (\bar{t}, \bar{x},\bar{y}) \in \partial_P D_T\) and \(\alpha > 0\) be given. There exists an open neighbourhood \(V\) of \(\bar{\xi}\) such that
\[
|g(x) - g(\bar{x})| \leq \alpha \quad \text{for all } \xi = (t,x,y) \in V \cap \partial_P D_T.
\]
From Proposition~\ref{lem: barrier_existence}, there exists a barrier function \(w\) for the operator \(\partial_t - \mathcal{L} + r\) in \(V \cap D_T\). Define the auxiliary functions
\[
v^\pm(\xi) = g(\bar{x}) \pm (\alpha + k_\alpha w(\xi)),
\]
where \(k_\alpha > 0\) is a sufficiently large constant, independent of \(j\). By Proposition~\ref{lem:bounds_v_epsilon}, the penalty term \(p_{\epsilon}(v^{\epsilon} - G^{\epsilon})\) is uniformly bounded in \(\epsilon\). Therefore we can choose \(k_\alpha\) large enough to ensure
\[
-(\partial_t - \mathcal{L} + r)(v^{\epsilon_j} - v^+) \geq p_{\epsilon_j}(v^{\epsilon_j} - G^{\epsilon_j}) + k_\alpha \geq 0 \quad \text{in } V \cap D_T.
\]
We now show that $v^{\epsilon_j} \leq v^+$ on $\partial(V \cap D_T)$. This boundary decomposes into two parts. On $V \cap \partial_P D_T$, since $g^{\epsilon_j} \to g$ uniformly, for all $j$ sufficiently large we have $|g^{\epsilon_j}(x) - g(x)| \leq \alpha/2$. By shrinking $V$ if necessary, we may also assume $|g(x) - g(\bar{x})| \leq \alpha/2$ on $V \cap \partial_P D_T$. Therefore $v^{\epsilon_j} = g^{\epsilon_j}(x) \leq g(\bar{x}) + \alpha \leq v^+$ on this part. On $\partial V \cap \overline{D_T}$, since these points are bounded away from $\bar{\xi}$, property (ii) in Definition \ref{def:barrier} gives $w(\xi) \geq \delta$ for some $\delta > 0$. Since $v^{\epsilon_j} \leq K + 1$ by Proposition~\ref{lem:bound_epsilon}~(b), choosing $k_\alpha \geq (K + 1 - g(\bar{x}) - \alpha)/\delta$ ensures $v^+(\xi) \geq g(\bar{x}) + \alpha + k_\alpha \delta \geq K + 1 \geq v^{\epsilon_j}$.

Lemma~\ref{lem:nonnegative_v} then implies \(v^{\epsilon_j} \leq v^+\) in \(V \cap D_T\). An analogous construction yields \(v^{\epsilon_j} \geq v^-\) in \(V \cap D_T\). Passing to the limit as \(j \to \infty\), we obtain
\[
g(\bar{x}) - \alpha - k_\alpha w(\xi) \leq v^*(\xi) \leq g(\bar{x}) + \alpha + k_\alpha w(\xi), \quad \xi \in V \cap D_T.
\]
Consequently,
\[
g(\bar{x}) - \alpha \leq \liminf_{\xi \to \bar{\xi}} v^*(\xi) \leq \limsup_{\xi \to \bar{\xi}} v^*(\xi) \leq g(\bar{x}) + \alpha, \quad \xi \in V \cap D_T.
\]
Since \(\alpha > 0\) is arbitrary, we have \(v^*(\xi) \to g(\bar{x})\) when \(\xi \to \bar{\xi}\) with \(\xi \in D_T\). Combined with the uniform continuity of \(v^*\) in the interior, this establishes \(v^* \in C(\overline{D_T})\).

\item[(ii)]

We show that \(v^*\) is a viscosity subsolution of \eqref{eq:strong-solution-problem}. Fix \( (t, x, y) \in D_T \), and assume \( v^*(t, x, y) > G(x) \); otherwise, the condition holds trivially. Let \( \phi \in C^{1,2}([0, T] \times \mathbb{R} \times \mathbb{R}^+) \) be a test function such that \( v^* - \phi \) has a strict maximum at \( (t, x, y) \) in a neighbourhood \( B(t, x, y; \delta) \subset D_T\). For each \(k\), there exists a point \( (t^{(k)}, x^{(k)}, y^{(k)}) \in B(t, x, y; \delta) \) such that:
    \[
    (v^{\epsilon_k} - \phi)(t^{(k)}, x^{(k)}, y^{(k)}) \text{ is a maximum over } B(t, x, y; \delta).
    \]
    Let \( (t^*, x^*, y^*) \in B(t, x, y; \delta) \) be the limit of \( (t^{(k)}, x^{(k)}, y^{(k)}) \) as \( k \to \infty \). For any \( (t', x', y') \in B(t, x, y; \delta) \):
    \[
    (v^{\epsilon_k} - \phi)(t^{(k)}, x^{(k)}, y^{(k)}) \geq (v^{\epsilon_k} - \phi)(t', x', y').
    \]
    Taking the limit as \( k \to \infty \), uniform convergence gives
    \[
    (v^* - \phi)(t^*, x^*, y^*) \geq (v^* - \phi)(t', x', y').
    \]
    Since \( (t, x, y) \) is a strict maximum, \( (t^*, x^*, y^*) = (t, x, y) \), so \( (t^{(k)}, x^{(k)}, y^{(k)}) \to (t, x, y) \). Because \( v^{\epsilon_k} \) is a classical solution of Equation~\eqref{eq:penalty_equation}, it is also a viscosity solution; hence
    \[
    (\partial_t - \mathcal{L} + r) \phi(t^{(k)}, x^{(k)}, y^{(k)}) + p_{\epsilon_{k}}(v^{\epsilon_k} - G^{\epsilon_k})\leq 0.
    \]

Now, since \(v^*(t, x,y) > G(x)\) and \(v^{\epsilon_k}(t^{(k)}, x^{(k)}, y^{(k)}) - G^{\epsilon_k}(x^{(k)})\) converges to \(v^*(t, x,y) - G(x)\), we obtain \(\lim_{\epsilon_k \to 0} p_{\epsilon_k}(v^{\epsilon_k}(t^{(k)}, x^{(k)}, y^{(k)}) - G^{\epsilon_k}(x^{(k)})) = 0\). As a result, \((\partial_t - \mathcal{L} + r)\phi(t, x, y) \leq 0\) by sending \(\epsilon_k \to 0\). This confirms that \(v^*\) is a viscosity subsolution of Equation~\eqref{eq:strong-solution-problem}.

For the supersolution property, let \(\phi \in C^{1,2}([0,T] \times \mathbb{R} \times \mathbb{R}^+)\) be a test function such that \(v^* - \phi\) has a minimum at \((t,x,y) \in D_T\). Since \(v^* \geq G\), the condition \(v^*(t,x,y) - G(x) \geq 0\) holds automatically. It remains to show \((\partial_t - \mathcal{L} + r)\phi(t,x,y) \geq 0\). By the same covering argument as for the subsolution, there exist points \((t^{(k)}, x^{(k)}, y^{(k)}) \to (t,x,y)\) at which \(v^{\epsilon_k} - \phi\) attains a minimum. Since \(v^{\epsilon_k}\) is a classical solution of \eqref{eq:penalty_equation}, we have \((\partial_t - \mathcal{L} + r)\phi(t^{(k)}, x^{(k)}, y^{(k)}) + p_{\epsilon_k}(v^{\epsilon_k} - G^{\epsilon_k}) \geq 0\). Since \(p_{\epsilon_k} \leq 0\), this gives \((\partial_t - \mathcal{L} + r)\phi(t^{(k)}, x^{(k)}, y^{(k)}) \geq 0\). Sending \(k \to \infty\) yields \((\partial_t - \mathcal{L} + r)\phi(t,x,y) \geq 0\), completing the supersolution verification.
\end{itemize}
\end{proof}

\begin{proposition}
\label{lem:subsequence_existence}
For every \(p \ge 3\), there exists a function \(v^* \in W^{2,1}_{p,\mathrm{loc}}(E_T) \cap C([0,T] \times \mathbb{R} \times \mathbb{R}^+)\) that is a viscosity solution of Equation \eqref{eq:heston_variational_inequality}.
\end{proposition}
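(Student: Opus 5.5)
The plan is to exhaust $\mathbb{R}\times\mathbb{R}^+$ by an increasing sequence of bounded smooth domains and pass to the limit using Proposition~\ref{lem:strong-solution-existence} together with the uniform interior estimate of Proposition~\ref{lem:cauchy_problem_solution}.

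\textbf{Approximating problems.} Choose bounded open sets $D^n\subset\mathbb{R}\times\mathbb{R}^+$ with $H^{\beta+2}$ boundary, say smoothed versions of $(-n,n)\times(1/n,n)$. Each $D^n$ is bounded away from $\{y=0\}$, the sets increase, $D^n\Subset D^{n+1}$, and $\bigcup_n D^n=\mathbb{R}\times\mathbb{R}^+$.

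For the boundary data I take the simplest admissible template, $g_n:=G^{\epsilon_0}+c$ for a small constant $c\in(0,1)$, or even $g_n=\min(G,\cdot)$ mollified. This function satisfies the conditions of Lemma~\ref{lem:g_epsilon_con}: it is nonnegative, has derivative bounded by $K$, satisfies $\partial_x g\le\partial^2_{xx}g$ by Lemma~\ref{lem:mollified_obstacle}, and lies between $G$ and $K+1$. Proposition~\ref{lem:strong-solution-existence} then gives a strong solution $v_n\in W^{2,1}_{p,\mathrm{loc}}(D^n_T)\cap C(\overline{D^n_T})$ which is also a viscosity solution on $D^n_T$.

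Two facts are uniform in $n$. From Proposition~\ref{lem:bounds_v_epsilon}(b), passed to the limit, $0\le v_n\le K+1$. From part (d), $f_n:=(\partial_t-\mathcal{L}+r)v_n$ satisfies $0\le f_n\le |r-\delta|K+rK$ a.e.; this follows as the weak limit of $-p_{\epsilon}(v^\epsilon-G^\epsilon)$, whose sup bound is independent of $\epsilon$ and $D$.

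\textbf{Uniform local estimates.} Fix a compact $\mathcal{O}^{(1)}\Subset\mathcal{O}^{(2)}\Subset\mathbb{R}\times\mathbb{R}^+$. For all $n\ge n_0$ we have $\mathcal{O}^{(2)}\Subset D^n$. Proposition~\ref{lem:cauchy_problem_solution} applies with a constant independent of $D^n$, and this independence of the subdomain is exactly what the statement was designed to give. It yields
\[
\|v_n\|_{W^{2,1}_p(\mathcal{O}^{(1)}_T)}\le C\bigl(\|v_n\|_{L_p(\mathcal{O}^{(2)}_T)}+\|f_n\|_{L_p(\mathcal{O}^{(2)}_T)}\bigr)\le C',
\]
uniformly in $n$. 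The parabolic Sobolev embedding (Ladyzhenskaya, Lemma 3.3) then gives uniform H\"older bounds on $\mathcal{O}^{(1)}_T$.

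Taking the exhaustion $\mathcal{O}^{(1)}=D^m$ and applying a diagonal argument, I extract a subsequence with $v_n\rightharpoonup v^*$ weakly in $W^{2,1}_p(D^m_T)$ and $v_n\to v^*$ locally uniformly on $E_T$, for every $m$. This gives $v^*\in W^{2,1}_{p,\mathrm{loc}}(E_T)$ and $0\le v^*\le K+1$.

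\textbf{Viscosity property and continuity at $t=0$.} The viscosity property in the interior passes to the local uniform limit by the stability argument already used in Proposition~\ref{lem:strong-solution-existence}(ii). Near a point $(t,x,y)$ with $t>0$, every $v_n$ with $n$ large is a viscosity solution, and maxima of $v_n-\phi$ converge to the strict maximum of $v^*-\phi$.

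The main obstacle is continuity up to $t=0$, together with the initial condition $v^*(0,\cdot)=G$. The interior $W^{2,1}_p$ estimates are only local in time through the cylinders $Q(R,t_0,z_0)$, so they give nothing at $t=0$. I would handle this by barriers that are uniform in $n$, as in Case~1 of Proposition~\ref{lem: barrier_existence}. For $(0,\bar x,\bar y)$ with $\bar y>0$, use the supersolution
\[
G(\bar x)+\alpha+k_\alpha e^{rt}\bigl[(x-\bar x)^2+(y-\bar y)^2+Ct\bigr]
\]
and the analogous subsolution. Their constants depend only on a neighbourhood of $(\bar x,\bar y)$, on the bound $K+1$, and on the uniform bound for $f_n$, not on $n$. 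The comparison of Lemma~\ref{lem:nonnegative_v} then gives $|v_n-G(\bar x)|\le\alpha+k_\alpha w$ near $(0,\bar x,\bar y)$ uniformly in $n$, using $v_n(0,\cdot)=g_n^{\epsilon}\to G$ on that neighbourhood. This is where the choice of $g_n$ matters: it should equal or approach $G$ at $t=0$, so I would let $c=c_n\to0$.

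Hence $v^*$ is continuous on $[0,T]\times\mathbb{R}\times\mathbb{R}^+$ with $v^*(0,\cdot)=G$. Combined with the interior viscosity property, $v^*$ is a viscosity solution of \eqref{eq:heston_variational_inequality}.
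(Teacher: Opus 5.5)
Your proposal is correct. It shares the paper's skeleton: exhaustion by domains $D^n$ bounded away from $\{y=0\}$, strong solutions $v_n$ from Proposition~\ref{lem:strong-solution-existence}, $n$-uniform interior bounds from Proposition~\ref{lem:cauchy_problem_solution}, and stability of the viscosity inequalities. Where it differs is in how the limit is produced.

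The paper takes $g_n=\tfrac12\bigl((K-e^x)+\sqrt{(K-e^x)^2+1/n}\bigr)$, which decreases to $G$. It argues $v_{n+1}\le v_n$ by comparison on $\{v_{n+1}>v_n\}$, defines $v^*$ as the monotone limit, and reads off $v^*(0,\cdot)=\lim g_n=G$. Monotonicity buys two things: convergence of the whole sequence, and continuity at $t=0$ essentially for free, since a decreasing limit of continuous functions is upper semicontinuous while $v^*\ge G$ (the paper does not spell this out).

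You replace monotonicity by H\"older compactness and a diagonal subsequence. This needs no ordering of the $v_n$; note that the paper's comparison step tacitly requires $v_{n+1}\le g_n$ on the lateral boundary of $D^n$, which is not automatic. Nothing is lost by passing to subsequences: every subsequential limit is a bounded continuous viscosity solution, hence coincides with the time-reversed $P^A$ by Propositions~\ref{lem:viscosity_solution} and~\ref{lem:unique_viscosity_solution2}. The price of dropping monotonicity is that continuity at $t=0$ must be proved separately. Your $n$-uniform barrier at $(0,\bar x,\bar y)$ is exactly the right tool for that. Your bound $0\le f_n\le |r-\delta|K+rK$, obtained as a weak limit, also makes precise what the paper leaves implicit when it invokes Proposition~\ref{lem:cauchy_problem_solution}.

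Two details need tightening.
\begin{enumerate}
\item Lemma~\ref{lem:nonnegative_v} is stated for $C^{1,2}$ functions, so it cannot be applied to $v_n$ directly. Carry out the barrier comparison on $V\cap D^n_T$ for the classical penalised solutions $v^\epsilon$, which satisfy $0\le(\partial_t-\mathcal{L}+r)v^\epsilon\le|r-\delta|K+rK$ and $v^\epsilon\le K+1$. Then let $\epsilon\to0$ and $n\to\infty$. The lower barrier is unnecessary, since $v_n\ge G$ and $G$ is continuous.
\item By Jensen's inequality, $G^{\epsilon_0}<G$ on $\{x<\ln K-\epsilon_0\}$. So condition (iv) of Lemma~\ref{lem:g_epsilon_con} is not automatic for $g_n=G^{\epsilon_0}+c$. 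It holds once $c\ge K(\cosh\epsilon_0-1)$, so when you send $c_n\to0$ you must let $\epsilon_n\to0$ correspondingly. The alternative $\min(G,\cdot)$ is not well defined and can be dropped.
\end{enumerate}
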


\begin{proof}

We provide the proof in two steps. In Step 1, we will construct a function \(v^* \in W^{2,1}_{p,\mathrm{loc}}(E_T)\) using the results from previous propositions; then, in Step 2, we will prove that \(v^*\) is a viscosity solution of Equation \eqref{eq:heston_variational_inequality}.

\textbf{Step 1:} We establish the existence of a function \(v^* \in W^{2,1}_{p,\mathrm{loc}}(E_T)\) by considering a sequence of obstacle problems posed on expanding cylindrical domains \(D_T^n \subset [0,T] \times \mathbb{R} \times \mathbb{R}^+\), defined as the subset of \([0,T] \times \{|x| < n\} \times \{ \frac{1}{n} < y < n\}\).
Each domain \(D_T^n\) is chosen such that its parabolic boundary \(\partial_P D_T^n\) is of class \(H^{\beta+2}\) for some \(\beta \in (0,1)\), with \(D_T^n \subset D_T^{n+1}\) and \(\bigcup_{n \in \mathbb{N}} D_T^n = E_T\). Define the approximate boundary data
\[
g_n(x) = \frac{1}{2} \left( (K - e^{x}) + \sqrt{(K - e^{x})^2 + \frac{1}{n}} \right).
\]
It is straightforward to verify that \(g_n\) satisfies the conditions of Lemma~\ref{lem:g_epsilon_con}, with the constant \(M > 0\) independent of \(n\). Moreover, \(g_n(x) \to G(x)\) pointwise as \(n \to \infty\). By Proposition~\ref{lem:strong-solution-existence}, for each \(n \in \mathbb{N}\) there exists a strong solution \(v_n\) of the obstacle problem
\begin{equation}
\label{eq:vn_variational}
\begin{cases}
\min\bigl\{(\partial_t - \mathcal{L} + r)v_n, \, v_n - G\bigr\} = 0, & \text{in } D_T^n, \\
v_n|_{\partial_P D_T^n} = g_n.
\end{cases}
\end{equation}

We now show that the sequence \((v_n)\) is decreasing. Suppose, for contradiction, that the set \(B = \{z \in D_T^n : v_{n+1}(z) > v_n(z)\}\) is non-empty (where \(v_{n+1}\) is extended by \(g_{n+1}\) outside \(D_T^n\)). In \(B\) we have \(v_{n+1} > v_n \geq G\), so
\[
(\partial_t - \mathcal{L} + r)v_{n+1} = 0, \qquad (\partial_t - \mathcal{L} + r)v_n \geq 0.
\]
Thus,
\[
(\partial_t - \mathcal{L} + r)(v_{n+1} - v_n) \leq 0 \quad \text{in } B,
\]
while \(v_{n+1} - v_n = 0\) on \(\partial_P B \). From Corollary 7.4 in \cite{Lieberman1996}, we obtain \(v_{n+1} - v_n \leq 0\) in \(\overline{B}\), contradicting the definition of \(B\). Hence, \(v_{n+1} \leq v_n\) in \(D_T^n\). Combining this monotonicity with the uniform bounds from Proposition~\ref{lem:bounds_v_epsilon}, we obtain
\[
G(x) \leq v_{n+1}(t,x,y) \leq v_n(t,x,y) \leq K + 1 \quad \text{in } D_T^n.
\]
By the monotone convergence theorem, the pointwise limit
\[
v^*(t,x,y) = \lim_{n \to \infty} v_n(t,x,y)
\]
exists on \(E_T\), and \(G \leq v^* \leq K + 1\). To establish that the limit function \(v^*\) belongs to \(W^{2,1}_{p,\mathrm{loc}}(E_T)\), fix an arbitrary compact set \(\mathcal{O} \Subset \mathbb{R} \times \mathbb{R}^+\). We define
\begin{align*}
y_1 &= \inf \{ y : (x, y) \in \mathcal{O} \}, & y_2 &= \sup \{ y : (x, y) \in \mathcal{O} \}, \\
x_1 &= \inf \{ x : (x, y) \in \mathcal{O} \}, & x_2 &= \sup \{ x : (x, y) \in \mathcal{O} \}.
\end{align*}
The expanded domain \(\mathcal{O}^{\gamma}\) (\(\gamma > 0\)) is given by
\[
\mathcal{O}^{\gamma} = \left( x_1 - \gamma, x_2 + \gamma \right) \times \left( y_1- \gamma, y_2 + \gamma \right).
\]
Now we choose \(\bar{\gamma} > 0\) sufficiently small (note that \(\bar{\gamma}\) depends only on \(\mathcal{O}\)) such that \(\mathcal{O}^{\bar{\gamma}} \Subset \mathbb{R} \times \mathbb{R}^+\). Then we choose \(n_0 \in \mathbb{N}\) such that \(\mathcal{O}^{\bar{\gamma}}_T \Subset D_T^n\) for all \(n \geq n_0\). The $W^{2,1}_p$ estimates from Proposition~\ref{lem:cauchy_problem_solution}, combined with the boundedness established in Proposition~\ref{lem:bounds_v_epsilon}, imply
\[
\|v_n\|_{W^{2,1}_p(\mathcal{O}_T)} \leq C_{\mathcal{O},p, \bar{\gamma}}
\]
for all \(n \geq n_0\), where \(C_{\mathcal{O},p, \bar{\gamma}}\) depends only on \(\mathcal{O}\), \(p\) and \(\bar{\gamma}\) (not on \(n\)). Passing to the limit \(n \to \infty\), we conclude that \(v^* \in W^{2,1}_{p,\mathrm{loc}}(E_T)\).

    \textbf{Step 2:}
    We will prove \( v^* \) is a viscosity solution if it is both a supersolution and a subsolution.

    \textbf{Subsolution}: Fix \( (t, x, y) \in (0, T] \times \mathbb{R} \times \mathbb{R}^+ \), and assume \( v^*(t, x, y) > G(x) \); otherwise, the condition holds trivially. Let \( \phi \in C^{1,2}([0, T] \times \mathbb{R} \times \mathbb{R}^+) \) be a test function such that \( v^* - \phi \) has a strict maximum at \( (t, x, y) \) in a neighbourhood \( B(t, x, y; \delta) \subset (0, T] \times \mathbb{R} \times \mathbb{R}^+ \). Choose \( n \) such that \( B(t, x, y; \delta) \subset D^n_T \). Since \( v_n \to v^* \) uniformly on compact subsets of \(E_T\), there exists a point \( (t^{(n)}, x^{(n)}, y^{(n)}) \in B(t, x, y; \delta) \) such that:
    \[
    (v_n - \phi)(t^{(n)}, x^{(n)}, y^{(n)}) \text{ is a maximum over } B(t, x, y; \delta).
    \]
    Following arguments similar to those employed in the proof of Proposition~\ref{lem:strong-solution-existence}, we deduce that $(t^{(n)}, x^{(n)}, y^{(n)}) \to (t, x, y)$. By Proposition~\ref{lem:strong-solution-existence}, \( v_n \) is also a viscosity solution of \eqref{eq:strong-solution-problem}; hence
    \[
    (\partial_t - \mathcal{L} + r) \phi(t^{(n)}, x^{(n)}, y^{(n)}) \leq 0.
    \]
    Taking the limit as \( n \to \infty \), we obtain
    \[
    (\partial_t - \mathcal{L} + r) \phi(t, x, y) \leq 0.
    \]
    At \( t = 0 \), \( v^*(0, x, y) = \lim_{n \to \infty} g_n(x) = G( x) \), satisfying the initial condition. Therefore, we conclude that \(v^*\) is a viscosity subsolution of Equation \eqref{eq:heston_variational_inequality}.

\textbf{Supersolution}: Let \(\phi \in C^{1,2}([0,T] \times \mathbb{R} \times \mathbb{R}^+)\) be a test function such that \(v^* - \phi\) has a minimum at \((t,x,y) \in (0,T] \times \mathbb{R} \times \mathbb{R}^+\). Since \(v^* \geq G\) (from Proposition~\ref{lem:bounds_v_epsilon}), we have \(v^*(t,x,y) - G(x) \geq 0\). It remains to show \((\partial_t - \mathcal{L} + r)\phi(t,x,y) \geq 0\). Choose \(n\) such that \((t,x,y) \in D^n_T\). By the same argument as for the subsolution, there exist points \((t^{(n)}, x^{(n)}, y^{(n)}) \to (t,x,y)\) at which \(v_n - \phi\) attains a minimum. Since \(v_n\) is a viscosity supersolution of \eqref{eq:strong-solution-problem}, we have \((\partial_t - \mathcal{L} + r)\phi(t^{(n)}, x^{(n)}, y^{(n)}) \geq 0\). Sending \(n \to \infty\) yields \((\partial_t - \mathcal{L} + r)\phi(t,x,y) \geq 0\). At \(t = 0\), \(v^*(0,x,y) = \lim_{n \to \infty} g_n(x) = G(x)\), so the initial condition holds. Thus, \(v^*\) is a viscosity solution of Equation~\eqref{eq:heston_variational_inequality}.
\end{proof}

\begin{remark}
In this proof, a key idea is the selection of the function \( g_n \), which converges to \( G \) and satisfies the conditions of Lemma~\ref{lem:g_epsilon_con}. Since \( \partial_P D^n_T \neq \partial_P D^{n+1}_T \), setting the boundary condition as \( v_n|_{\partial_P D^n_T} = G|_{\partial_P D^n_T} \) for each \( n \) prevents convergence to a consistent limit \( v^* \). The choice of \( g_n \) overcomes this issue. Furthermore, by combining Proposition~\ref{lem:bounds_v_epsilon} and \ref{lem:cauchy_problem_solution}, we uniformly bound \( \| v_n \|_{W_p^{2,1}(\mathcal{O}_T)} \), independent of \( n \), for any compact \( \mathcal{O} \subset \mathbb{R} \times \mathbb{R}^+ \). Consequently, the limit \(v^* \in W^{2,1}_{p,\mathrm{loc}}(E_T)\).
\end{remark}

\subsection{Derivation of the smooth-fit principle and $C^{1,2}$-regularity in continuation region}\label{subsec:smoothfit}

We now have all ingredients in hand and proceed to prove the main results.

\begin{proposition}
\label{lem:smoothfits}
The derivatives \(\partial_x P^A\) and \(\partial_y P^A\) are continuous on \([0,T) \times \mathbb{R} \times \mathbb{R}^+\).
        (\textit{The smooth-fit holds})
\end{proposition}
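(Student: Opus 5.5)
The plan is to identify $P^A$ with the Sobolev-regular function $v^*$ of Proposition~\ref{lem:subsequence_existence} and then read off continuity of the first spatial derivatives from the parabolic Sobolev embedding.

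\textbf{Step 1 (identification).} Let $v^*\in W^{2,1}_{p,\mathrm{loc}}(E_T)\cap C([0,T]\times\mathbb{R}\times\mathbb{R}^+)$ be the viscosity solution of \eqref{eq:heston_variational_inequality} from Proposition~\ref{lem:subsequence_existence}, for a fixed $p>4$ (any $p\ge 3$ is allowed there). Define $u^*(t,x,y):=v^*(T-t,x,y)$. Reversing time maps test functions to test functions and turns $\partial_t$ into $-\partial_t$, so $u^*$ is a viscosity solution of \eqref{eq:heston_pde} with $u^*(T,\cdot)=G$.

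We have $G\le v^*\le K+1$ and $0\le P^A\le K$, so both $u^*$ and $P^A$ are bounded continuous viscosity solutions (Proposition~\ref{lem:viscosity_solution}) with the same terminal data. Proposition~\ref{lem: u_v_sub_super} applied in both directions, or equivalently Proposition~\ref{lem:unique_viscosity_solution2}, then gives $P^A=u^*$ on $[0,T]\times\mathbb{R}\times\mathbb{R}^+$. Hence $P^A(t,\cdot)=v^*(T-t,\cdot)$, and so $P^A\in W^{2,1}_{p,\mathrm{loc}}$ on $[0,T)\times\mathbb{R}\times\mathbb{R}^+$ for every $p\ge3$.

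\textbf{Step 2 (embedding).} Fix $(t_0,x_0,y_0)$ with $t_0<T$ and $y_0>0$. Choose a compact rectangle $\mathcal{O}\Subset\mathbb{R}\times\mathbb{R}^+$ containing $(x_0,y_0)$ in its interior, together with a time interval $[t_0-\eta,t_0+\eta]\cap[0,T)$. By Step 1, $P^A\in W^{2,1}_p$ on the corresponding cylinder.

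The parabolic Sobolev embedding \citet[Lemma 3.3, p.~80]{Ladyzenskaja1968} applies in two space dimensions with $p>n+2=4$. It gives $D_zP^A\in H^{\alpha,\alpha/2}$ on a slightly smaller cylinder, with $\alpha=1-4/p>0$. In particular $\partial_xP^A$ and $\partial_yP^A$ are Hölder continuous near $(t_0,x_0,y_0)$. Since the point is arbitrary, they are continuous on $[0,T)\times\mathbb{R}\times\mathbb{R}^+$, including across the free boundary $\partial\mathcal{C}$. This is the smooth-fit property: $P^A=G$ on $\mathcal{D}$, so the derivatives there are $G'(x)$ and $0$ at interior points, and by continuity these values are matched from $\mathcal{C}$.

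\textbf{Main obstacle.} Near $t=T$ in the original time, which is $t=0$ after reversal, the interior estimates need some care. Since $[0,T)$ excludes $T$, every point we need lies at positive reversed time, so using interior cylinders avoids this issue.

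The real issue is the exponent. The statement of Proposition~\ref{lem:subsequence_existence} says $p\ge3$, but for continuity of the gradient in $2+1$ dimensions one needs $p>4$. I will use that the result holds for every $p\ge3$ and therefore choose $p>4$; the local bound in Proposition~\ref{lem:cauchy_problem_solution} is valid for any such $p$.

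One further point needs checking: the $v^*$ obtained is independent of $p$. This holds because $v^*$ is the monotone limit of the $v_n$, and in any case uniqueness in Step 1 forces it to equal $P^A$.
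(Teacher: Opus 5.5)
Your proposal is correct and follows essentially the same route as the paper: you identify $P^A$ with the time-reversed $v^*$ through uniqueness of bounded continuous viscosity solutions, and then apply the parabolic Sobolev embedding (Ladyzhenskaya et al., Lemma 3.3) with $p>4$ to obtain continuous first-order spatial derivatives. Your extra remarks are slightly more careful than the paper's write-up but do not change the argument: the gradient exponent $1-4/p$ is equivalent to the paper's $\beta=2-4/p>1$, you use cylinders away from reversed time $0$, and you note that each $p$-dependent $v^*$ equals $P^A$ by uniqueness.
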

\begin{proof}
Define \(u^*(t,x,y) = v^*(T - t, x, y)\). Then \(u^*\) satisfies Equation~\eqref{eq:heston_pde} in the viscosity sense. Since both \(P^A\) and \(u^*\) are bounded and continuous on \([0,T] \times \mathbb{R} \times \mathbb{R}^+\), the uniqueness result for viscosity solutions established in Proposition~\ref{lem:unique_viscosity_solution2} yields
\[
u^*(t,x,y) = P^A(t,x,y).
\]
Let \(\mathcal{O} \subset \mathbb{R} \times \mathbb{R}^+\) be an arbitrary compact set. From Proposition~\ref{lem:subsequence_existence}, we have \(v^* \in W^{2,1}_{p,\mathrm{loc}}(E_T)\) for every \(p \geq 3\). Consequently, \(u^* \in W^{2,1}_{p,\mathrm{loc}}([0,T) \times \mathbb{R} \times \mathbb{R}^+)\). Applying the Sobolev embedding theorem (cf.~Lemma~3.3 in~\cite{Ladyzenskaja1968}, p.~80) on the cylinder \(\mathcal{O} \times [0,T-s]\) with \(s > 0\) arbitrary, we obtain
\[
P^A \in H^{\beta,\beta/2}(\mathcal{O} \times [0,T-s]),
\]
where \(\beta = 2 - \frac{4}{p}\) and \(s < T\). Choosing \(p > 4\) yields \(\beta > 1\), so that the Hölder space \(H^{\beta, \beta/2}\) contains continuous first-order spatial derivatives. Thus, \(\partial_x P^A\) and \(\partial_y P^A\) are continuous on \(\mathcal{O} \times [0,T-s]\). Since \(\mathcal{O}\) and \(s > 0\) are arbitrary, the derivatives \(\partial_x P^A\) and \(\partial_y P^A\) are continuous on \([0,T) \times \mathbb{R} \times \mathbb{R}^+\), establishing the smooth-fit principle.
\end{proof}

\begin{proposition}
\label{lem:c12}
The value function of the American put option, \(P^A \), exhibits \( C^{1,2} \) regularity in the continuation region.
\end{proposition}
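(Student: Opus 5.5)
The plan is to show that, inside the continuation region $\mathcal{C}$ (taken relative to $[0,T)\times\mathbb{R}\times\mathbb{R}^+$), $P^A$ solves the linear equation $(-\partial_t-\mathcal{L}+r)u=0$ classically, and then to invoke Schauder theory. By Proposition~\ref{lem:smoothfits} and the remark following Lemma~\ref{lem:time_continous}, $P^A$ and $G$ are continuous, so $\mathcal{C}=\{P^A>G\}$ is relatively open. Fix a point $(t_0,x_0,y_0)\in\mathcal{C}$ with $t_0<T$. Since $y_0>0$, we can choose a small cylinder $Q=(t_0-\delta,t_0+\delta)\times B_\delta(x_0,y_0)$ with $\overline{Q}\subset\mathcal{C}$, $\overline{Q}\cap\{y=0\}=\emptyset$ and $t_0+\delta<T$ (one-sided in time if $t_0=0$). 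On $Q$ we have $y\ge y_0-\delta>0$, so $\mathcal{L}$ is uniformly elliptic there with smooth (polynomial) coefficients.

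\textbf{Step 1: $P^A$ is a strong solution in $Q$.} By the identification $u^*=P^A$ in the proof of Proposition~\ref{lem:smoothfits}, $P^A\in W^{2,1}_{p,\mathrm{loc}}$ for every $p\ge3$, so $(-\partial_t-\mathcal{L}+r)P^A$ is defined a.e. I would show that it vanishes a.e. in $Q$. The cleanest route goes back to the construction in Proposition~\ref{lem:subsequence_existence}. On compact subsets of $E_T$ the approximations $v_n$ converge uniformly, and $v^*>G+2\eta$ on the (time-reversed) $\overline{Q}$. Hence $v_n>G$ on $Q$ for $n$ large, so $(\partial_t-\mathcal{L}+r)v_n=0$ a.e. in $Q$. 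The uniform $W^{2,1}_p(Q)$ bounds give weak convergence, and passing to the weak limit shows that the equation holds a.e. for $v^*$. Alternatively, one can argue directly: a $W^{2,1}_p$ function with $p$ large is twice differentiable a.e. in the parabolic sense, so at such points the viscosity sub- and supersolution inequalities of Proposition~\ref{lem:viscosity_solution} can be tested. Since the obstacle term is strictly positive in $\mathcal{C}$, both inequalities force the equation to vanish.

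\textbf{Step 2: Schauder upgrade.} Consider the linear Cauchy--Dirichlet problem on $Q$ for $\partial_t-\mathcal{L}+r$ in reversed time, with parabolic boundary data $P^A|_{\partial_PQ}$, which is continuous. Take $Q$ to be a smooth cylinder, for example with a disc base. By Theorem~5.2 in~\cite{Ladyzenskaja1968}, combined with the barrier functions of Proposition~\ref{lem: barrier_existence} (approximating the continuous boundary data by smooth data and using the maximum principle, as in the continuity-up-to-the-boundary part of Proposition~\ref{lem:penalty_var}), there is a solution $w\in C(\overline{Q})\cap H^{2+\beta,1+\beta/2}_{\mathrm{loc}}(Q)$. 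The difference $P^A-w$ lies in $W^{2,1}_{p,\mathrm{loc}}(Q)\cap C(\overline{Q})$, solves the homogeneous equation a.e., and vanishes on $\partial_PQ$. The Aleksandrov--Bakelman--Pucci type maximum principle for strong solutions (Corollary~7.4 in~\cite{Lieberman1996}, already used in Proposition~\ref{lem:subsequence_existence}) then gives $P^A=w$. Hence $P^A\in C^{1,2}$ near $(t_0,x_0,y_0)$, and since the point is arbitrary, $P^A\in C^{1,2}(\mathcal{C})$.

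\textbf{Main obstacle.} I expect Step~1 to be the main difficulty. One must make sure that the a.e. equation really holds in $\mathcal{C}$, which requires the time reversal, the uniform convergence of $v_n$ to $v^*$ on compacts, and the identification $v^*=P^A$ to be consistent with one another. Points with $t_0$ near $T$ also need care: the $W^{2,1}_p$ regularity is stated only on $[0,T-s]$, so $\delta$ must be chosen small enough that $Q$ stays inside such a slab. As a fallback for Step~1, I would use interior $W^{2,1}_p$ estimates and weak limits in place of pointwise differentiability.
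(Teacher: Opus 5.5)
Your argument is correct, but it identifies $P^A$ with the classical solution by a different mechanism than the paper.

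\textbf{The paper's route.} It does not use the Sobolev regularity of Section~\ref{sec:results} at all here. On a cylinder $B\times(t_1,t_2)\subset\mathcal{C}$ with $B\Subset\mathbb{R}\times\mathbb{R}^+$, it notes that $P^A>G$ makes the obstacle alternative drop out. The viscosity inequalities of Proposition~\ref{lem:viscosity_solution} then reduce to those of the linear equation $(-\partial_t-\mathcal{L}+r)v=0$, so $P^A$ is a viscosity solution of that equation. It then takes a classical solution $u^*$ of the Dirichlet problem with data $P^A$ (Friedman). Finally it concludes $P^A=u^*$ by the viscosity comparison principle on the bounded, uniformly parabolic cylinder \cite[Theorem~8.2]{Crandall1992}.

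\textbf{Your route.} You first upgrade $P^A$ to a strong solution of the linear equation in $\mathcal{C}$. You then identify it with $w$ via the Aleksandrov--Bakelman--Pucci maximum principle for functions in $W^{2,1}_{3,\mathrm{loc}}(Q)\cap C(\overline{Q})$. For the weak-limit version of Step~1, note that $v_n\ge v^*>G$ on the time-reversed $Q$ already by the monotonicity of $(v_n)$, so uniform convergence is not needed there. Your alternative version of Step~1, testing the viscosity inequalities at points of parabolic twice-differentiability, is essentially the paper's observation applied pointwise a.e. rather than in the viscosity sense.

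\textbf{Trade-offs.} Your route stays within strong-solution theory and reuses the ABP tool already invoked in Proposition~\ref{lem:subsequence_existence}. It needs no comparison theorem for viscosity solutions with Dirichlet data. The cost is that Proposition~\ref{lem:c12} then depends on the whole penalisation construction of Subsections~\ref{subsec:existence}--\ref{subsec:convergence}, including the uniform $W^{2,1}_p$ bounds. It also depends on the uniqueness result that identifies $v^*$ with $P^A$, and on the care near $t=T$ that you correctly flag. The paper's proof needs only the continuity of $P^A$ and Proposition~\ref{lem:viscosity_solution}, so it is independent of that machinery.
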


\begin{proof}
For $B \Subset \mathbb{R} \times \mathbb{R}^+$ and $t_1, t_2 \in [0,T]$ such that $B \times (t_1, t_2) \subset \mathcal{C}$, consider the following boundary value problem:
\begin{equation}
\label{eq:continous_equation}
\begin{cases}
(-\partial_t - \mathcal{L} + r) v = 0, & (t,x,y) \in B \times [t_1, t_2), \\
v(t,x,y) = P^A(t,x,y), & (t,x, y) \in \partial B \times [t_1, t_2] \cup \bar{B} \times \{t_2\}.
\end{cases}
\end{equation}
We verify that $P^A$ is a viscosity solution of $(-\partial_t - \mathcal{L} + r)v = 0$ on $B \times (t_1, t_2)$.
For the subsolution property, let $\varphi$ be a test function such that $P^A - \varphi$ attains a maximum at $(t_0, x_0, y_0) \in B \times (t_1, t_2)$. Since $P^A$ is a viscosity subsolution of \eqref{eq:heston_pde}, we have
\[
\min\bigl\{(-\partial_t - \mathcal{L} + r)\varphi(t_0, x_0, y_0),\, P^A(t_0, x_0, y_0) - G(x_0)\bigr\} \leq 0.
\]
Because $(t_0, x_0, y_0) \in \mathcal{C}$, the second argument satisfies $P^A(t_0, x_0, y_0) - G(x_0) > 0$, so the inequality must come from the first argument: $(-\partial_t - \mathcal{L} + r)\varphi(t_0, x_0, y_0) \leq 0$.

For the supersolution property, let $\varphi$ be a test function such that $P^A - \varphi$ attains a minimum at $(t_0, x_0, y_0) \in B \times (t_1, t_2)$. The supersolution condition for \eqref{eq:heston_pde} gives
\[
\min\bigl\{(-\partial_t - \mathcal{L} + r)\varphi(t_0, x_0, y_0),\, P^A(t_0, x_0, y_0) - G(x_0)\bigr\} \geq 0,
\]
which requires both arguments to be nonnegative; in particular, $(-\partial_t - \mathcal{L} + r)\varphi(t_0, x_0, y_0) \geq 0$.

Combining the two, $P^A$ is a viscosity solution of $(-\partial_t - \mathcal{L} + r)v = 0$ on $B \times (t_1, t_2)$. The boundary condition is satisfied trivially since the prescribed data is $P^A$ itself.

Since the boundary and terminal values of \eqref{eq:continous_equation} are continuous, Theorem~9 in \cite{Friedman1964} provides a classical solution $u^* \in C^{1,2}(B \times (t_1, t_2))$. Every classical solution is in particular a viscosity solution, so both $P^A$ and $u^*$ are viscosity solutions of \eqref{eq:continous_equation} with the same boundary data. Since $B \Subset \mathbb{R} \times \mathbb{R}^+$, the operator $\mathcal{L}$ is uniformly parabolic with Lipschitz continuous coefficients on $\bar{B}$. The comparison principle for viscosity solutions (Theorem~8.2 in \cite{Crandall1992}), applied to the pair $(P^A, u^*)$ and then to $(u^*, P^A)$, yields $P^A = u^*$ on $\bar{B} \times [t_1, t_2]$. Hence $P^A \in C^{1,2}(B \times (t_1, t_2))$. Since $B \times (t_1, t_2)$ is an arbitrary relatively compact subset of $\mathcal{C}$, the statement follows.
\end{proof}

\bibliographystyle{apalike}
\bibliography{references}

\appendix
\renewcommand{\theequation}{\Alph{section}.\arabic{equation}}



\end{document}